\newcommand{\best}[1]{{\bfseries\num{#1}}}
\newtheorem{remark}{Remark}
\newtheorem{theorem}{Theorem}
\newtheorem{proposition}{Proposition}
\newtheorem{assumption}{Assumption}
\title{Breaking the Dimensional Barrier: Dynamic Portfolio Choice with Parameter Uncertainty via Pontryagin Projection}
\author[1]{Jeonggyu Huh}
\author[2]{Hyeng Keun Koo}
\affil[1]{\small Department of Mathematics, Sungkyunkwan University, Suwon, Republic of Korea}
\affil[2]{\small Department of Financial Engineering, Ajou University, Suwon, Republic of Korea}
\begin{document}

\maketitle

\begin{abstract}
We study continuous-time CRRA portfolio choice in diffusion markets with estimated and hence uncertain coefficients. Nature draws a latent parameter $\theta\sim q$ at time $0$ and keeps it fixed; the investor never observes $\theta$ and must commit to a single \emph{$\theta$-blind} policy maximizing an ex--ante objective, treating $q$ as a decision-time input.

We propose a simulation-only two-stage solver. \textbf{Stage~1 (DPO)} performs BPTT-based stochastic gradient ascent through an Euler simulator while sampling $\theta$ only inside the simulator. \textbf{Stage~2 (Pontryagin projection)} aggregates costate blocks across $\theta\sim q$ and enforces the \emph{$q$-aggregated} stationarity condition within the deployable class; the resulting correction can be amortized via interactive distillation. We refer to the full Stage~1+Stage~2 pipeline as \textbf{PG--DPO}.

We prove a uniform conditional BPTT--PMP correspondence and a residual-based policy-gap bound with explicit discretization and Monte Carlo error terms. Experiments on high-dimensional Gaussian drift-uncertainty and factor-driven benchmarks show that projection stabilizes learning and accurately recovers analytic decision-time references, while a model-free PPO baseline remains far from the targets.
\end{abstract}

\vspace{0.5cm}
\noindent \textbf{Keywords:} Portfolio choice; Parameter uncertainty; Deep learning; Pontryagin principle; Stochastic control

\section{Introduction}
\label{sec:intro}

Dynamic portfolio choice is typically studied under the assumption that investment opportunities are known, yielding Merton-type closed-form rules and their intertemporal extensions
\citep{markowitz1952portfolio,samuelson1969lifetime,merton1969lifetime,merton1971optimum,karatzas1998methods,campbell2002strategic}.
In practice, expected returns and other inputs are estimated from finite samples, are unstable across specifications, and can be highly uncertain at the time a decision is made
\citep{goyalwelch2008,campbellthompson2008,lettauvanNieuwerburgh2008shifts,rapach2010forecast}.

A classical response is \emph{learning and belief-state control}: investors update posteriors over time and hedge against future revisions in investment opportunities
\citep{kandelstambaugh1996predictability,barberis2000investor,xia2001learning,brandt2005portfolio}.
While conceptually appealing, belief-state formulations lead to high-dimensional (sometimes infinite-dimensional) state spaces and are difficult to implement in realistic multi-asset settings
\citep{lakner1995utility,bensoussan1985optimal,bjork2010partial,pham2009continuous}.

This paper takes a complementary \emph{decision-time} perspective centered on deployability:
\begin{quote}
\emph{Given an externally supplied description of parameter uncertainty at the decision time, what constitutes an optimal \textbf{deployable} portfolio rule when the true parameter is latent and cannot be conditioned on during trading?}
\end{quote}
We treat estimation risk as an exogenous \emph{decision-time uncertainty object}---a fixed law $q(d\theta)$ over market parameters produced by an external pipeline (e.g., bootstrap, subsampling, or asymptotic approximations)
\citep{efron1979bootstrap,efron1994bootstrap,breiman1996bagging,vandervaart1998asymptotic}.
Nature draws $\theta\sim q$ at time $0$ and keeps it fixed over the horizon, but the investor never observes $\theta$ and must commit to a single \emph{$\theta$-blind} policy.

This information structure changes the relevant first-order condition. Full-information policies such as the Merton rule are defined conditional on the realized parameter and are infeasible when $\theta$ is latent
\citep{merton1969lifetime,merton1971optimum,campbell2002strategic}.
More fundamentally, optimality must be enforced \emph{within the deployable class}: admissible perturbations are $\theta$-blind, so stationarity requires the \emph{$q$-expectation} of the Hamiltonian gradient to vanish along the controlled state process. In portfolio problems, this $q$-aggregated Pontryagin condition reduces to a statewise linear system whose solution defines a canonical \emph{deployable projected rule}. Averaging full-information optima does not, in general, produce such a rule; optimality does not commute with averaging
\citep{pastor2000portfolio,avramov2002stock,cremers2002stock,brown1978bayes}.

These observations lead to three contributions. First, we derive the $q$-aggregated Pontryagin first-order condition for the fixed-$q$, $\theta$-blind problem and show how it induces a deployable projection map in portfolio settings
\citep{pontryagin1962the,yong1999stochastic,fleming2006controlled,pham2009continuous}.
Second, we propose a simulation-only two-stage solver that avoids dynamic programming, which is infeasible in the high-dimensional uncertain regimes of interest
\citep{bellman1961adaptive,kushner2001numerical,han2018solving,beck2019machine}.
Stage~1 (\textbf{DPO}) trains a $\theta$-blind policy by stochastic gradient ascent using exact discrete-time gradients computed by backpropagation through time through an Euler simulator, while sampling $\theta$ only inside the simulator. Stage~2 applies a \textbf{Pontryagin projection} by aggregating costate objects across $\theta\sim q$ and projecting the warm policy onto the deployable $q$-aggregated stationarity condition, yielding a structured correction that can be amortized via interactive distillation
\citep{werbos1990backpropagation,williams1989learning,pardouxpeng1990adapted,ma1999forward,hinton2015distilling,jin2020pontryagin}.
We refer to the full Stage~1+Stage~2 pipeline as \textbf{PG--DPO}. Third, we provide a theory that matches this computational design: we prove a conditional BPTT--PMP correspondence that is uniform over compact parameter sets and includes the costate blocks required by the projection step, and we establish a residual-based policy-gap bound with explicit discretization and Monte Carlo error terms. In Gaussian benchmarks, the projected rule recovers familiar shrinkage and horizon-dependent hedging effects connected to classical return-predictability insights
\citep{kandelstambaugh1996predictability,campbell1999consumption,kim1996dynamic,lynch2001portfolio}.

Empirically, in high-dimensional settings the Stage~2 projection stabilizes learning and accurately recovers analytic decision-time references, while a model-free PPO baseline remains far from the targets under the same deployability restriction
\citep{schulman2017proximal,WZZ2020,JZ2022b}.
Overall, the paper reframes portfolio choice under parameter uncertainty as a problem of \emph{commitment under imperfect information} at the decision time, rather than explicit belief-state learning.

Section~\ref{sec:merton-uncertainty} formulates the fixed-$q$ ex--ante problem, contrasts $\theta$-conditional and $q$-aggregated PMP conditions, and presents Gaussian decision-time reference models.
Section~\ref{sec:pgdpo-uncertainty} develops Stage~1 DPO and Stage~2 Pontryagin projection (PG--DPO) for latent $\theta$ and proves the policy-gap bound.
Sections~\ref{sec:hd-geometry} and~\ref{sec:hedging-recovery} report numerical experiments, and the appendix collects proofs and implementation details.


\section{Dynamic Portfolio Choice in Estimated Diffusion Markets with Latent Parameter Uncertainty}
\label{sec:merton-uncertainty}

In this section we formalize a continuous-time CRRA portfolio problem in an estimated diffusion market with latent parameter uncertainty.
Rather than fixing an inference architecture, we take as input an exogenous decision-time law $q(d\theta)$ over $\Theta$ and work with a latent,
time-constant parameter $\theta\sim q$ (see Section~\ref{subsec:model-objective} for the full model and objective).
We adopt a \emph{$\theta$-blind commitment} viewpoint: the investor commits at the decision time to a single Markov feedback rule
$\pi_t=\bar\pi(t,X_t,Y_t)$ and does not update $q$ during trading.
This restriction shifts the relevant optimality notion from infeasible $\theta$-conditional (full-information) conditions to a
\emph{$q$-aggregated} Pontryagin stationarity condition within the $\theta$-blind admissible class (Section~\ref{subsec:pmp-latent}).
Closed-form Gaussian decision-time reference models used for validation are collected in Section~\ref{subsec:gaussian-drift}.

\subsection{Model and ex--ante objective in estimated diffusion markets}
\label{subsec:model-objective}

We interpret time $0$ as the \emph{decision time} and study portfolio choice on a fixed horizon $[0,T]$.
The uncertainty law $q(d\theta)$ is an exogenous input available at the decision time and is treated as fixed over $[0,T]$
(no online belief updates during trading).

\paragraph{Estimated diffusion market family (conditional on a latent parameter).}
Drift and volatility are not assumed known. Instead, we consider a general multi-asset, multi-factor diffusion family indexed by
$\theta\in\Theta\subset\mathbb{R}^k$, where $\theta$ represents the (possibly high-dimensional) parameter produced by an estimation procedure.
Conditional on $\theta$, the $d$ risky assets and an $m$-dimensional factor process $Y_t$ evolve as
\begin{align}
  \frac{dS_t}{S_t}
  &= r\,\mathbf{1}\,dt
      + b\big(Y_t,\theta\big)\,dt
      + \sigma\big(Y_t,\theta\big)\, dW_t,
  \qquad S_0 \in (0,\infty)^d,
  \label{eq:general-model-S}\\[0.3em]
  dY_t
  &= a\big(Y_t,\theta\big)\,dt
      + \beta\big(Y_t,\theta\big)\,dW_t^Y,
  \qquad Y_0 = y\in\mathbb{R}^m,
  \label{eq:general-model-Y}
\end{align}
where $W$ and $W^Y$ are Brownian motions (possibly of different dimension) that may be instantaneously correlated.
For bookkeeping, let $W$ be $d_W$-dimensional, $W^Y$ be $d_Y$-dimensional, and write
\begin{equation}
  d\langle W, W^Y\rangle_t = \rho\,dt,
  \qquad \rho\in\mathbb{R}^{d_W\times d_Y}.
  \label{eq:WWY-corr}
\end{equation}
We write the instantaneous covariance and return--factor cross-covariance as
\begin{equation}
  \Sigma(y,\theta) := \sigma(y,\theta)\sigma(y,\theta)^\top,
  \qquad
  \Sigma_{SY}(y,\theta) := \sigma(y,\theta)\,\rho\,\beta(y,\theta)^\top,
  \label{eq:sigma-sy-def}
\end{equation}
so that $\Sigma(y,\theta)\in\mathbb{R}^{d\times d}$ and $\Sigma_{SY}(y,\theta)\in\mathbb{R}^{d\times m}$.

\paragraph{Uncertainty law $q(d\theta)$ and information structure.}
We summarize the uncertainty of the parameter $\theta$, estimated from finite samples, by a probability distribution
\begin{equation}
  q(d\theta).
  \label{eq:theta-law}
\end{equation}
We treat $q$ as inference-agnostic; it may represent resampling-based distributions (e.g., bootstrap \citep{efron1979bootstrap,efron1994bootstrap}, bagging \citep{breiman1996bagging}) or asymptotic approximations \citep{vandervaart1998asymptotic}.
For our purposes, $q$ is a fixed decision-time input describing plausible market parameters over $[0,T]$.

\begin{remark}[Latent parameter, observability, and admissible controls]
\label{rem:latent-theta}
All processes are defined on a filtered probability space
$(\Omega,\mathcal{F},\mathbb{F}=(\mathcal{F}_t)_{0\le t\le T},\mathbb{P})$
satisfying the usual conditions and carrying $(W,W^Y,\theta)$.

We interpret $\theta$ as a latent (unobserved) market parameter: at the (shifted) decision time $0$,
Nature draws an $\mathcal{F}_0$-measurable random variable $\theta\sim q$ (independent of the Brownian drivers)
and keeps it fixed over $[0,T]$. The investor knows $q$ but does not observe the realized $\theta$,
so deployable portfolio rules cannot take $\theta$ as an input.

We consider the observable market filtration (a subfiltration of $\mathbb{F}$)
\begin{equation}
\mathcal{F}_t^{\mathrm{obs}}
:=\sigma\{(S_s,Y_s):0\le s\le t\},
\qquad 0\le t\le T,
\label{eq:obs-filtration}
\end{equation}
where $\sigma\{\cdot\}$ denotes the $\sigma$-field generated by the observed asset and factor paths (with the usual augmentation).
Admissible portfolio processes are required to be progressively measurable with respect to $(\mathcal{F}_t^{\mathrm{obs}})$.

Throughout the paper we \emph{restrict} attention to the Markov feedback subclass
\begin{equation}
  \mathcal{A}^{\mathrm{fb}}
  :=
  \Big\{
    \pi\in\mathcal{A}^{\mathrm{obs}}:\ \exists\,\bar\pi:[0,T]\times(0,\infty)\times\mathbb{R}^m\to\mathbb{R}^d
    \ \text{s.t.}\ \pi_t=\bar\pi(t,X_t,Y_t)
  \Big\},
  \label{eq:markov-feedback-class}
\end{equation}
where $\mathcal{A}^{\mathrm{obs}}$ is defined below. This restriction reflects a fixed-$q$ commitment model: the investor forms $q$
at the decision time and does not perform online filtering/belief-state updates during $[0,T]$.

Whenever we display $\theta$-conditional (full-information) controls or sensitivity objects, they are computed under frozen-$\theta$
simulations and are used only for offline diagnostics; the deployed policy class and the learned policy remain $\theta$-blind.

\emph{Notation.} Throughout Section~\ref{sec:merton-uncertainty}, $\rho$ denotes the Brownian correlation matrix in
\eqref{eq:WWY-corr} (and \eqref{eq:ou-corr}); $T$ denotes the terminal time/horizon; and $K$ denotes the OU mean-reversion matrix in
\eqref{eq:theta-ou-belief}. We do not reuse $\rho$, $T$, or $K$ for unrelated quantities elsewhere.
\end{remark}

\paragraph{Wealth dynamics and admissibility (given $\theta$).}
For any fixed $\theta$, the corresponding wealth dynamics under a portfolio process $\pi_t(\omega)\in\mathbb{R}^d$
adapted to $\mathcal{F}_t^{\mathrm{obs}}$ are
\begin{equation}
  \frac{dX_t^\pi}{X_t^\pi}
  = \Big(
      r + \pi_t^\top b(Y_t,\theta)
    \Big) dt
    + \pi_t^\top \sigma(Y_t,\theta)\, dW_t,
  \label{eq:wealth-uncertain-theta}
\end{equation}
and we denote by $\mathcal{A}^{\mathrm{obs}}$ the set of progressively measurable portfolio processes adapted to
$(\mathcal{F}_t^{\mathrm{obs}})$ for which \eqref{eq:wealth-uncertain-theta} admits a (strictly) positive wealth solution.
In the Markovian feedback case $\pi\in\mathcal{A}^{\mathrm{fb}}$ one may think of $\pi_t=\bar\pi(t,X_t,Y_t)$.

\paragraph{Ex--ante objective under latent $\theta$ (and simulator viewpoint).}
The investor evaluates policies under an \emph{ex--ante} objective that averages over both diffusion noise for fixed $\theta$
and the parametric uncertainty encoded by \eqref{eq:theta-law}:
\begin{equation}
  J(\pi)
  :=
  \mathbb{E}_{\theta\sim q}
  \bigg[
    \mathbb{E}\big[ U(X_T^\pi) \,\big|\, \theta \big]
  \bigg]
  =
  \int_{\Theta}
    \mathbb{E}\big[ U(X_T^\pi) \,\big|\, \theta \big]
  \,q(d\theta).
  \label{eq:ex-ante-objective}
\end{equation}
The corresponding optimization problem (under our feedback restriction) is
\begin{equation}
  \sup_{\pi\in\mathcal{A}^{\mathrm{fb}}} J(\pi).
  \label{eq:ex-ante-optimization-new}
\end{equation}
Whenever it exists, we denote by
\[
  \pi^{\star,\mathrm{blind}}
  \in \arg\max_{\pi\in\mathcal{A}^{\mathrm{fb}}} J(\pi)
\]
an optimal $\theta$-blind feedback for the fixed-$q$ commitment problem \eqref{eq:ex-ante-optimization-new}.
For each fixed $\theta$, we also write $\pi^{\star,\theta}$ for the (infeasible) $\theta$-conditional \emph{full-information}
optimal control that would be available if $\theta$ were observed.

The $\theta$-blind constraint makes \eqref{eq:ex-ante-optimization-new} strictly harder than solving a separate control problem for each fixed
$\theta$, since the latter yields a $\theta$-indexed full-information family. Ex--ante averaging in \eqref{eq:ex-ante-objective}
can also create gradient cancellation across heterogeneous parameter draws when one attempts to learn a single global policy end-to-end.
While an $\mathcal{F}_t^{\mathrm{obs}}$-adapted policy could, in principle, filter $\theta$ online and solve a belief-state control problem
(see, e.g., \citet{bensoussan1985optimal,pham2017dynamic}), we do \emph{not} pursue that formulation here.

Approximating the outer expectation in \eqref{eq:ex-ante-objective} amounts to sampling $\theta\sim q$ \emph{inside the simulator}
(once per trajectory or once per update), running \eqref{eq:general-model-S}--\eqref{eq:general-model-Y} under that frozen draw,
and updating a $\theta$-blind feedback policy to perform well \emph{on average} over such draws. This is the setting targeted by the
simulation-based DPO and PG--DPO methods developed in Section~\ref{sec:pgdpo-uncertainty}.

\paragraph{Decision-time commitment and (optional) plug-in replanning.}
Throughout, we treat $q(d\theta)$ as an exogenous \emph{decision-time} input and keep it fixed over the trading horizon $[0,T]$:
within a single run, we do not update $q$ using newly observed returns or additional data, and we do not hedge against future changes of the
uncertainty description. If the decision is revisited at a later calendar time $t_0\in(0,T)$, we interpret this as starting a \emph{new}
decision-time problem on the remaining horizon with an externally supplied law $q_{t_0}$.

Such a law may coincide with a static input (Section~\ref{subsubsec:gaussian-drift-static}),
may arise as a \emph{model-implied predictive} (prior) propagation of an initial uncertainty through factor dynamics without conditioning on
new observations (Section~\ref{subsubsec:gaussian-drift-ou-belief}), or may be produced by an external filtering/estimation routine that outputs
a time-indexed uncertainty description (Appendix~\ref{app:kalman-bucy}). Our model and algorithms are conditional on whichever $q$ is supplied at
the decision time and always freeze that input over the ensuing horizon.

\subsection{Pontryagin optimality under latent parameters: full-information vs.\ aggregated conditions}
\label{subsec:pmp-latent}

This subsection records the Hamiltonian structure underlying our projection step and clarifies what ``Pontryagin first-order conditions''
mean when the market parameter $\theta$ is latent and admissible controls are $\theta$-blind.
Throughout, $q(d\theta)$ denotes the \emph{decision-time} input law (Section~\ref{subsec:model-objective}) and is treated as fixed over the trading
horizon $[0,T]$. All expectations in this subsection are taken conditional on the decision-time information (suppressed in notation).

In particular, we distinguish between
(i) \emph{$\theta$-conditional} (full-information) criticality conditions that would apply if $\theta$ were observable (and are therefore infeasible
under latent $\theta$), and (ii) \emph{$q$-aggregated} criticality conditions that characterize stationarity \emph{within the $\theta$-blind admissible
class} for the fixed-$q$ ex--ante objective.
Our discussion follows standard stochastic control/PMP arguments for diffusion control \citep[e.g.][]{yong1999stochastic,fleming2006controlled,pham2009continuous}.
We also comment on the relationship to partial-information (belief-state) PMP, but we do not develop that formulation here.

\paragraph{A $\theta$-conditional (full-information) Hamiltonian and first-order condition (infeasible under latent $\theta$).}
Fix $\theta\in\Theta$ and suppose, for the moment, that $\theta$ were observable to the controller.
In Markovian settings with sufficient smoothness, the $\theta$-conditional value function $V^{\star,\theta}(t,x,y)$ satisfies an HJB equation whose
\emph{control Hamiltonian} (the part depending on $\pi$) can be written explicitly using \eqref{eq:sigma-sy-def}.
For clarity, we define the \emph{control Hamiltonian map} as a function of generic ``sensitivity'' inputs
$(\psi_x,\psi_{xx},\psi_{xy})$:
\begin{equation}
  \mathcal{H}_\theta^{\mathrm{ctrl}}(t,x,y,\pi;\,\psi_x,\psi_{xx},\psi_{xy})
:=
x\,\pi^\top b(y,\theta)\,\psi_x
+ \frac12 x^2\,\pi^\top \Sigma(y,\theta)\,\pi\,\psi_{xx}
+ x\,\pi^\top \Sigma_{SY}(y,\theta)\,\psi_{xy}.
  \label{eq:ctrl-hamiltonian}
\end{equation}
In the full-information HJB interpretation, we evaluate \eqref{eq:ctrl-hamiltonian} at
$(\psi_x,\psi_{xx},\psi_{xy})=(V_x,V_{xx},V_{xy})$.
In the Pontryagin interpretation used below, the same affine map is evaluated at the corresponding $\theta$-conditional adjoint/sensitivity objects
(denoted $(p^\theta,p_x^\theta,p_y^\theta)$ along a trajectory); in smooth Markov regimes these objects coincide with the indicated value derivatives.

The pointwise first-order condition for an interior optimizer is therefore
\begin{equation}
  \partial_\pi \mathcal{H}_\theta^{\mathrm{ctrl}}
  =
  x\,V_x^{\star,\theta}\,b(y,\theta)
  + x^2\,V_{xx}^{\star,\theta}\,\Sigma(y,\theta)\,\pi
  + x\,\Sigma_{SY}(y,\theta)\,V_{xy}^{\star,\theta}
  \;=\;0,
  \label{eq:foc-fixed-theta}
\end{equation}
where $V_x^{\star,\theta},V_{xx}^{\star,\theta},V_{xy}^{\star,\theta}$ are evaluated at $(t,x,y)$.
Assuming $\Sigma(y,\theta)$ is invertible and $V_{xx}^{\star,\theta}<0$, this yields the closed-form $\theta$-conditional full-information portfolio rule
\begin{equation}
  \pi^{\star,\theta}(t,x,y)
  =
  -\,\frac{1}{x\,V_{xx}^{\star,\theta}(t,x,y)}\,
    \Sigma(y,\theta)^{-1}
    \Big(
      V_x^{\star,\theta}(t,x,y)\,b(y,\theta)
      + \Sigma_{SY}(y,\theta)\,V_{xy}^{\star,\theta}(t,x,y)
    \Big).
  \label{eq:pi-star-fixed-theta}
\end{equation}
This $\theta$-indexed rule is \emph{not deployable} under latent parameters; we record it only as a full-information benchmark and diagnostic reference.
In our setting, deployable policies never take the realized $\theta$ as an input; $\theta$ is accessed only through sampling inside the simulator when
approximating $q$-expectations.

\paragraph{$q$-aggregated Pontryagin condition for the $\theta$-blind ex--ante problem (Markov feedback).}
We now return to the actual setting: $\theta$ is latent, policies are $\theta$-blind, and we restrict attention to the Markov feedback class
$\mathcal{A}^{\mathrm{fb}}$ (Remark~\ref{rem:latent-theta}).
Under this restriction we neither perform online filtering of $\theta$ nor replace $q$ by a time-varying posterior distribution.
Accordingly, the relevant Pontryagin condition is not the $\theta$-conditional criticality \eqref{eq:foc-fixed-theta} enforced pointwise in $\theta$,
but rather a necessary condition for optimality \emph{within the $\theta$-blind admissible class} for the fixed-$q$ objective \eqref{eq:ex-ante-objective}.

To see why ex--ante aggregation enters the first-order condition, take any $\theta$-blind admissible perturbation $h=\{h_t\}_{t\in[0,T]}$
that is progressively measurable with respect to the observation filtration $(\mathcal{F}_t^{\mathrm{obs}})$ and square-integrable, and define
$\pi^\varepsilon:=\pi+\varepsilon h$ for small $\varepsilon$.
For each fixed $\theta$, define $J^\theta(\pi):=\mathbb{E}[U(X_T^\pi)\mid \theta]$. The stochastic maximum principle yields the first-variation identity
\begin{equation}
  \left.\frac{d}{d\varepsilon} J^\theta(\pi^\varepsilon)\right|_{\varepsilon=0}
  =
  \mathbb{E}\!\left[\int_0^T
    \partial_\pi \mathcal{H}^{\mathrm{ctrl}}_\theta
      \big(t,X_t,Y_t,\pi_t;\,p_t^\theta,p_{x,t}^\theta,p_{y,t}^\theta\big)^\top h_t\,dt
  \,\Big|\,\theta\right],
  \label{eq:var-identity-fixed-theta}
\end{equation}
where $\big(p_t^\theta,p_{x,t}^\theta,p_{y,t}^\theta\big)$ denotes the $\theta$-conditional Pontryagin sensitivity/adjoint objects associated with the
\emph{fixed} policy $\pi$ in the frozen-$\theta$ market.
(Equivalently, \eqref{eq:var-identity-fixed-theta} evaluates the same affine map $\partial_\pi\mathcal{H}_\theta^{\mathrm{ctrl}}$ defined by
\eqref{eq:ctrl-hamiltonian}, with $(V_x,V_{xx},V_{xy})$ replaced by the corresponding Pontryagin objects along the controlled trajectory.)

\paragraph{Frozen-$\theta$ adjoint BSDE and the blocks used in projection.}
To make the $\theta$-conditional Pontryagin objects in \eqref{eq:var-identity-fixed-theta} concrete, we briefly record the backward (stochastic)
equation they satisfy for a \emph{fixed} policy in a frozen-$\theta$ market.
Fix a $\theta$-blind feedback policy $\pi\in\mathcal{A}^{\mathrm{fb}}$ and condition on a frozen parameter $\theta$.
Let
\(
\mathcal{G}_t^\theta:=\sigma\big(\theta,\{W_s,W_s^Y:0\le s\le t\}\big)
\)
denote the simulator filtration (with the usual augmentation).
Under standard hypotheses ensuring well-posedness of the stochastic maximum principle (summarized in Assumption~\ref{ass:bsde-regime}),
there exists an adapted triple $(p^\theta,z^\theta,\tilde z^\theta)$ solving the $\theta$-conditional adjoint BSDE
\begin{equation}
\label{eq:adjoint-bsde}
  p_t^\theta
  =
  U'\!\big(X_T^{\pi,\theta}\big)
  +\int_t^T f_\theta\!\big(s,X_s^{\pi,\theta},Y_s^\theta,\pi_s;\,p_s^\theta,z_s^\theta,\tilde z_s^\theta\big)\,ds
  -\int_t^T z_s^\theta\cdot dW_s
  -\int_t^T \tilde z_s^\theta\cdot dW_s^Y,
\end{equation}
for a driver $f_\theta$ determined by the frozen-$\theta$ dynamics and the fixed policy $\pi$
\citep[e.g.][]{yong1999stochastic,pham2009continuous}.
In a smooth Markov regime one may identify $p_t^\theta = V_x^\theta(t,X_t^{\pi,\theta},Y_t^\theta)$ along the controlled state and interpret the
blocks entering the portfolio Hamiltonian gradient \eqref{eq:ctrl-hamiltonian} as
\begin{equation}
\label{eq:adjoint-blocks-def}
  p_{x,t}^\theta := V_{xx}^\theta(t,X_t^{\pi,\theta},Y_t^\theta),
  \qquad
  p_{y,t}^\theta := V_{xy}^\theta(t,X_t^{\pi,\theta},Y_t^\theta).
\end{equation}
Equivalently, one may work directly with the second-order stochastic maximum principle, where $(p_{x,t}^\theta,p_{y,t}^\theta)$ coincide with the
relevant blocks of the second adjoint.

\begin{assumption}[Frozen-$\theta$ Markov/BSDE regime (used in Theorems~\ref{thm:q-agg-foc-theta-blind} and~\ref{thm:bptt-pmp-uncertainty})]
\label{ass:bsde-regime}
Fix a compact parameter set $\Theta_0\subset\Theta$ and a working domain $\mathcal{D}\subset[0,T]\times(0,\infty)\times\mathbb{R}^m$.
For every $\theta\in\Theta_0$ and every $\theta$-blind feedback policy $\pi$ considered in the paper, assume:
\begin{enumerate}[label=(A\arabic*),leftmargin=3.2em]
\item \textbf{Forward well-posedness and moments.}
The frozen-$\theta$ forward system \eqref{eq:wealth-uncertain-theta}--\eqref{eq:general-model-Y} admits a unique strong solution under $\pi$, and for some $p\ge2$,
\[
\sup_{\theta\in\Theta_0}\mathbb{E}\Big[\sup_{t\in[0,T]}\big(|X_t^{\pi,\theta}|^p+\|Y_t^\theta\|^p\big)\Big]<\infty.
\]
\item \textbf{Positivity and utility integrability.}
Wealth stays strictly positive a.s., and $U$ is $C^3$ on $(0,\infty)$ with
\[
\sup_{\theta\in\Theta_0}\mathbb{E}\Big[|U'(X_T^{\pi,\theta})|^2+|U''(X_T^{\pi,\theta})|^2\Big]<\infty.
\]
(For CRRA, this entails suitable negative-moment bounds for $X_T^{\pi,\theta}$.)
\item \textbf{Adjoint BSDE and smooth blocks.}
For each $\theta\in\Theta_0$, the adjoint BSDE \eqref{eq:adjoint-bsde} is well posed in $L^2$ and admits a Markov decoupling field $u^\theta$ on $\mathcal{D}$
of class $C^{1,2}$ in $(t,x,y)$ such that
\(p_t^\theta=u^\theta(t,X_t^{\pi,\theta},Y_t^\theta)\),
\(p_{x,t}^\theta=\partial_x u^\theta(t,X_t^{\pi,\theta},Y_t^\theta)\),
and \(p_{y,t}^\theta=\partial_y u^\theta(t,X_t^{\pi,\theta},Y_t^\theta)\),
with square-integrable bounds.
\item \textbf{Nondegenerate driving noise for conditional projections.}
The block covariance matrix
\(
\Gamma:=\begin{psmallmatrix} I_{d_W} & \rho\\ \rho^\top & I_{d_Y}\end{psmallmatrix}
\)
is positive definite (equivalently, the conditional $L^2$ projections used below are taken onto a nondegenerate basis of the driving increments).
\item \textbf{Uniformity on compacts.}
All Lipschitz/growth/nondegeneracy/smoothness constants in (A1)--(A4) can be chosen uniformly for $\theta\in\Theta_0$.
\end{enumerate}
\end{assumption}

\begin{remark}[What we assume vs.\ what we prove]
\label{rem:assume-vs-prove}
Assumption~\ref{ass:bsde-regime} collects the analytic conditions required to \emph{define} the Pontryagin/BSDE objects and to justify the interchanges
of limits and conditional expectations used later (notably in the BPTT--BSDE correspondence of Theorem~\ref{thm:bptt-pmp-uncertainty}).
Establishing such smooth decoupling fields and uniform bounds for general high-dimensional coefficients is nontrivial; we do not attempt to do so here.
Instead, our contributions are conditional: \emph{given} a regime where the frozen-$\theta$ SMP/BSDE objects are well behaved (uniformly on compacts),
we show that BPTT recovers them in the small-step limit and that the resulting projection yields quantitative residual-based policy-gap control.
\end{remark}

Because both $\pi$ and $h$ are $\theta$-blind, taking the outer expectation over $\theta\sim q$ and using Fubini's theorem gives
\begin{equation}
  \left.\frac{d}{d\varepsilon} J(\pi^\varepsilon)\right|_{\varepsilon=0}
  =
  \mathbb{E}\!\left[\int_0^T
    \mathbb{E}_{\theta\sim q}\!\Big[
      \partial_\pi \mathcal{H}^{\mathrm{ctrl}}_\theta
      \big(t,X_t,Y_t,\pi_t;\,p_t^\theta,p_{x,t}^\theta,p_{y,t}^\theta\big)
    \Big]^\top h_t\,dt
  \right].
  \label{eq:var-identity-aggregated}
\end{equation}
Hence, for an interior $\theta$-blind optimum $\pi^{\star,\mathrm{blind}}$, the first variation must vanish for all such perturbations $h$,
which implies the aggregated first-order condition
\begin{equation}
  \mathbb{E}_{\theta\sim q}\!\Big[
    \partial_\pi \mathcal{H}^{\mathrm{ctrl}}_\theta
    \big(t,X_t,Y_t,\pi_t;\,p_t^\theta,p_{x,t}^\theta,p_{y,t}^\theta\big)
  \Big]
  = 0,
  \qquad \text{a.s.\ for a.e.\ }t\in[0,T].
  \label{eq:agg-foc-process}
\end{equation}
Equation \eqref{eq:agg-foc-process} is the correct necessary condition for the ex--ante problem under the $\theta$-blind constraint.
In particular, it is generally distinct from imposing \eqref{eq:foc-fixed-theta} for each $\theta$ separately, because $\theta$-conditional
criticality cannot be enforced by a single deployable $\theta$-blind policy.

To operationalize \eqref{eq:agg-foc-process} in the Markov feedback class, fix a feedback policy $\pi\in\mathcal{A}^{\mathrm{fb}}$ and,
for each frozen $\theta$, consider the corresponding $\theta$-conditional Pontryagin sensitivity objects
$\big(p_t^\theta, p_{x,t}^\theta, p_{y,t}^\theta\big)$ along the induced state process.
In smooth Markov regimes these coincide with spatial derivatives of a decoupling field and, in particular, reduce to $(V_x,V_{xx},V_{xy})$
in the full-information setting; in our algorithms we estimate them pathwise by automatic differentiation (see Section~\ref{sec:pgdpo-uncertainty}).

For the portfolio Hamiltonian \eqref{eq:ctrl-hamiltonian}, $\partial_\pi\mathcal{H}_\theta^{\mathrm{ctrl}}$ is affine in $\pi$. This motivates
defining the $\theta$-conditional ``projection inputs''
\begin{align}
  A_t^\theta(t,x,y)
  &:= x\,p_{x,t}^\theta(t,x,y)\,\Sigma(y,\theta)\in\mathbb{R}^{d\times d},
  \label{eq:Ahat-theta-sec2}\\
  g_t^\theta(t,x,y)
  &:= p_t^\theta(t,x,y)\,b(y,\theta)
      + \Sigma_{SY}(y,\theta)\,p_{y,t}^\theta(t,x,y)\in\mathbb{R}^{d},
  \label{eq:Ghat-theta-sec2}
\end{align}
and their $q$-aggregated counterparts
\begin{equation}
  A_t(t,x,y):=\mathbb{E}_{\theta\sim q}\!\big[A_t^\theta(t,x,y)\big],
  \qquad
  g_t(t,x,y):=\mathbb{E}_{\theta\sim q}\!\big[g_t^\theta(t,x,y)\big].
  \label{eq:AG-aggregated-def-sec2}
\end{equation}
These objects summarize how the latent parameter affects the first-order stationarity condition through the
$\theta$-conditional sensitivities.

\begin{theorem}[$q$-aggregated first-order condition under latent $\theta$ (deployable $\theta$-blind stationarity)]
\label{thm:q-agg-foc-theta-blind}
Consider the fixed-$q$ ex--ante objective \eqref{eq:ex-ante-objective} over the $\theta$-blind Markov feedback class $\mathcal{A}^{\mathrm{fb}}$.
Assume Assumption~\ref{ass:bsde-regime} holds (in particular, the first-variation identity \eqref{eq:var-identity-fixed-theta} is valid within $\mathcal{A}^{\mathrm{fb}}$ and the associated $\theta$-conditional Pontryagin/BSDE objects exist).
If $\pi^{\star,\mathrm{blind}}$ is a locally optimal interior policy in $\mathcal{A}^{\mathrm{fb}}$, then \eqref{eq:agg-foc-process} holds.
Moreover, in the portfolio setting \eqref{eq:ctrl-hamiltonian}, the aggregated stationarity is equivalent to the statewise linear system
\begin{equation}
  A_t(t,x,y)\,\pi^{\star,\mathrm{blind}}(t,x,y) = -\,g_t(t,x,y),
  \qquad (t,x,y)\in[0,T]\times(0,\infty)\times\mathbb{R}^m,
  \label{eq:q-agg-linear-system-theta-blind}
\end{equation}
(where $A_t,g_t$ are defined by \eqref{eq:AG-aggregated-def-sec2} using the $\theta$-conditional Pontryagin objects generated by $\pi^{\star,\mathrm{blind}}$).
Whenever $A_t(t,x,y)$ is invertible on the working domain, \eqref{eq:q-agg-linear-system-theta-blind} is equivalently expressed as the projected feedback rule
\begin{equation}
  \pi^{\mathrm{agg}}(t,x,y)
  = -\,A_t(t,x,y)^{-1}\,g_t(t,x,y).
  \label{eq:pi-agg-formal}
\end{equation}
\end{theorem}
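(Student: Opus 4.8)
The plan is to split the argument into a variational part that produces the aggregated stationarity \eqref{eq:agg-foc-process} and a purely algebraic part that converts it into the linear system \eqref{eq:q-agg-linear-system-theta-blind} and the explicit rule \eqref{eq:pi-agg-formal}. Most of the analytic scaffolding is already assembled before the statement: the $\theta$-conditional first-variation identity \eqref{eq:var-identity-fixed-theta} and its $q$-aggregated counterpart \eqref{eq:var-identity-aggregated} are derived under the standing smoothness/integrability hypotheses, so I would invoke them directly rather than re-deriving the adjoint system. The theorem is then essentially a matter of (a) passing from ``first variation vanishes for all admissible directions'' to a pointwise identity, and (b) exploiting the affine-in-$\pi$ structure of the control Hamiltonian together with $\theta$-blindness.

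For part (a), I would fix the candidate interior local optimum $\pi^{\star,\mathrm{blind}}\in\mathcal{A}^{\mathrm{fb}}$ and let $h$ range over $\theta$-blind admissible perturbations, i.e.\ progressively $\mathcal{F}_t^{\mathrm{obs}}$-measurable, square-integrable directions for which $\pi^{\star,\mathrm{blind}}+\varepsilon h$ remains admissible for small $\varepsilon$ (guaranteed by the standing assumptions). Local optimality forces $\tfrac{d}{d\varepsilon}J(\pi^{\star,\mathrm{blind}}+\varepsilon h)\big|_{\varepsilon=0}=0$ for every such $h$. Substituting \eqref{eq:var-identity-aggregated} reduces this to $\mathbb{E}\big[\int_0^T (G^{\mathrm{agg}}_t)^\top h_t\,dt\big]=0$ for all admissible $h$, where $G^{\mathrm{agg}}_t:=\mathbb{E}_{\theta\sim q}\big[\partial_\pi\mathcal{H}^{\mathrm{ctrl}}_\theta(t,X_t,Y_t,\pi_t;p^\theta_t,p^\theta_{x,t},p^\theta_{y,t})\big]$. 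A fundamental-lemma (localization) argument then yields $G^{\mathrm{agg}}_t=0$ a.s.\ for a.e.\ $t$, which is exactly \eqref{eq:agg-foc-process}.

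For part (b), I would differentiate \eqref{eq:ctrl-hamiltonian} in $\pi$ and read off the affine identity $\partial_\pi\mathcal{H}^{\mathrm{ctrl}}_\theta = x\big(x\,p^\theta_{x}\,\Sigma(y,\theta)\,\pi + p^\theta b(y,\theta)+\Sigma_{SY}(y,\theta)p^\theta_{y}\big)=x\big(A_t^\theta\pi+G_t^\theta\big)$, where $A_t^\theta,G_t^\theta$ are the objects \eqref{eq:Ahat-theta-sec2}--\eqref{eq:Ghat-theta-sec2}. Since $\pi=\pi^{\star,\mathrm{blind}}(t,x,y)$ and the wealth level $x$ are $\theta$-blind, they factor out of the outer $q$-expectation; using \eqref{eq:AG-aggregated-def-sec2} this gives the identity $\mathbb{E}_{\theta\sim q}[\partial_\pi\mathcal{H}^{\mathrm{ctrl}}_\theta]=x\,(A_t\pi+G_t)$, valid irrespective of optimality. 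Because $x>0$ on the admissible wealth domain, \eqref{eq:agg-foc-process} is therefore \emph{equivalent} to $A_t\pi+G_t=0$, i.e.\ \eqref{eq:q-agg-linear-system-theta-blind}; left-multiplying by $A_t^{-1}$ wherever $A_t$ is nonsingular produces \eqref{eq:pi-agg-formal}. This algebraic step is routine once the affine form is in hand.

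The main obstacle I anticipate is the fundamental-lemma step, and specifically justifying that the $\theta$-blind perturbation class is rich enough to force pointwise vanishing. The subtlety is that admissible $h$ must themselves be $\theta$-blind, so one can only test against $\mathcal{F}_t^{\mathrm{obs}}$-measurable directions. The observation that rescues the argument is that $G^{\mathrm{agg}}_t$, having integrated out $\theta\sim q$, is a function of $(t,X_t,Y_t)$ alone (the costates $p^\theta$ coincide with spatial derivatives of a $\theta$-indexed decoupling field in the smooth Markov regime, and aggregation removes the $\theta$-dependence), hence $\mathcal{F}_t^{\mathrm{obs}}$-measurable. Taking $h_t=G^{\mathrm{agg}}_t\,\mathbf{1}_{\{|G^{\mathrm{agg}}_t|\le N\}}$ is then admissible, yields $\mathbb{E}\big[\int_0^T|G^{\mathrm{agg}}_t|^2\mathbf{1}_{\{|G^{\mathrm{agg}}_t|\le N\}}\,dt\big]=0$, and letting $N\to\infty$ by monotone convergence gives $G^{\mathrm{agg}}_t=0$ a.s.\ for a.e.\ $t$. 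This is also precisely where $\theta$-blindness is essential: since $\theta$-dependent perturbations are inadmissible, the pointwise-in-$\theta$ condition \eqref{eq:foc-fixed-theta} cannot be recovered, and only the $q$-aggregated condition survives.
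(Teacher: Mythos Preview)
Your proposal is correct and follows essentially the same route as the paper's proof sketch: invoke the pre-derived first-variation identities \eqref{eq:var-identity-fixed-theta}--\eqref{eq:var-identity-aggregated}, pass to pointwise vanishing via a fundamental-lemma/localization argument, and then exploit the affine-in-$\pi$ structure of $\partial_\pi\mathcal{H}^{\mathrm{ctrl}}_\theta$ together with $\theta$-blindness of $\pi$ and $x$ to obtain \eqref{eq:q-agg-linear-system-theta-blind} and \eqref{eq:pi-agg-formal}. Your treatment is in fact more explicit than the paper's sketch on the localization step (the truncated self-test $h_t=G^{\mathrm{agg}}_t\mathbf{1}_{\{|G^{\mathrm{agg}}_t|\le N\}}$ and the observation that $q$-aggregation renders $G^{\mathrm{agg}}_t$ $\mathcal{F}_t^{\mathrm{obs}}$-measurable), which the paper leaves implicit.
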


\begin{proof}[Proof sketch]
Fix an interior locally optimal $\theta$-blind feedback
$\pi^{\star,\mathrm{blind}}\in\mathcal{A}^{\mathrm{fb}}$.
For any $\theta$-blind admissible perturbation $h=\{h_t\}_{t\in[0,T]}$
(progressively measurable w.r.t.\ $(\mathcal{F}_t^{\mathrm{obs}})$ and square-integrable),
set $\pi^\varepsilon:=\pi^{\star,\mathrm{blind}}+\varepsilon h$.

\begin{enumerate}
  \item \textbf{First variation and $q$-aggregation.}
  For each fixed $\theta$, the $\theta$-conditional first-variation identity
  \eqref{eq:var-identity-fixed-theta} is standard for diffusion control under a frozen parameter
  \citep[e.g.][]{yong1999stochastic,fleming2006controlled,pham2009continuous}.
  Taking the outer expectation over $\theta\sim q$ and using Fubini/Tonelli under the standing
  integrability assumptions yields \eqref{eq:var-identity-aggregated}.
  \emph{Remark.} This step implements the fixed-$q$ decision-time commitment model: $\theta$ is sampled only inside the simulator
  to form the ex--ante objective $J(\cdot)$, while admissible perturbations remain $\theta$-blind. We do not formulate the belief-state/partial-information PMP,
  where stationarity would be expressed in terms of time-varying posteriors $q_t$ (cf.\ Remark~\ref{rem:belief-state-vs-agg}).

  \item \textbf{Vanishing first variation $\Rightarrow$ aggregated stationarity.}
  Since $\pi^{\star,\mathrm{blind}}$ is an (interior) local maximizer,
  the first variation must vanish:
  \[
    \left.\frac{d}{d\varepsilon}J(\pi^\varepsilon)\right|_{\varepsilon=0}=0
    \qquad\text{for all such $\theta$-blind perturbations $h$.}
  \]
  Using standard spike-variation/density arguments within the $\theta$-blind admissible class
  (i.e., choosing $h_t$ supported on arbitrarily small time intervals with arbitrary directions),
  we obtain the aggregated Pontryagin condition \eqref{eq:agg-foc-process}, a.s.\ for a.e.\ $t\in[0,T]$.

  \item \textbf{Affine-in-control Hamiltonian $\Rightarrow$ linear system and projected form.}
  For the quadratic portfolio Hamiltonian \eqref{eq:ctrl-hamiltonian}, $\partial_\pi\mathcal{H}^{\mathrm{ctrl}}_\theta$ is affine in $\pi$.
  Substituting the explicit expression for $\partial_\pi\mathcal{H}^{\mathrm{ctrl}}_\theta$ into \eqref{eq:agg-foc-process} and introducing
  the projection inputs \eqref{eq:Ahat-theta-sec2}--\eqref{eq:AG-aggregated-def-sec2} (and using $x>0$ to divide out a common factor) yields
  the statewise linear system \eqref{eq:q-agg-linear-system-theta-blind}.
  Whenever $A_t(t,x,y)$ is invertible on the working domain, this is equivalently expressed as the projected rule
  $\pi^{\mathrm{agg}}(t,x,y)=-A_t(t,x,y)^{-1}g_t(t,x,y)$, i.e.\ \eqref{eq:pi-agg-formal}.
\end{enumerate}
\end{proof}

Note that $\pi^{\mathrm{agg}}$ is generally \emph{not} equal to the naive average $\mathbb{E}_{\theta\sim q}[\pi^{\star,\theta}(t,x,y)]$ of
$\theta$-conditional full-information controls, reflecting the noncommutativity between averaging over $\theta$ and solving a first-order condition.
In particular, even if one could compute $\pi^{\star,\theta}$ for each $\theta$, averaging these infeasible oracles does not, in general, enforce the
deployable $q$-aggregated stationarity \eqref{eq:agg-foc-process}.

\begin{remark}[Relation to belief-state/learning formulations]
\label{rem:belief-state-vs-agg}
If one allows history-dependent policies that explicitly infer $\theta$ from observed returns, a principled partial-information formulation introduces
a time-varying posterior/belief state $q_t(\cdot)=\mathbb{P}(\theta\in\cdot\mid\mathcal{F}_t^{\mathrm{obs}})$.
In such belief-state problems, the corresponding PMP/Hamiltonian criticality condition is expressed in terms of conditional expectations under $q_t$
(or, equivalently, conditional on $\mathcal{F}_t^{\mathrm{obs}}$); see, e.g., \citet{haussmann1987maximum,li1995general,baghery2007maximum}.
We do not pursue that learning/belief-state route here.
Our algorithms and theory target the fixed-$q$, $q$-aggregated projection \eqref{eq:pi-agg-formal} under the $\theta$-blind Markov feedback restriction.
\end{remark}

\subsection{Gaussian decision-time references}
\label{subsec:gaussian-drift}

This subsection collects Gaussian benchmarks that yield closed-form \emph{decision-time reference} allocations.
Fix a decision time $t\in[0,T)$ and let an external routine provide an $\mathcal{F}_t^{\mathrm{obs}}$-measurable \emph{input law} $q_t$ for the uncertain premium parameter/state.
We treat $q_t$ as a decision-time input (not a belief process controlled/updated within a run): it is held fixed over the remaining horizon $[t,T]$.
The resulting portfolios are used only as validation targets and controlled diagnostics (not as characterizations of the unrestricted optimum).
If the decision is revisited at a later time $t'$, the reference is recomputed by resolving the decision-time problem with the new input $q_{t'}$ (plug-in replanning).

We present two references.
Section~\ref{subsubsec:gaussian-drift-static} considers \emph{static} drift uncertainty and provides the benchmark used in the high-dimensional
scaling/geometry experiments of Section~\ref{sec:hd-geometry}.
Section~\ref{subsubsec:gaussian-drift-ou-belief} considers a mean-reverting (OU) premium factor with Gaussian decision-time uncertainty for the current factor level;
OU propagation and time-averaging induce a horizon-dependent effective Gaussian premium law, yielding a tractable reference used in the hedging-demand recovery study
of Section~\ref{sec:hedging-recovery}.
All derivations are deferred to Appendix~\ref{app:gaussian-references-derivations}.

\subsubsection{Static Gaussian drift uncertainty}
\label{subsubsec:gaussian-drift-static}

\paragraph{Benchmark.}
Let $d$ risky assets satisfy, for $s\in[t,T]$,
\begin{equation}
  \frac{dS_s}{S_s}
  = r\,\mathbf{1}\,ds + \theta\,ds + \Sigma^{1/2} dW_s,
  \qquad S_t\in(0,\infty)^d,
  \label{eq:gaussian-model}
\end{equation}
where $\Sigma\in\mathbb{R}^{d\times d}$ is symmetric positive definite and $\theta\in\mathbb{R}^d$ is a latent constant excess-return vector.
At decision time $t$, uncertainty about $\theta$ is summarized by the input law
\begin{equation}
  \theta \sim q_t(d\theta).
  \label{eq:theta-law-gaussian-toy}
\end{equation}
A $\theta$-blind portfolio fraction process $\pi_s\in\mathbb{R}^d$ generates wealth on $[t,T]$:
\begin{equation}
  \frac{dX_s^\pi}{X_s^\pi}
  = \big(r + \pi_s^\top \theta\big)\,ds
    + \pi_s^\top \Sigma^{1/2}\,dW_s,
  \qquad X_t^\pi=x>0.
  \label{eq:wealth-gaussian}
\end{equation}
We evaluate the decision-time objective by mixing over $q_t$:
\begin{equation}
  J_t(\pi)
  :=\mathbb{E}_{\theta\sim q_t}\Big[\mathbb{E}\big[U(X_T^\pi)\mid \theta,\,\mathcal{F}_t^{\mathrm{obs}}\big]\Big].
  \label{eq:exante-J-gaussian}
\end{equation}

\paragraph{Decision-time constant-portfolio reference.}
To obtain a transparent closed-form target, we restrict to portfolio fractions constant over $[t,T]$:
\begin{equation}
  \pi_s \equiv \pi \in \mathbb{R}^d, \qquad s\in[t,T],
  \label{eq:constant-portfolio-restriction}
\end{equation}
and take CRRA utility $U(x)=x^{1-\gamma}/(1-\gamma)$ with $\gamma>1$.

If the decision-time input is Gaussian,
\begin{equation}
  q_t = \mathcal{N}(m_t,P_t),
  \qquad P_t\succeq 0,
  \label{eq:q-gaussian}
\end{equation}
then the constant-portfolio reference is
\begin{equation}
  \pi_{q_t,\gamma}^{\mathrm{const}}(t,\tau)
  =
  \big(\gamma\Sigma + (\gamma-1)\tau\,P_t\big)^{-1} m_t,
  \qquad \tau:=T-t,\quad (\gamma>1).
  \label{eq:pi_q_const_gaussian_ref}
\end{equation}
This exhibits the familiar \emph{Gaussian shrinkage}: decision-time uncertainty $P_t$ dampens exposures, increasingly so as the remaining horizon $\tau$ grows.

\subsubsection{Mean-reverting Gaussian premium and an induced horizon-dependent reference}
\label{subsubsec:gaussian-drift-ou-belief}

\paragraph{OU premium factor.}
Let $Y_s\in\mathbb{R}^m$ follow
\begin{equation}
  dY_s
  =
  K(\bar y - Y_s)\,ds + \Xi\,dW_s^Y,
  \qquad
  K\in\mathbb{R}^{m\times m}\ \text{Hurwitz and invertible},
  \label{eq:theta-ou-belief}
\end{equation}
and let risky excess returns satisfy
\begin{equation}
  dR_s
  :=
  \frac{dS_s}{S_s}-r\mathbf{1}\,ds
  =
  B Y_s\,ds + \Sigma^{1/2}\,dW_s,
  \label{eq:obs-returns-belief}
\end{equation}
with instantaneous correlation
\begin{equation}
  d\langle W, W^Y\rangle_s = \rho\,ds,
  \qquad
  \rho\in\mathbb{R}^{d\times m}.
  \label{eq:ou-corr}
\end{equation}

\paragraph{Decision-time input law and induced effective premium law.}
At decision time $t$, we take as input a Gaussian law for the current factor level,
\begin{equation}
  Y_t \sim q_t(dy) = \mathcal{N}(m_t,P_t),
  \qquad P_t\succeq 0.
  \label{eq:qt-factor-law}
\end{equation}
Define the remaining-horizon integrated factor and horizon-averaged effective premium:
\begin{equation}
  I_{t,T} := \int_t^T Y_s\,ds \in \mathbb{R}^m,
  \qquad
  \bar\theta_{t,\tau} := \frac{1}{\tau}\,B\,I_{t,T}\in\mathbb{R}^d,
  \qquad \tau:=T-t.
  \label{eq:integrated-premium-tau}
\end{equation}
Under the linear--Gaussian OU model, holding the input law \eqref{eq:qt-factor-law} fixed over $[t,T]$ induces a horizon-dependent Gaussian law
\begin{equation}
  \bar\theta_{t,\tau} \sim \mathcal{N}\!\big(m_{\bar\theta}(t,\tau),P_{\bar\theta}(t,\tau)\big),
  \label{eq:theta-bar-law-tau}
\end{equation}
where explicit formulas for $m_{\bar\theta}(t,\tau)$ and $P_{\bar\theta}(t,\tau)$ are given in Appendix~\ref{app:gaussian-references-derivations}.

\paragraph{Decision-time constant-portfolio reference (CRRA, $\gamma>1$).}
Restricting to constant portfolio fractions $\pi_s\equiv\pi$ on $[t,T]$ as above,
the Gaussian decision-time reference under CRRA ($\gamma>1$) is characterized by the linear system
\begin{equation}
  \Big(\gamma \tau\Sigma + (\gamma-1)\big(B\,C_I(t,\tau)\,B^\top + M_{\mathrm{cross}}(\tau)\big)\Big)\,
  \pi_{q_t,\gamma}^{\mathrm{const}}(t,\tau)
  = B\,m_I(t,\tau),
  \qquad (\gamma>1),
  \label{eq:pi-q-const-ou}
\end{equation}
where $m_I(t,\tau):=\mathbb{E}[I_{t,T}]$, $C_I(t,\tau):=\mathrm{Cov}(I_{t,T})$, and the cross term $M_{\mathrm{cross}}(\tau)$ captures the effect of
instantaneous return--factor correlation $\rho\neq0$ (hedging-demand channel). Closed-form expressions for $m_I(t,\tau)$, $C_I(t,\tau)$, and
$M_{\mathrm{cross}}(\tau)$ are collected in Appendix~\ref{app:gaussian-references-derivations}.

When $\rho=0$, we have $M_{\mathrm{cross}}(\tau)=0$ and the reference reduces to the independence-case shrinkage form obtained by plugging the effective Gaussian law
\eqref{eq:theta-bar-law-tau} into the static Gaussian formula \eqref{eq:pi_q_const_gaussian_ref}.

\section{Pontryagin--Guided Policy Optimization under Latent Parameter Uncertainty}
\label{sec:pgdpo-uncertainty}

We solve the fixed-$q$ ex--ante portfolio problem of Section~\ref{sec:merton-uncertainty} under a latent, time-constant parameter $\theta\sim q$.
The deployed controller is $\theta$-blind: it depends only on observable states $(t,X_t,Y_t)$ and never conditions on the realized $\theta$.
Uncertainty enters only through sampling $\theta$ \emph{inside the simulator}.

We use a deterministic policy network $\pi_{\varphi}:\mathcal{D}\to\mathbb{R}^d$, parameterized by $\varphi$, where
$\mathcal{D}\subset [0,T]\times(0,\infty)\times\mathbb{R}^m$ denotes a working domain inside the state space.

\medskip
\noindent\textbf{Algorithmic overview and terminology.}
Our full solver is \emph{Pontryagin--guided direct policy optimization} (PG--DPO) and consists of two stages.
\textbf{Stage~1 (DPO)} performs stochastic gradient ascent on the ex--ante objective
$J(\varphi)=\mathbb{E}\big[U\big(X_T^{\pi_\varphi,\theta}\big)\big]$ by differentiating through an Euler simulator via BPTT.
\textbf{Stage~2 (Pontryagin guidance / projection)} aggregates the BPTT adjoint (costate) blocks across $\theta\sim q$ and enforces the $q$-aggregated Pontryagin
first-order condition, yielding a deployable projected control.
We refer to Stage~1 alone as \emph{DPO} and to the two-stage procedure (Stage~1+2) as \emph{PG--DPO}.

Section~\ref{subsec:pgdpo-bptt-pmp} establishes the conditional BPTT--PMP correspondence and specifies the adjoint blocks needed for the projection stage.
Section~\ref{subsec:ppgdpo-uncertainty} develops the $q$-aggregated projection and a residual-based policy-gap bound.
Practical coupling mechanisms (residual-form projection and interactive distillation) are deferred to Appendix~\ref{app:coupling}.

\subsection{DPO via stochastic gradient ascent and a conditional BPTT--PMP correspondence}
\label{subsec:pgdpo-bptt-pmp}

\paragraph{Setup and objectives (frozen $\theta$, deployable $\theta$-blind feedback).}
A latent parameter $\theta\in\Theta$ is sampled from a fixed law $q(d\theta)$ \emph{inside the simulator} and kept frozen along each simulated trajectory.
A deployable portfolio policy is a $\theta$-blind Markov feedback rule represented by a neural network
\begin{equation}
  \pi_\varphi:\ \mathcal{D}\to\mathbb{R}^d,
  \qquad \mathcal{D}\subset [0,T]\times(0,\infty)\times\mathbb{R}^m,
  \qquad (t,x,y)\mapsto \pi_\varphi(t,x,y),
  \qquad \varphi\in\mathbb{R}^p,
  \label{eq:markov-feedback-policy}
\end{equation}
which does \emph{not} take $\theta$ as an input.
Conditional on a fixed frozen $\theta$, the controlled state $(X_t^{\pi_\varphi,\theta},Y_t^\theta)_{t\in[0,T]}$ evolves as
\begin{align}
  \frac{dX_t^{\pi_\varphi,\theta}}{X_t^{\pi_\varphi,\theta}}
  &= \Big(
      r + \pi_t^\top b\big(Y_t^\theta,\theta\big)
    \Big)\,dt
      + \pi_t^\top \sigma\big(Y_t^\theta,\theta\big)\,dW_t,
  \qquad X_0=x>0,
  \label{eq:wealth-theta-section2}\\[0.3em]
  dY_t^{\theta}
  &= a\big(Y_t^{\theta},\theta\big)\,dt
      + \beta\big(Y_t^{\theta},\theta\big)\,dW_t^Y,
  \qquad Y_0=y\in\mathbb{R}^m,
  \label{eq:factor-theta-section2}
\end{align}
where $(W,W^Y)$ may be instantaneously correlated as in Section~\ref{sec:merton-uncertainty}.
For each fixed $\theta$ we evaluate $\pi_\varphi$ by the conditional objective
\begin{equation}
  J^\theta(\varphi)
  :=
  \mathbb{E}\big[ U\big(X_T^{\pi_\varphi,\theta}\big) \,\big|\, \theta \big],
  \label{eq:conditional-objective-theta}
\end{equation}
where the expectation is over Brownian paths in \eqref{eq:wealth-theta-section2}--\eqref{eq:factor-theta-section2}.
The fixed-$q$ ex--ante objective is
\begin{equation}
  J(\varphi)
  :=
  \mathbb{E}_{\theta\sim q}\big[J^\theta(\varphi)\big]
  =
  \mathbb{E}\Big[ U\big(X_T^{\pi_\varphi,\theta}\big)\Big],
  \label{eq:exante-objective-section3}
\end{equation}
where the last expectation is joint over $\theta\sim q$ and $(W,W^Y)$.
Thus $\sup_\varphi J(\varphi)$ is a stochastic optimization problem in which $\theta$ is sampled inside the simulator while the policy remains $\theta$-blind.

\paragraph{Discretization, sampling over $\theta$, and the baseline DPO update.}
We discretize the horizon using an Euler scheme.
In training we may sample initial states $z_0^{(i)}=(t_0^{(i)},x_0^{(i)},y_0^{(i)})$ from a user-chosen distribution $\nu$ on
$[0,T)\times(0,\infty)\times\mathbb{R}^m$.
Given $t_0^{(i)}$, we discretize $[t_0^{(i)},T]$ into $N$ steps with step size
\[
  \Delta t^{(i)} := \frac{T-t_0^{(i)}}{N},
  \qquad
  t_k^{(i)} := t_0^{(i)} + k\Delta t^{(i)},\quad k=0,\dots,N.
\]
For episode $i$, we denote the discrete state by $(X_k^{(i)},Y_k^{(i)})_{k=0,\dots,N}$ and write
$\theta^{(i)}$ for the frozen parameter used to generate that simulated environment.
Given $\pi_\varphi$ and Brownian increments, the mapping
\[
  \big(
    x_0^{(i)},y_0^{(i)},\theta^{(i)},
    \{\Delta W_k^{(i)},\Delta W_k^{Y,(i)}\}_{k=0}^{N-1},
    \varphi
  \big)
  \longmapsto
  U\big(X_N^{(i)}\big)
\]
is a finite computational graph; automatic differentiation thus computes exact \emph{discrete-time} gradients
$\nabla_\varphi U(X_N^{(i)})$ for the chosen discretization.

A typical \emph{DPO} update samples a mini-batch $\{z_0^{(i)}\}_{i=1}^M\sim\nu$, samples latent parameters
either independently per episode ($\theta^{(i)}\sim q$) or shared within the batch (one $\theta\sim q$ reused across $i$),
simulates Euler rollouts, and performs stochastic gradient ascent.
Suppressing episode indices and writing $\theta$ for the frozen draw used in the rollout, the Euler recursion is
\begin{align*}
  Y_{k+1}
  &= Y_k + a\big(Y_k,\theta\big)\Delta t
     + \beta\big(Y_k,\theta\big)\Delta W_k^{Y},\\[0.3em]
  X_{k+1}
  &= X_k +
     X_k\Big(
       r + \pi_\varphi\big(t_k,X_k,Y_k\big)^\top
           b\big(Y_k,\theta\big)
     \Big)\Delta t
     +
     X_k\,
     \pi_\varphi\big(t_k,X_k,Y_k\big)^\top
     \sigma\big(Y_k,\theta\big)\Delta W_k,
\end{align*}
starting from $(t_0,X_0,Y_0)=(t_0^{(i)},x_0^{(i)},y_0^{(i)})$.

\begin{remark}[Positivity of wealth in the discrete simulator]
\label{rem:positive-euler}
In continuous time, the wealth process in \eqref{eq:wealth-theta-section2} stays strictly positive for any admissible portfolio.
A naive additive Euler step may nevertheless generate negative $X_k$ for coarse steps or aggressive controls, making CRRA utilities (and the higher-order
derivatives used in BPTT) ill-defined.
In implementations we therefore use a positivity-preserving discretization (e.g.\ exponential Euler on log-wealth) or apply a mild truncation/clipping so that
$X_k\ge \underline x>0$ on the computation graph.
The correspondence results below should be read for such a positivity-preserving simulator; equivalently, they apply on the event
$\{\min_{0\le k\le N}X_k>0\}$.
\end{remark}

The episode reward is
\begin{equation}
  J^{(i)}(\varphi)
  :=
  U\big(X_N^{(i)}\big),
  \label{eq:episode-reward}
\end{equation}
and BPTT computes $\nabla_\varphi J^{(i)}(\varphi)$.
The policy parameters are then updated (e.g.\ by Adam) as
\begin{equation}
  \varphi \leftarrow \varphi
    + \alpha \,\frac{1}{M}\sum_{i=1}^M \nabla_\varphi J^{(i)}(\varphi).
  \label{eq:dpo-update}
\end{equation}
Under standard interchange conditions (dominated convergence / uniform integrability for the discretized simulator),
this is an unbiased stochastic gradient for the \emph{discretized} ex--ante objective; as $\Delta t\to 0$ it approximates the continuous-time objective.

\paragraph{BPTT adjoints and their interpretation: first adjoint and second-adjoint blocks.}
BPTT returns not only $\nabla_\varphi J^{(i)}(\varphi)$ but also adjoint variables with respect to intermediate state variables.
For a single episode (suppressing $i$ and $\theta$ in notation), define the pathwise wealth adjoint
\begin{equation}
  p_k
  :=
  \frac{\partial U(X_N)}{\partial X_k},
  \qquad k=0,\dots,N,
  \label{eq:pathwise-costate-def}
\end{equation}
which is the discrete-time analogue of the \emph{wealth-component first adjoint}.
For the Pontryagin projection in Section~\ref{subsec:ppgdpo-uncertainty}, we also require curvature-type objects.
In smooth Markov regimes these correspond to $(V_{xx},V_{xy})$; in the stochastic maximum principle they are naturally
interpreted as \emph{blocks of the second adjoint}.
Accordingly, we track the following discrete-time blocks (computed by higher-order automatic differentiation):
\begin{equation}
  p_{x,k}
  :=
  \frac{\partial p_k}{\partial X_k}
  \;=\;
  \frac{\partial^2 U(X_N)}{\partial X_k^2},
  \qquad
  p_{y,k}
  :=
  \frac{\partial p_k}{\partial Y_k}
  \;=\;
  \frac{\partial^2 U(X_N)}{\partial Y_k\,\partial X_k},
  \qquad k=0,\dots,N.
  \label{eq:pathwise-costate-derivatives-def}
\end{equation}
(Here $p_{y,k}\in\mathbb{R}^m$ is the wealth--factor cross block.)
These are the discrete-time objects that enter the affine-in-control Hamiltonian gradient and hence the projection step.

\medskip
\noindent\textbf{Pathwise vs.\ SMP-adapted adjoints.}
The quantities in \eqref{eq:pathwise-costate-def}--\eqref{eq:pathwise-costate-derivatives-def} are \emph{pathwise} (sample-wise) derivatives on the simulated computation graph and, for $k<N$, they need not be adapted to time-$t_k$ information (they can depend on future increments).
To connect them to the stochastic maximum principle (which yields adapted adjoint processes and associated martingale terms),
one performs the standard conditional $L^2$ projection onto the span of the driving increments on each step, yielding a discrete-time BSDE representation.
\emph{Concretely}, for a fixed frozen $\theta$ and the simulator filtration
\[
  \mathcal{G}_{t_k}^\theta := \sigma\!\big(\theta,\{W_s,W_s^Y:0\le s\le t_k\}\big)
  \quad \text{(with the usual augmentation)},
\]
we define the \emph{adapted (projected) discrete adjoints}
\[
  p_k^{\Delta t,\theta} := \mathbb{E}[p_k \mid \mathcal{G}_{t_k}^\theta],\qquad
  p_{x,k}^{\Delta t,\theta} := \mathbb{E}[p_{x,k} \mid \mathcal{G}_{t_k}^\theta],\qquad
  p_{y,k}^{\Delta t,\theta} := \mathbb{E}[p_{y,k} \mid \mathcal{G}_{t_k}^\theta].
\]
Theorem~\ref{thm:bptt-pmp-uncertainty} concerns the convergence of these adapted discrete objects (together with the associated one-step BSDE martingale coefficients) to the continuous-time Pontryagin adjoints.

\begin{theorem}[BPTT--BSDE/PMP correspondence (conditional on $\theta$, including second-adjoint blocks; uniform on compacts)]
\label{thm:bptt-pmp-uncertainty}
Fix a compact $\Theta_0\subset\Theta$ and a $\theta$-blind policy network $\pi_\varphi$.
Assume Assumption~\ref{ass:bsde-regime} holds for $\pi=\pi_\varphi$ and every $\theta\in\Theta_0$, and that the discrete simulator used for BPTT
preserves positivity as in Remark~\ref{rem:positive-euler}.

For each $\theta\in\Theta_0$, let $(p_t^\theta,z_t^\theta,\tilde z_t^\theta)$ denote the wealth-component adjoint BSDE solution \eqref{eq:adjoint-bsde}
under $\pi_\varphi$, and let $(p_{x,t}^\theta,p_{y,t}^\theta)$ denote the blocks \eqref{eq:adjoint-blocks-def} (equivalently, the relevant blocks of the
second adjoint along the controlled trajectory).

Let $(X_k,Y_k)_{k=0}^N$ be the Euler simulator on a grid $(t_k)_{k=0}^N$ with step size $\Delta t$, and let $(p_k,p_{x,k},p_{y,k})$ be the
(pathwise, generally non-adapted) BPTT adjoint blocks \eqref{eq:pathwise-costate-def}--\eqref{eq:pathwise-costate-derivatives-def}.
Define their adapted projections by conditional expectations under the simulator filtration $\mathcal{G}_{t_k}^\theta$:
\[
  p_k^{\Delta t,\theta}:=\mathbb{E}[p_k\mid \mathcal{G}_{t_k}^\theta],\qquad
  p_{x,k}^{\Delta t,\theta}:=\mathbb{E}[p_{x,k}\mid \mathcal{G}_{t_k}^\theta],\qquad
  p_{y,k}^{\Delta t,\theta}:=\mathbb{E}[p_{y,k}\mid \mathcal{G}_{t_k}^\theta].
\]
Let $(Z_k^\theta,\tilde Z_k^\theta)$ be the one-step conditional $L^2$ projection coefficients of $p_{k+1}$ onto
$\mathrm{span}\{ \Delta W_k,\Delta W_k^Y\}$ given $\mathcal{G}_{t_k}^\theta$ (so that the orthogonal residual is $L^2$-uncorrelated with the
increments).
Form the piecewise-constant interpolations on $[t_k,t_{k+1})$:
\[
  p_t^{\Delta t,\theta}:=p_k^{\Delta t,\theta},\quad
  p_{x,t}^{\Delta t,\theta}:=p_{x,k}^{\Delta t,\theta},\quad
  p_{y,t}^{\Delta t,\theta}:=p_{y,k}^{\Delta t,\theta},\quad
  z_t^{\Delta t,\theta}:=Z_k^\theta,\quad
  \tilde z_t^{\Delta t,\theta}:=\tilde Z_k^\theta.
\]
Then, as $\Delta t\to0$,
\begin{align*}
\sup_{\theta\in\Theta_0}\sup_{t\in[0,T]}
\mathbb{E}\Big[\,|p_t^{\Delta t,\theta}-p_t^\theta|^2
+|p_{x,t}^{\Delta t,\theta}-p_{x,t}^\theta|^2
+\|p_{y,t}^{\Delta t,\theta}-p_{y,t}^\theta\|^2\,\Big]
&\to 0,\\
\sup_{\theta\in\Theta_0}
\mathbb{E}\!\left[\int_0^T \Big(\|z_t^{\Delta t,\theta}-z_t^\theta\|^2
+\|\tilde z_t^{\Delta t,\theta}-\tilde z_t^\theta\|^2\Big)\,dt\right]
&\to 0.
\end{align*}
In particular, the convergence bounds can be taken uniform over $\theta$ in compact subsets of $\Theta$.
\end{theorem}

\begin{proof}[Proof sketch]
Fix $\theta\in\Theta$ and work conditionally on $\theta$ under the augmented simulator filtration
$\mathcal{G}_t^\theta:=\sigma(\theta,\{W_s,W_s^Y:0\le s\le t\})$ (usual augmentation).
The argument follows the deterministic-parameter BPTT--BSDE/PMP correspondence of \citet{huh2025breaking},
applied conditionally on $\theta$; uniformity on compact $\Theta_0\subset\Theta$ is obtained by assuming the required
Lipschitz/growth/nondegeneracy/smoothness constants are uniform over $\theta\in \Theta_0$.

\begin{enumerate}
  \item \textbf{Pathwise BPTT adjoints on the Euler computation graph.}
  For fixed $\theta$, simulate the Euler discretization of \eqref{eq:wealth-theta-section2}--\eqref{eq:factor-theta-section2}.
  BPTT computes the pathwise derivatives of the terminal utility along the full discrete computation graph, producing
  the pathwise adjoint blocks $(p_k,p_{x,k},p_{y,k})$ as in \eqref{eq:pathwise-costate-def}--\eqref{eq:pathwise-costate-derivatives-def}.
  These quantities are generally $\mathcal{G}_T^\theta$-measurable (they depend on future increments) and hence are not adapted.

  \item \textbf{One-step conditional $L^2$ projection $\Rightarrow$ adapted discrete BSDE variables.}
  On each step, project $p_{k+1}$ onto $\mathrm{span}\{1,\Delta W_k,\Delta W_k^Y\}$ in conditional $L^2$ given $\mathcal{G}_{t_k}^\theta$:
  \[
    p_{k+1}
    =
    \mathbb{E}[p_{k+1}\mid \mathcal{G}_{t_k}^\theta]
    + Z_k^\theta\,\Delta W_k
    + \tilde Z_k^\theta\,\Delta W_k^Y
    + \varepsilon_{k+1}^\theta,
  \]
  where $\varepsilon_{k+1}^\theta$ is orthogonal (in conditional $L^2$) to the span.
  Define the adapted discrete adjoints by conditional expectations,
  \[
    p_k^{\Delta t,\theta}:=\mathbb{E}[p_k\mid \mathcal{G}_{t_k}^\theta],\qquad
    p_{x,k}^{\Delta t,\theta}:=\mathbb{E}[p_{x,k}\mid \mathcal{G}_{t_k}^\theta],\qquad
    p_{y,k}^{\Delta t,\theta}:=\mathbb{E}[p_{y,k}\mid \mathcal{G}_{t_k}^\theta].
  \]
  By construction, $(p_k^{\Delta t,\theta},Z_k^\theta,\tilde Z_k^\theta)$ satisfy a discrete-time BSDE representation.

  \item \textbf{Identification of the discrete driver; second-adjoint blocks and projection residual.}
  Substituting the projection in Step 2 into the one-step BPTT recursion and taking conditional expectations yields that the discrete BSDE driver
  coincides with the Euler discretization of the $\theta$-conditional adjoint BSDE under the fixed policy $\pi_\varphi$, up to one-step remainder terms.
  In particular, in a smooth Markov regime (spelled out in Appendix~\ref{app:bptt-pmp-proof}), the continuous-time adjoint admits a decoupling field
  $p_t^\theta=u^\theta(t,X_t^\theta,Y_t^\theta)$ and the sensitivity blocks needed in the portfolio Hamiltonian gradient coincide along trajectories with
  the spatial derivatives $\partial_x u^\theta$ and $\partial_y u^\theta$, yielding the interpretation of $(p_{x,t}^\theta,p_{y,t}^\theta)$.
  The projection residual $\varepsilon_{k+1}^\theta$ corresponds to higher-order chaos components and is controlled by the same one-step Taylor/chaos estimates
  as in \citet{huh2025breaking}, contributing only higher-order discretization error (uniformly on compacts).

  \item \textbf{Convergence as $\Delta t\to 0$ and uniformity on compacts.}
  Standard Euler stability for the forward SDE together with convergence of the associated discrete BSDE scheme yields, as $\Delta t\to 0$,
  mean-square convergence of the adapted discrete adjoint blocks $(p^{\Delta t,\theta},p_x^{\Delta t,\theta},p_y^{\Delta t,\theta})$ to
  $(p^\theta,p_{x}^\theta,p_{y}^\theta)$ along the controlled trajectory, and convergence of the martingale coefficients
  $(Z^{\Delta t,\theta},\tilde Z^{\Delta t,\theta})$ to $(Z^\theta,\tilde Z^\theta)$ in the natural $L^2([0,T]\times\Omega)$ sense.
  If the coefficient/policy regularity and nondegeneracy constants are uniform for $\theta\in \Theta_0$, then the convergence bounds hold with constants independent
  of $\theta\in \Theta_0$.
\end{enumerate}

A complete proof (including explicit regularity assumptions, the one-step remainder control, and the uniform-on-compacts bookkeeping) is provided in
Appendix~\ref{app:bptt-pmp-proof}.
\end{proof}

Across $\theta\sim q$, these Pontryagin objects form a $\theta$-indexed family.
Baseline \emph{DPO} trains against the ex--ante objective \eqref{eq:exante-objective-section3} by repeatedly sampling $\theta\sim q$ inside the simulator,
while the deployable policy remains $\theta$-blind.

\subsection{Stage~2 of PG--DPO under latent $\theta$: $q$-aggregated projection, diagnostics, and a residual-based policy-gap bound}
\label{subsec:ppgdpo-uncertainty}

\noindent\textbf{Stage~2 (Pontryagin guidance) is a projection step.}
Given a warm-up deployable $\theta$-blind feedback policy
$\pi^{\mathrm{warm}}=\pi_{\varphi^{\mathrm{warm}}}$ obtained from Stage~1 (\emph{DPO}),
we estimate $\theta$-conditional Pontryagin adjoint objects under frozen $\theta\sim q$ by BPTT/Monte Carlo,
aggregate them across $\theta$, and construct a deployable $\theta$-blind policy by projecting onto the $q$-aggregated stationarity condition
derived in Section~\ref{subsec:pmp-latent}.
Together with Stage~1, this projection completes \emph{PG--DPO}.

Equivalently, Stage~2 can be viewed as a plug-in approximation of one application of the population projection map
$\mathcal{T}(\pi):=-A_\pi^{-1}g_\pi^{\mathrm{mix}}$ (defined below) evaluated at $\pi=\pi^{\mathrm{warm}}$ on the working domain.
The key point is that the aggregated first-order condition is \emph{affine} in the portfolio control; hence it induces a statewise linear system
and, on a suitable working domain, a concrete projection map from (estimated) adjoint objects to a portfolio rule.

Throughout, the adjoint objects used in projection consist of the \emph{wealth-component first adjoint} together with the \emph{relevant blocks of the second adjoint}
(cf.\ Section~\ref{subsec:pgdpo-bptt-pmp}): in smooth Markov regimes these correspond to $(V_x,V_{xx},V_{xy})$ evaluated along the controlled state.

\medskip
\noindent\textbf{Working domain and norms.}
Fix a measurable working state domain $\mathcal{D}\subset [0,T]\times(0,\infty)\times\mathbb{R}^m$ (e.g.\ a training/evaluation band)
and a reference measure $\mu$ on $\mathcal{D}$ (e.g.\ an empirical state distribution induced by rollouts).
For $h:\mathcal{D}\to\mathbb{R}^n$ we write
\[
  \|h\|_{L^2(\mu)} := \Big(\int_\mathcal{D} \|h(z)\|^2\,\mu(dz)\Big)^{1/2},
  \qquad z=(t,x,y),
\]
and for $\theta$-indexed families (used when tracking frozen-$\theta$ quantities in analysis/inspection),
\begin{equation}
  \|f\|_{L^2(q\otimes\mu)}
  :=
  \bigg(
    \int_{\Theta}\int_{\mathcal{D}} \|f^\theta(z)\|^2\,\mu(dz)\,q(d\theta)
  \bigg)^{1/2}.
  \label{eq:policy-gap-norm-exante}
\end{equation}

\paragraph{Mixed-moment $q$-aggregation under a warm-up policy.}
By Theorem~\ref{thm:q-agg-foc-theta-blind}, any locally optimal interior \emph{deployable} $\theta$-blind policy
$\pi^{\star,\mathrm{blind}}$ for the fixed-$q$ ex--ante problem satisfies the $q$-aggregated stationarity condition
\eqref{eq:agg-foc-process}.
In the portfolio Hamiltonian \eqref{eq:ctrl-hamiltonian}, this stationarity is equivalent to a statewise linear system and hence to the projected form
\eqref{eq:pi-agg-formal} on the working domain (under invertibility of the aggregated curvature term).
Stage~2 implements a practical approximation of this projection by estimating the relevant aggregated Pontryagin objects under the fixed warm-up policy
$\pi^{\mathrm{warm}}=\pi_{\varphi^{\mathrm{warm}}}$.

Fix a query state $z=(t,x,y)\in \mathcal{D}$ and a frozen parameter $\theta$.
We simulate trajectories under $\pi^{\mathrm{warm}}$ and compute discrete-time adjoint blocks by autodiff/BPTT on the Euler simulator;
averaging over $M_{\mathrm{MC}}$ trajectories yields Monte Carlo estimates
\begin{equation}
  \widehat p_t^\theta(z),\qquad
  \widehat p_{x,t}^\theta(z),\qquad
  \widehat p_{y,t}^\theta(z),
  \label{eq:mc-costate-estimates}
\end{equation}
where $\widehat p_t^\theta$ corresponds to the wealth-component first adjoint and
$(\widehat p_{x,t}^\theta,\widehat p_{y,t}^\theta)$ correspond to the blocks of the second adjoint relevant for the Hamiltonian gradient.
In the analysis, these are interpreted as estimators of the corresponding $\theta$-conditional Pontryagin objects
(cf.\ the conditional BPTT--PMP correspondence in Section~\ref{subsec:pgdpo-bptt-pmp}).

Using these, define the $\theta$-conditional estimated projection inputs
\begin{align}
  \widehat A_t^\theta(t,x,y)
  &:= x\,\widehat p_{x,t}^\theta(t,x,y)\,\Sigma(y,\theta)\in\mathbb{R}^{d\times d},
  \label{eq:Ahat-theta}\\
  \widehat g_t^\theta(t,x,y)
  &:= \widehat p_t^\theta(t,x,y)\, b(y,\theta)
      + \Sigma_{SY}(y,\theta)\,\widehat p_{y,t}^\theta(t,x,y)\in\mathbb{R}^{d},
  \label{eq:Ghat-theta}
\end{align}
where $\Sigma(\cdot,\theta)$ and $\Sigma_{SY}(\cdot,\theta)$ are the instantaneous covariance objects appearing in the $\theta$-conditional Hamiltonian
(as in Section~\ref{sec:merton-uncertainty}).

Aggregating across $\theta\sim q$ (approximated in practice by sampling $M_\theta$ frozen parameters) gives the \emph{mixed-moment} estimators
\begin{align}
  \widehat A_t(t,x,y)
  &:= \mathbb{E}_{\theta\sim q}\Big[\widehat A_t^\theta(t,x,y)\Big],
  \label{eq:agg-A}\\
  \widehat g_t^{\mathrm{mix}}(t,x,y)
  &:= \mathbb{E}_{\theta\sim q}\Big[\widehat g_t^\theta(t,x,y)\Big].
  \label{eq:agg-G}
\end{align}
Whenever $\widehat A_t(t,x,y)$ is invertible on $\mathcal{D}$, we obtain the mixed-moment projected policy
\begin{equation}
  \widehat\pi^{\mathrm{agg,mix}}(t,x,y)
  :=
  -\,\widehat A_t(t,x,y)^{-1}\,\widehat g_t^{\mathrm{mix}}(t,x,y).
  \label{eq:pi-agg-mix}
\end{equation}

\medskip
\noindent\textbf{Stabilized inversion and projection diagnostics (implementation).}
In practice, finite-sample estimates can make $\widehat A_t$ ill-conditioned or even singular on parts of $\mathcal{D}$.
Our default implementation therefore uses standard stabilizers (Appendix~\ref{app:impl-stabilizers}), including:
(i) ridge regularization $\widehat A_{t,\lambda}:=\widehat A_t+\lambda I_d$ and $\widehat\pi^{\mathrm{agg,mix}}_\lambda:=-\widehat A_{t,\lambda}^{-1}\widehat g_t^{\mathrm{mix}}$,
(ii) condition-number or eigenvalue diagnostics to skip/clip projection on unreliable states (falling back to $\pi^{\mathrm{warm}}$),
and (iii) residual-form projection (Section~\ref{subsubsec:cv-projection}) to reduce variance.
The analysis below is stated for the population projection $\mathcal{T}(\pi):=-A_\pi^{-1}g_\pi^{\mathrm{mix}}$ and for small estimator perturbations,
which is precisely the regime enforced by these stabilizers.

\paragraph{Residual diagnostic and a slab-wise small-gain policy-gap bound.}
To connect \eqref{eq:pi-agg-mix} to a locally optimal deployable $\theta$-blind policy, we measure how well the warm-up policy satisfies the
\emph{population} mixed-moment aggregated stationarity.
Let $(A_\pi,g_\pi^{\mathrm{mix}})$ denote the mixed-moment $q$-aggregated projection inputs induced by a policy $\pi$
(i.e.\ the objects in \eqref{eq:AG-aggregated-def-sec2} evaluated using the $\theta$-conditional Pontryagin adjoint blocks generated by $\pi$).
Define the warm-up aggregated stationarity residual on $\mathcal{D}$ by
\begin{equation}
  r_{\mathrm{FOC,mix}}^{\mathrm{warm}}(t,x,y)
  :=
  A_{\pi^{\mathrm{warm}}}(t,x,y)\,\pi^{\mathrm{warm}}(t,x,y)
  + g_{\pi^{\mathrm{warm}}}^{\mathrm{mix}}(t,x,y),
  \qquad
  \varepsilon_{\mathrm{warm}}^{\mathrm{mix}}
  :=
  \big\|r_{\mathrm{FOC,mix}}^{\mathrm{warm}}\big\|_{L^2(\mu)}.
  \label{eq:residual-warm-mix}
\end{equation}
Note that $r_{\mathrm{FOC,mix}}^{\mathrm{warm}}=0$ is equivalent to $\pi^{\mathrm{warm}}$ being a fixed point of $\mathcal{T}$ on $\mathcal{D}$.
More generally, the population projection admits the residual form
\begin{equation}
  \mathcal{T}(\pi^{\mathrm{warm}})
  \;=\;
  \pi^{\mathrm{warm}}
  \;-\;
  A_{\pi^{\mathrm{warm}}}^{-1}\,r_{\mathrm{FOC,mix}}^{\mathrm{warm}}.
  \label{eq:residual-form-population}
\end{equation}
In practice we monitor the empirical analogue
\(
  \widehat r_{\mathrm{FOC,mix}}^{\mathrm{warm}}
  :=
  \widehat A_t\,\pi^{\mathrm{warm}}+\widehat g_t^{\mathrm{mix}}
\),
and use it as a diagnostic to flag states where projection is unreliable (cf.\ Appendix~\ref{app:impl-stabilizers}).
(Equivalently, $\widehat\pi^{\mathrm{agg,mix}}=\pi^{\mathrm{warm}}-\widehat A_t^{-1}\widehat r_{\mathrm{FOC,mix}}^{\mathrm{warm}}$,
or its stabilized variant, which motivates the residual-form implementation.)

\medskip
\noindent\textbf{BPTT/Monte Carlo error term (one explicit choice).}
To make the discretization/Monte Carlo dependence in \eqref{eq:policy-gap-bound-residual} explicit, one may take for example
\begin{equation}
  \delta_{\mathrm{BPTT}}(\Delta t,M_{\mathrm{MC}},M_\theta)
  :=
  \big\|\widehat A_t-A_{\pi^{\mathrm{warm}}}\big\|_{L^2(\mu)}
  +
  \big\|\widehat g_t^{\mathrm{mix}}-g_{\pi^{\mathrm{warm}}}^{\mathrm{mix}}\big\|_{L^2(\mu)}.
  \label{eq:delta-bptt-explicit}
\end{equation}
Under standard Lipschitz/bounded-moment assumptions (Euler strong error and Monte Carlo sampling error),
one typically has a scaling of the form
\(
  \delta_{\mathrm{BPTT}}(\Delta t,M_{\mathrm{MC}},M_\theta)
  \lesssim
  \Delta t^{1/2}+M_{\mathrm{MC}}^{-1/2}+M_\theta^{-1/2}
\),
up to constants depending on the working domain and stability bounds.

\medskip
\noindent
A technical point is that a \emph{global} small-gain condition of the form $C_1<1$ can be overly restrictive.
Following the slab-wise philosophy in our prior PGDPO analysis (e.g.\ \citet[Appendix B]{huh2025breaking}),
we default to a \emph{time-slab} decomposition of the working domain and close the warm-up gap on each short slab.
Concretely, assume $\mathcal{D}$ carries a time coordinate and fix a partition $0=t_0<t_1<\cdots<t_L=T$ with slab lengths
$\tau_k:=t_k-t_{k-1}$. For notational convenience, write $\mathcal{D}\subset[0,T]\times\mathcal{S}$ for some spatial band
$\mathcal{S}\subset(0,\infty)\times\mathbb{R}^m$, and define
\[
  \mathcal{D}_k:=\mathcal{D}\cap([t_{k-1},t_k]\times\mathcal{S}),
  \qquad
  \mu_k:=\mu|_{\mathcal{D}_k},
  \qquad
  \|f\|_k:=\|f\|_{L^2(\mu_k)}.
\]
We write $\mathcal{T}(\pi):=-A_\pi^{-1}g_\pi^{\mathrm{mix}}$ for the (population) $q$-aggregated projection map.
Theorem~\ref{thm:policy-gap-residual} below shows that, under a mild \emph{slab-wise} local stability regime
(i.e.\ a short-time contraction of $\mathcal{T}$ on each $\mathcal{D}_k$), small residual implies that the projected policy is close
(in $L^2(\mu)$) to a locally optimal deployable $\theta$-blind policy, up to discretization/Monte Carlo error.
The proof combines a projection-map stability bound (Appendix~\ref{app:proj-map-stability}) with a slab-wise closure
(Appendix~\ref{app:slab-small-gain}), in the same spirit as the slab analyses used in \citet{huh2025breaking}.

\begin{figure}[t]
\centering
\begin{tikzpicture}[
  font=\footnotesize,
  box/.style={draw, rounded corners, align=left, inner sep=6pt, text width=0.30\textwidth},
  arrow/.style={->, thick},
  darrow/.style={->, thick, dashed}
]
\node[box, anchor=west] (s1) at (0,0) {%
  \textbf{Stage~1: DPO}\\
  Sample $\theta\sim q$ in the simulator; Euler rollout; BPTT; update $\varphi$.\\
  Output: $\varphi^{\mathrm{warm}}$.
};
\node[box, anchor=west] (s2) at (0.34\textwidth,0) {%
  \textbf{Stage~2: Pontryagin guidance}\\
  Under $\pi_{\varphi^{\mathrm{warm}}}$, estimate adjoint blocks; aggregate over $\theta$; solve $A\pi=-g$.\\
  Output: $\widehat\pi^{\mathrm{agg,mix}}$.
};
\node[box, anchor=west] (s3) at (0.68\textwidth,0) {%
  \textbf{Optional couplings}\\
  Residual-form projection; interactive distillation (teacher refresh).
};

\draw[arrow] (s1.east) -- (s2.west);
\draw[arrow] (s2.east) -- (s3.west);
\draw[darrow] (s3.south) .. controls +(0,-0.9) and +(0,-0.9) .. (s1.south);

\end{tikzpicture}
\caption{Two-stage PG--DPO pipeline of Section~\ref{sec:pgdpo-uncertainty}: Stage~1 (DPO) trains a $\theta$-blind policy by BPTT, and Stage~2 applies a $q$-aggregated Pontryagin projection. A detailed schematic (including coupling mechanisms and implementation notes) is provided in Appendix~\ref{app:coupling}, Fig.~\ref{fig:sec3-pipeline-sideways}.}
\label{fig:sec3-pipeline}
\end{figure}

\begin{theorem}[Residual-based ex--ante $\theta$-blind policy-gap bound for PG--DPO
(mixed-moment, deployable, slab-wise local)]
\label{thm:policy-gap-residual}
Let $\pi^{\mathrm{warm}}$ be the Stage~1 (DPO) warm-up policy and let $(\widehat A_t,\widehat g_t^{\mathrm{mix}})$ be the Stage~2 mixed-moment estimators
computed under $\pi^{\mathrm{warm}}$.
Assume Proposition~\ref{prop:proj-map-stability} (Appendix~\ref{app:proj-map-stability}) applies on the working domain $\mathcal{D}$ with
$A=A_{\pi^{\mathrm{warm}}}$ and $g=g_{\pi^{\mathrm{warm}}}^{\mathrm{mix}}$; equivalently, there exist constants $\kappa,M_g>0$ such that
\[
  \|A_{\pi^{\mathrm{warm}}}^{-1}\|_{L^\infty(\mathcal{D})}\le \kappa,\qquad
  \|g_{\pi^{\mathrm{warm}}}^{\mathrm{mix}}\|_{L^\infty(\mathcal{D})}\le M_g,\qquad
  \|\widehat A_t-A_{\pi^{\mathrm{warm}}}\|_{L^\infty(\mathcal{D})}\le (2\kappa)^{-1},
\]
where the last condition is the \emph{stable inversion} regime enforced in practice by ridge/diagnostics/skip.

Let $\pi^{\star,\mathrm{blind}}$ be a locally optimal interior deployable $\theta$-blind policy for the fixed-$q$ ex--ante problem.
Assume there exists a neighborhood $\mathcal{U}$ of $\pi^{\star,\mathrm{blind}}$ in $L^2(\mu)$ such that:
\begin{enumerate}[label=(B\arabic*),leftmargin=3.2em]
\item $\pi^{\mathrm{warm}}\in\mathcal{U}$ and, for all $\pi\in\mathcal{U}$,
\(
  \|A_\pi^{-1}\|_{L^\infty(\mathcal{D})}\le \kappa
\)
and
\(
  \|g_\pi^{\mathrm{mix}}\|_{L^\infty(\mathcal{D})}\le M_g;
\)
\item the \emph{slab-wise Lipschitz gain} of Appendix~\ref{app:slab-small-gain} holds: there exist constants $\bar L_A,\bar L_g>0$ such that for every slab
$\mathcal{D}_k$ and all $\pi_1,\pi_2\in\mathcal{U}$,
\begin{equation}
\label{eq:slab-Lip-AG-main}
  \|A_{\pi_1}-A_{\pi_2}\|_k \le \bar L_A\,\tau_k^{1/2}\,\|\pi_1-\pi_2\|_k,
  \qquad
  \|g_{\pi_1}^{\mathrm{mix}}-g_{\pi_2}^{\mathrm{mix}}\|_k \le \bar L_g\,\tau_k^{1/2}\,\|\pi_1-\pi_2\|_k.
\end{equation}
\end{enumerate}
Define
\begin{equation}
\label{eq:rho-star-def-main}
  \varrho(\tau):=\big(\kappa \bar L_g+\kappa^2 M_g \bar L_A\big)\tau^{1/2},
  \qquad
  \varrho_*:=\max_{1\le k\le L}\varrho(\tau_k),
\end{equation}
and assume the slab partition is chosen so that $\varrho_*<1$.

Let $\widehat\pi^{\mathrm{agg,mix}}$ be the mixed-moment projected policy \eqref{eq:pi-agg-mix} computed from BPTT/Monte Carlo estimates under
$\pi^{\mathrm{warm}}$ (with stabilized inversion as needed), and let $\varepsilon_{\mathrm{warm}}^{\mathrm{mix}}$ be the population residual
\eqref{eq:residual-warm-mix}. Then there exists $C_2>0$ such that
\begin{equation}
  \big\|
    \widehat\pi^{\mathrm{agg,mix}} - \pi^{\star,\mathrm{blind}}
  \big\|_{L^2(\mu)}
  \;\le\;
  \frac{\varrho_*\kappa}{1-\varrho_*}\,
  \varepsilon_{\mathrm{warm}}^{\mathrm{mix}}
  \;+\;
  C_2\,
  \delta_{\mathrm{BPTT}}(\Delta t,M_{\mathrm{MC}},M_\theta),
  \label{eq:policy-gap-bound-residual}
\end{equation}
where one admissible explicit choice of $\delta_{\mathrm{BPTT}}$ is given in \eqref{eq:delta-bptt-explicit}.
Moreover, under the stable inversion regime above one may take, for example,
\(
C_2:=2\kappa+2\kappa^2M_g.
\)
\end{theorem}

\begin{remark}[Population vs.\ empirical residual; scope of the bound]
\label{rem:pop-vs-emp-residual}
The residual $\varepsilon_{\mathrm{warm}}^{\mathrm{mix}}$ in \eqref{eq:policy-gap-bound-residual} is defined using population mixed-moment objects
$(A_{\pi^{\mathrm{warm}}},g_{\pi^{\mathrm{warm}}}^{\mathrm{mix}})$.
In practice we monitor the empirical analogue
$\widehat r_{\mathrm{FOC,mix}}^{\mathrm{warm}}=\widehat A_t\,\pi^{\mathrm{warm}}+\widehat g_t^{\mathrm{mix}}$
and enforce the stable inversion regime by ridge/diagnostics/skip (Appendix~\ref{app:impl-stabilizers}).
Relating $\varepsilon_{\mathrm{warm}}^{\mathrm{mix}}$ to $\|\widehat r_{\mathrm{FOC,mix}}^{\mathrm{warm}}\|_{L^2(\mu)}$ amounts to controlling the estimator error
in $(\widehat A_t,\widehat g_t^{\mathrm{mix}})$ and the size of $\pi^{\mathrm{warm}}$ on $\mathcal{D}$; we do not pursue nonasymptotic concentration bounds here.
\end{remark}

\begin{proof}[Proof sketch]
Define the population mixed-moment projection map on $\mathcal{D}$ by
\[
  \mathcal{T}(\pi)\;:=\;-A_\pi^{-1}g_\pi^{\mathrm{mix}}.
\]
Let the population projected policy be
\[
  \pi^{\mathrm{proj}} \;:=\; \mathcal{T}(\pi^{\mathrm{warm}}) \;=\; -A_{\pi^{\mathrm{warm}}}^{-1}g_{\pi^{\mathrm{warm}}}^{\mathrm{mix}}.
\]

\medskip
\noindent\textbf{Notation (for the proof only).}
To avoid clutter in this proof sketch, we write
\[
  \widehat A := \widehat A_t,
  \qquad
  \widehat g := \widehat g_t^{\mathrm{mix}},
\]
as functions on the working domain $\mathcal{D}$ (with the dependence on $(t,x,y)$ suppressed), so that
$\widehat\pi^{\mathrm{agg,mix}} = -\widehat A^{-1}\widehat g$ is exactly \eqref{eq:pi-agg-mix}.

\medskip
Write the stage-2 (stabilized) estimator-based projected policy as
\[
  \widehat\pi^{\mathrm{agg,mix}} \;:=\; -\widehat A^{-1}\widehat g,
\]
where $(\widehat A,\widehat g)$ are the BPTT/Monte-Carlo estimates computed under $\pi^{\mathrm{warm}}$
(with ridge/diagnostics/skip as needed to make the inverse well-defined on $\mathcal{D}$).

\begin{enumerate}
  \item \textbf{Estimator-to-population stability (Appendix~\ref{app:proj-map-stability}).}
  By the stated uniform stability/invertibility assumptions (in particular, $\widehat A$ is an $L^\infty(\mathcal{D})$-small perturbation of
  $A_{\pi^{\mathrm{warm}}}$ under the adopted stabilizers), Proposition~\ref{prop:proj-map-stability} applies with
  $A=A_{\pi^{\mathrm{warm}}}$ and $g=g_{\pi^{\mathrm{warm}}}^{\mathrm{mix}}$, yielding
  \[
    \big\|\widehat\pi^{\mathrm{agg,mix}}-\pi^{\mathrm{proj}}\big\|_{L^2(\mu)}
    \;\le\; C_2\,\delta_{\mathrm{BPTT}}(\Delta t,M_{\mathrm{MC}},M_\theta),
  \]
  for some constant $C_2$ depending only on the uniform inverse bound $\kappa$ and uniform $L^\infty$ bounds on the population $g$ terms
  (one explicit admissible $\delta_{\mathrm{BPTT}}$ is given in \eqref{eq:delta-bptt-explicit}).

  \item \textbf{Slab-wise contraction of the population projection map (Appendix~\ref{app:slab-small-gain}).}
  On each slab $\mathcal{D}_k$, the slab-wise Lipschitz gain assumption \eqref{eq:slab-Lip-AG-main} together with the uniform bounds on $\mathcal{U}$
  implies (via Proposition~\ref{prop:slab-lipschitz-gain}) that for all $\pi_1,\pi_2\in\mathcal{U}$,
  \[
    \|\mathcal{T}(\pi_1)-\mathcal{T}(\pi_2)\|_k \;\le\; \varrho(\tau_k)\,\|\pi_1-\pi_2\|_k
    \;\le\; \varrho_*\,\|\pi_1-\pi_2\|_k.
  \]

  \item \textbf{Fixed-point identity for the local interior optimum.}
  Since $\pi^{\star,\mathrm{blind}}$ is a locally optimal interior deployable $\theta$-blind policy, it satisfies the aggregated stationarity
  condition on $\mathcal{D}$, equivalently the fixed-point relation
  \[
    \pi^{\star,\mathrm{blind}} = \mathcal{T}(\pi^{\star,\mathrm{blind}}).
  \]
  Therefore, for each slab $\mathcal{D}_k$,
  \[
    \|\pi^{\mathrm{proj}}-\pi^{\star,\mathrm{blind}}\|_k
    \;=\;\|\mathcal{T}(\pi^{\mathrm{warm}})-\mathcal{T}(\pi^{\star,\mathrm{blind}})\|_k
    \;\le\;\varrho_*\,\|\pi^{\mathrm{warm}}-\pi^{\star,\mathrm{blind}}\|_k.
  \]

  \item \textbf{Residual identity and small-gain closure.}
  Let $r_{\mathrm{FOC,mix}}^{\mathrm{warm}}:=A_{\pi^{\mathrm{warm}}}\pi^{\mathrm{warm}}+g_{\pi^{\mathrm{warm}}}^{\mathrm{mix}}$ and
  $\varepsilon_{\mathrm{warm},k}^{\mathrm{mix}}:=\|r_{\mathrm{FOC,mix}}^{\mathrm{warm}}\|_k$.
  By construction of $\pi^{\mathrm{proj}}$,
  \[
    \pi^{\mathrm{warm}}-\pi^{\mathrm{proj}}
    \;=\; A_{\pi^{\mathrm{warm}}}^{-1}\,r_{\mathrm{FOC,mix}}^{\mathrm{warm}},
    \qquad
    \|\pi^{\mathrm{warm}}-\pi^{\mathrm{proj}}\|_k
    \;\le\; \kappa\,\varepsilon_{\mathrm{warm},k}^{\mathrm{mix}}.
  \]
  Using the triangle inequality and the contraction estimate above,
  \[
    \|\pi^{\mathrm{warm}}-\pi^{\star,\mathrm{blind}}\|_k
    \;\le\;
    \|\pi^{\mathrm{warm}}-\pi^{\mathrm{proj}}\|_k
    +
    \|\pi^{\mathrm{proj}}-\pi^{\star,\mathrm{blind}}\|_k
    \;\le\;
    \kappa\,\varepsilon_{\mathrm{warm},k}^{\mathrm{mix}}
    +
    \varrho_*\|\pi^{\mathrm{warm}}-\pi^{\star,\mathrm{blind}}\|_k.
  \]
  Since $\varrho_*<1$, rearranging gives
  \[
    \|\pi^{\mathrm{warm}}-\pi^{\star,\mathrm{blind}}\|_k
    \;\le\; \frac{\kappa}{1-\varrho_*}\,\varepsilon_{\mathrm{warm},k}^{\mathrm{mix}},
    \qquad
    \|\pi^{\mathrm{proj}}-\pi^{\star,\mathrm{blind}}\|_k
    \;\le\; \frac{\varrho_*\kappa}{1-\varrho_*}\,\varepsilon_{\mathrm{warm},k}^{\mathrm{mix}}.
  \]
  Summing over slabs using $\|f\|_{L^2(\mu)}^2=\sum_k\|f\|_k^2$ yields
  \[
    \|\pi^{\mathrm{proj}}-\pi^{\star,\mathrm{blind}}\|_{L^2(\mu)}
    \;\le\; \frac{\varrho_*\kappa}{1-\varrho_*}\,\varepsilon_{\mathrm{warm}}^{\mathrm{mix}}.
  \]

  \item \textbf{Conclusion.}
  Finally, apply the triangle inequality:
  \[
    \big\|\widehat\pi^{\mathrm{agg,mix}}-\pi^{\star,\mathrm{blind}}\big\|_{L^2(\mu)}
    \;\le\;
    \big\|\widehat\pi^{\mathrm{agg,mix}}-\pi^{\mathrm{proj}}\big\|_{L^2(\mu)}
    +
    \big\|\pi^{\mathrm{proj}}-\pi^{\star,\mathrm{blind}}\big\|_{L^2(\mu)},
  \]
  and combine the bounds from Steps 1 and 4 to obtain \eqref{eq:policy-gap-bound-residual}.
\end{enumerate}

A complete proof (including the precise stable-inversion conditions implied by the stage-2 stabilizers and an explicit admissible choice of
$\delta_{\mathrm{BPTT}}$) is provided in Appendix~\ref{app:policy-gap-proof}.
\end{proof}

\paragraph{Coupling and amortization.}
Figure~\ref{fig:sec3-pipeline} summarizes the overall two-stage PG--DPO design.
Stage~2 is used purely as a warm-started \emph{post-processing} map: it reuses the current Stage~1 (DPO) policy,
estimates the adjoint blocks under frozen $\theta\sim q$, and outputs a single deployable $\theta$-blind projected rule.
For numerical stability and amortized deployment, we additionally consider (i) a residual/control-variate form of the projection
and (ii) interactive distillation of the projected rule into the policy network.
These practical coupling mechanisms are described in Appendix~\ref{app:coupling}.

\section{Breaking the Dimensional Barrier under Drift Uncertainty}
\label{sec:hd-geometry}

This section instantiates the decision-time \emph{static} Gaussian drift-uncertainty benchmark from
Section~\ref{subsubsec:gaussian-drift-static} and uses its closed-form constant-portfolio $q$-reference
as an analytic target.
Nature draws a latent drift $\theta\sim q$ at $t=0$ and keeps it constant over $[0,T]$, while the investor
cannot observe $\theta$ and must deploy a single $\theta$-blind policy under an ex--ante CRRA objective.
Because the benchmark admits a transparent decision-time reference, we can measure accuracy directly via a
decision-time Euclidean RMSE to the analytic reference, rather than relying only on realized utility.

Our goal is to test whether the \emph{two-stage PG--DPO pipeline} remains stable as the number of assets grows.
We generate APT-style covariance structures and sweep dimensions $d\in\{5,10,50,100\}$ under both
\emph{aligned} uncertainty ($P=s\Sigma$) and a \emph{misaligned} geometry that rotates uncertainty away from market risk directions.
We report results for (i) Stage~1 (\textbf{DPO}) warm-up policies, (ii) the Stage~2 \textbf{Pontryagin projection} (i.e., the
post-processed output completing \textbf{PG--DPO}), and (when applicable) amortized variants via interactive distillation,
under Monte Carlo budgets that scale linearly with $d$.

\subsection{Benchmark market and evaluation protocol}
\label{subsec:exp-setup}

This subsection fixes the benchmark and evaluation protocol used in Section~\ref{sec:hd-geometry}.
Our goal is to provide controlled evidence that the proposed two-stage pipeline remains
\emph{computationally stable and accurate} as the number of assets $d$ grows under \emph{decision-time}
parameter uncertainty.
The aligned vs.\ misaligned uncertainty geometries serve as two representative stress-test regimes;
the main message is scalability under uncertainty rather than any specific choice of $P$.

\medskip
\noindent\textbf{$\theta$-blind deployability (and what uses $\theta$).}
Throughout Section~\ref{sec:hd-geometry}, \emph{all reported policies are deployable and $\theta$-blind}:
the control is a function of observable state only (here, the observable decision-time state is $(t,x)$ with $t=0$ and a fixed horizon $T$),
and \emph{never takes the realized latent premium $\theta$ as an input}.
The latent $\theta\sim q$ is sampled \emph{only inside the simulator} to generate trajectories and to form
Monte Carlo averages that approximate $q$-expectations (in Stage~1 for stochastic gradients, and in Stage~2 for aggregated adjoint/mixed-moment inputs).
Any $\theta$-indexed objects (when referenced elsewhere) are used only for \emph{offline diagnostics} and are
not part of the deployable decision rule.

\medskip
\noindent\textbf{Static decision-time uncertainty benchmark.}
We adopt the static Gaussian drift-uncertainty market of Section~\ref{subsubsec:gaussian-drift-static},
i.e., \eqref{eq:gaussian-model} with \eqref{eq:q-gaussian}. Equivalently, we simulate
\[
  \frac{dS_t}{S_t}
  = r\,\mathbf{1}\,dt + \theta\,dt + \Sigma^{1/2} dW_t,
  \qquad
  \theta \sim \mathcal{N}(m,P),
\]
where the latent premium $\theta$ is drawn once at time $0$ and kept fixed over $[0,T]$.
The deployable policy is $\theta$-blind and interacts with $q$ only through sampling $\theta$ inside the simulator.

\medskip
\noindent\textbf{APT-style factor construction of $(m,\Sigma)$.}
We construct the mean premium and covariance via a low-dimensional factor representation in the spirit of
arbitrage pricing theory (APT) \cite{ross1976apt}.
Let $W^f$ be a $k_\Sigma$-dimensional Brownian motion (factor shocks) and $W^\varepsilon$ a $d$-dimensional
Brownian motion (idiosyncratic shocks), independent of $W^f$.
We write excess returns as
\begin{equation}
  dR_t
  :=
  \frac{dS_t}{S_t}-r\mathbf{1}\,dt
  =
  \theta\,dt
  +
  B\,\Sigma_f^{1/2}\,dW_t^f
  +
  \mathrm{diag}(\sigma_\varepsilon)\,dW_t^\varepsilon,
  \label{eq:exp_return_factor}
\end{equation}
with $B\in\mathbb{R}^{d\times k_\Sigma}$, $\Sigma_f\succ 0$, and $\sigma_\varepsilon\in(0,\infty)^d$.
This implies
\begin{equation}
  \Sigma
  =
  B\,\Sigma_f\,B^\top + \mathrm{diag}(\sigma_\varepsilon^2)
  \;=\; FF^\top+\mathrm{diag}(\sigma_\varepsilon^2),
  \label{eq:exp_sigma_factor}
\end{equation}
where $F:=B\,\mathrm{chol}(\Sigma_f)$.
We generate the mean premium in an APT-like form by drawing a factor price vector
$\lambda_m\in\mathbb{R}^{k_\Sigma}$ and setting
\begin{equation}
  m := B\,\lambda_m.
  \label{eq:exp_m_apt}
\end{equation}

\medskip
\noindent\textbf{One-shot generation and fairness across methods.}
For each dimension $d$, we generate a single market instance $(B,\Sigma_f,\sigma_\varepsilon,\lambda_m)$ once (using a fixed random seed)
and \emph{hold it fixed across all algorithmic comparisons and MC-budget variants}.
Within a fixed $d$, we change only the uncertainty covariance $P$ (aligned vs.\ misaligned and the scale $s$ below).
This isolates algorithmic effects from instance-to-instance randomness and makes the scaling comparisons controlled.

\medskip
\noindent\textbf{Uncertainty regimes (aligned vs.\ misaligned).}
We consider two geometries for the drift-uncertainty covariance $P$, controlled by a scalar magnitude $s>0$.

\emph{Aligned:} uncertainty shares market risk directions,
\begin{equation}
  P = s\,\Sigma,
  \qquad s>0.
  \label{eq:exp_aligned_P}
\end{equation}

\emph{Misaligned:} uncertainty factors are rotated away from the market factor space,
\begin{equation}
  P
  =
  \widetilde B\,\widetilde\Sigma_f\,\widetilde B^\top
  +
  s\,\mathrm{diag}(\sigma_\varepsilon^2),
  \label{eq:exp_misaligned_P}
\end{equation}
where $\widetilde B$ is generated independently of $B$ (and can be orthogonalized against the span of $B$ to enforce
large principal angles). The factor term is rescaled to match the aligned factor magnitude under the same $s$; concretely,
writing $\Sigma_{\mathrm{fac}}:=B\Sigma_f B^\top$ and $\widetilde\Sigma_{\mathrm{fac}}:=\widetilde B\,\widetilde\Sigma_f\,\widetilde B^\top$,
we rescale $\widetilde\Sigma_{\mathrm{fac}}$ so that
$\mathrm{tr}(\widetilde\Sigma_{\mathrm{fac}})=\mathrm{tr}(s\,\Sigma_{\mathrm{fac}})$
while keeping the diagonal term fixed as $s\,\mathrm{diag}(\sigma_\varepsilon^2)$.
This geometry increases heterogeneity across $\theta\sim q$ and makes mixed-moment estimation and subsequent linear-algebra
steps more fragile, providing a stringent scalability test.

\medskip
\noindent\textbf{Experiment grid and simulation budgets.}
We vary the number of assets over $d\in\{5,10,50,100\}$ and sweep three uncertainty magnitudes
$s\in\{10^{-3},10^{-2},10^{-1}\}$ for both aligned and misaligned geometries.
To keep Monte Carlo noise comparable across dimensions, we use linear-in-$d$ trajectory budgets:
a \emph{base} regime with $M_{\mathrm{MC}}=100\cdot d$ rollouts and a \emph{high} regime with $M_{\mathrm{MC}}=400\cdot d$ rollouts.
Here $M_{\mathrm{MC}}$ denotes the number of \emph{independent simulated trajectories} used to approximate the relevant $q$-expectations
(in Stage~1/DPO for stochastic gradient estimates, and in Stage~2 for adjoint/mixed-moment estimators at queried states).
Each trajectory draws its own latent $\theta\sim q$ at time $0$ and its own Brownian path thereafter; we do not use a separate inner loop
of $\theta$-samples per trajectory.
All methods share the same discretization scheme (Euler) and the same market instances within each $d$; implementation details
(network architecture, optimizer settings, and exact sampling conventions for Stage~1 vs.\ Stage~2) are reported in the implementation appendix and code release.

\medskip
\noindent\textbf{Analytic reference and decision-time evaluation.}
In the static Gaussian benchmark, the analytic decision-time reference under constant portfolios is available in closed form.
We use this closed-form rule only as an external decision-time target for evaluation; training does not impose the constant-portfolio restriction, and all methods learn from simulated trajectories over $[0,T]$ under the same $\theta$-blind constraint.
For $\gamma>1$ we use the CRRA reference \eqref{eq:pi_q_const_gaussian_ref} at $(t,\tau)=(0,T)$, i.e.,
$\pi_{q,\gamma}^{\mathrm{const}}(0,T)$.

We evaluate each method at $t=0$ with a fixed horizon $T=1.5$ on a fixed grid $\{X_0^{(i)}\}_{i=1}^{N_{\mathrm{eval}}}$ and report Euclidean RMSE to the analytic reference:
\begin{equation}
  \mathrm{RMSE}\!\big(\pi_0^{\mathrm{meth}},\pi_{q,\gamma}^{\mathrm{const}}\big)
  :=
  \bigg(
    \frac{1}{N_{\mathrm{eval}}}\sum_{i=1}^{N_{\mathrm{eval}}}
    \big\|
      \pi_0^{\mathrm{meth}}\!\big(X_0^{(i)};T\big) - \pi_{q,\gamma}^{\mathrm{const}}(0,T)
    \big\|^2
  \bigg)^{1/2},
  \label{eq:exp_rmse}
\end{equation}
where $\|\cdot\|$ is the Euclidean norm on $\mathbb{R}^d$ and $\pi_0^{\mathrm{meth}}(\cdot;T)$ denotes the method's \emph{deployable} decision-time action at time $0$
for the fixed horizon $T$ (a $\theta$-blind output rule).
With the benchmark fixed and with $(m,\Sigma,P)$ constructed as in
\eqref{eq:exp_m_apt}--\eqref{eq:exp_misaligned_P},
the remaining subsections compare (i) the Stage~1 (\textbf{DPO}) warm-up policy, (ii) the Stage~2 \textbf{Pontryagin projection} output (completing \textbf{PG--DPO}),
and (iii) interactive distillation variants, under matched simulation budgets.

\subsection{High-dimensional CRRA benchmark: projection and amortization}
\label{subsec:crra_high_dim_results}

\noindent\textbf{Mixed-moment estimation and a decoupling approximation.}
A practical issue throughout our experiments (both aligned and misaligned) is the estimation of \emph{mixed moments} across the latent parameter, such as
$\mathbb{E}_{\theta\sim q}[p_t^\theta(z)\,\theta]$ (and analogous products entering $\widehat g_t^{\mathrm{mix}}$), because the costate $p_t^\theta(z)$ is
$\theta$-dependent and high-dimensional.
Finite-sample covariance between $p_t^\theta$ and $\theta$ can create large Monte Carlo variance, which is then amplified by the subsequent linear solve in the
projection step.
For numerical stability and a uniform protocol across geometries, we therefore use a simple \emph{decoupling} (independence) approximation for these mixed moments,
\[
  \mathbb{E}_{\theta\sim q}[p_t^\theta(z)\,\theta]\;\approx\;\mathbb{E}_{\theta\sim q}[p_t^\theta(z)]\,\mathbb{E}_{\theta\sim q}[\theta],
\]
(and similarly for other mixed products).
This approximation is exact when the relevant Pontryagin objects are effectively $\theta$-invariant and is accurate whenever
$\mathrm{Cov}_{q}(p_t^\theta(z),\theta)$ is small relative to marginal scales.
In this CRRA benchmark it does not change the qualitative scaling conclusions: projection remains stable in aligned regimes, and the misaligned degradation is
consistent with residual growth and curvature mismatch rather than catastrophic mixed-moment blow-ups.\footnote{In extreme uncertainty/misalignment---where
$\theta$--costate dependence becomes pronounced---the decoupling can break down. In that case one should revert to full mixed-moment estimation (with larger
budgets and/or regularized/certified projection).}

\begin{figure*}[t!]
  \centering
  \includegraphics[width=\linewidth]{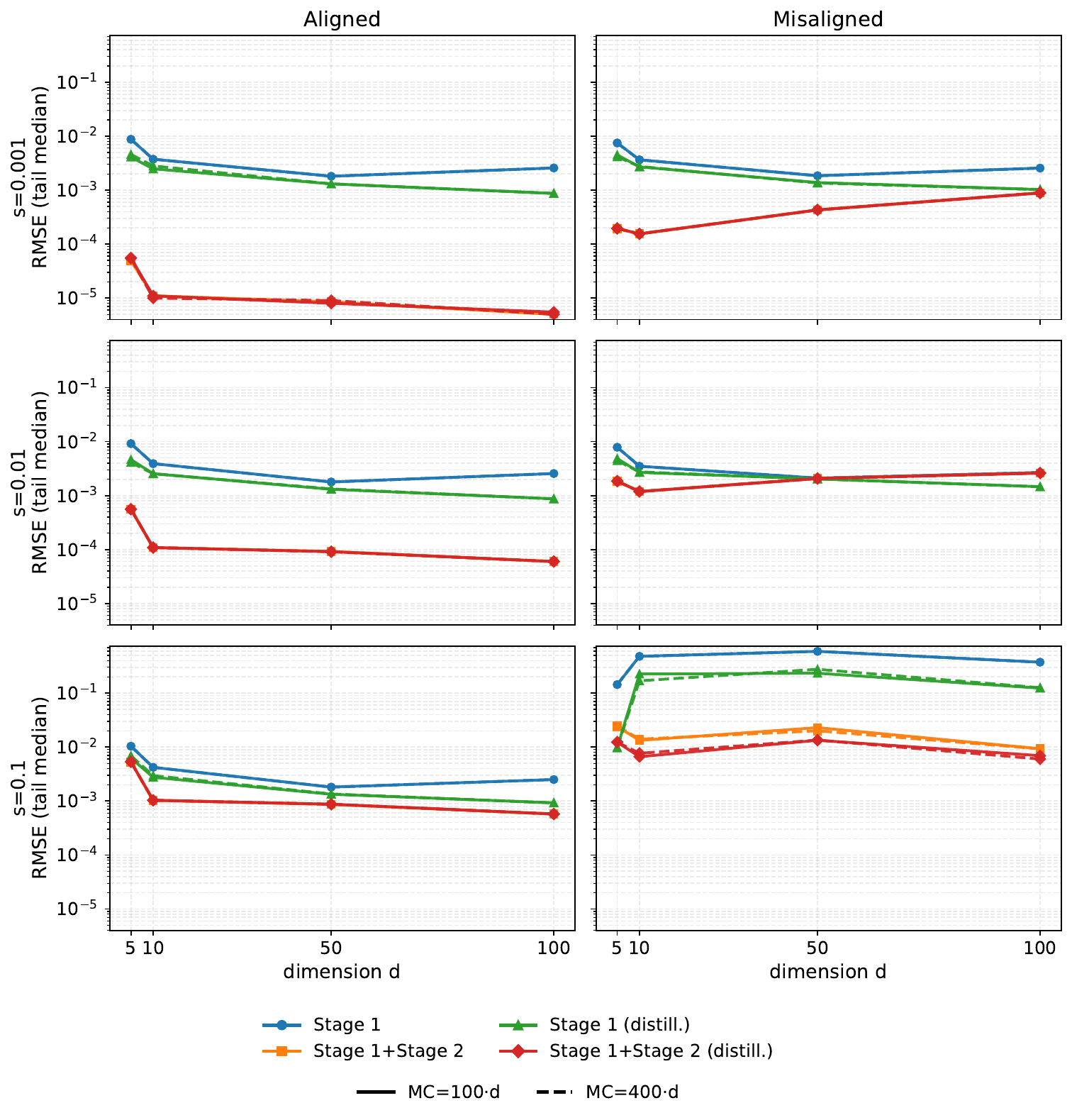}
  \caption{Decision-time Euclidean RMSE at $t=0$ versus dimension $d$ (log scale), summarized by a \emph{tail median} over the late-training window.
  Concretely, for each condition we take the median RMSE over the last evaluation snapshots recorded during training.
  Rows: uncertainty magnitude $s\in\{10^{-3},10^{-2},10^{-1}\}$. Columns: aligned vs.\ misaligned geometry.
  Curves compare the Stage~1 warm-up policy (DPO) to the Stage~2 output (Pontryagin projection; completing PG--DPO), with and without interactive distillation (amortization).
  Solid vs.\ dashed lines correspond to MC base ($100\cdot d$) vs.\ high ($400\cdot d$) trajectory budgets.}
  \label{fig:crra_scaling_rmse}
\end{figure*}

\paragraph{Protocol and summary statistic.}
We consider the CRRA benchmark with $\gamma=2$ under Gaussian drift uncertainty $q$ and evaluate against the analytic constant $q$-reference
\eqref{eq:pi_q_const_gaussian_ref}.
We track (i) the Monte Carlo objective estimate $\widehat{J}$ during training and (ii) the decision-time error at $t=0$ via RMSE \eqref{eq:exp_rmse} with
fixed horizon $T=1.5$.
Because stochastic optimization produces non-monotone and noisy RMSE curves, we summarize each condition by a robust \emph{tail median}:
the median RMSE over the final evaluation snapshots in the late-training window.
Unless stated otherwise, the Stage~2 projection / on-the-fly teacher uses the mixed-moment ($p$-weighted) aggregation of
Section~\ref{subsec:ppgdpo-uncertainty} (implemented with the decoupling approximation above).

\medskip
\noindent\textbf{What is compared in Figure~\ref{fig:crra_scaling_rmse}.}
Stage~1 (\textbf{DPO}; Section~\ref{subsec:pgdpo-bptt-pmp}) trains a deployable $\theta$-blind policy $\pi_\varphi$ by maximizing $\widehat{J}$ via pathwise
gradients through the Euler simulator.
Stage~2 (\textbf{Pontryagin projection}; Section~\ref{subsec:ppgdpo-uncertainty}) post-processes a Stage~1 checkpoint by estimating (aggregated) adjoint inputs
and applying the statewise linear solve; we use the residual form of Section~\ref{subsubsec:cv-projection}.
The resulting projected control is the output of the full \textbf{PG--DPO} pipeline.
Interactive distillation (Section~\ref{subsubsec:interactive-distillation}) treats the Stage~2 projected control as a teacher signal and amortizes it
back into a deployable policy network.

Thus Figure~\ref{fig:crra_scaling_rmse} separates \emph{projection quality} (Stage~2 / PG--DPO: post-hoc projected, still $\theta$-blind)
from \emph{amortized deployable quality} (distilled Stage~1: single forward pass).

\begin{figure*}[t]
  \centering
  \includegraphics[width=\linewidth]{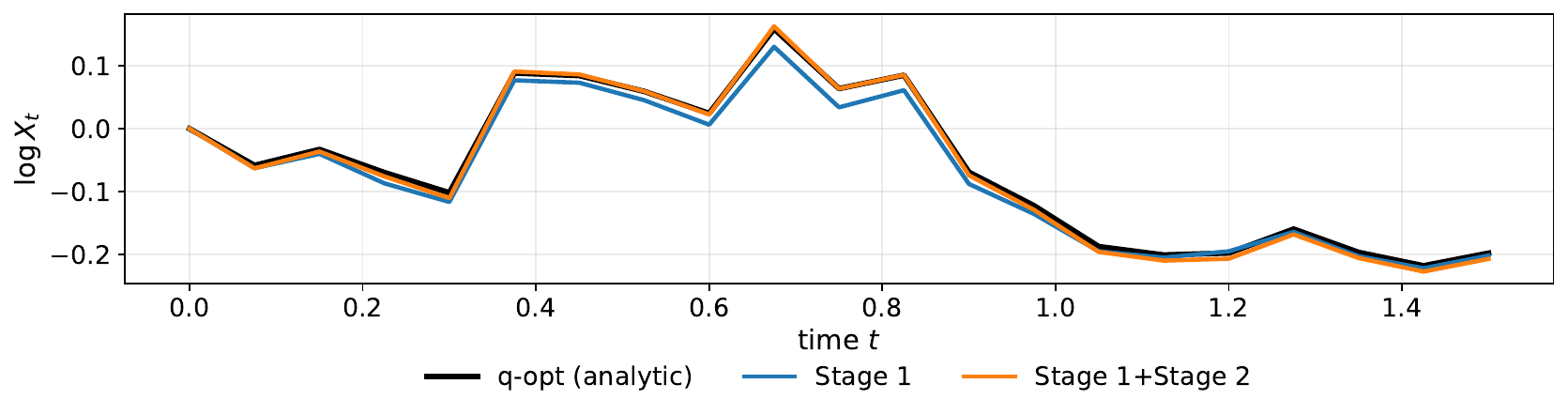}\\[2mm]
  \includegraphics[width=\linewidth]{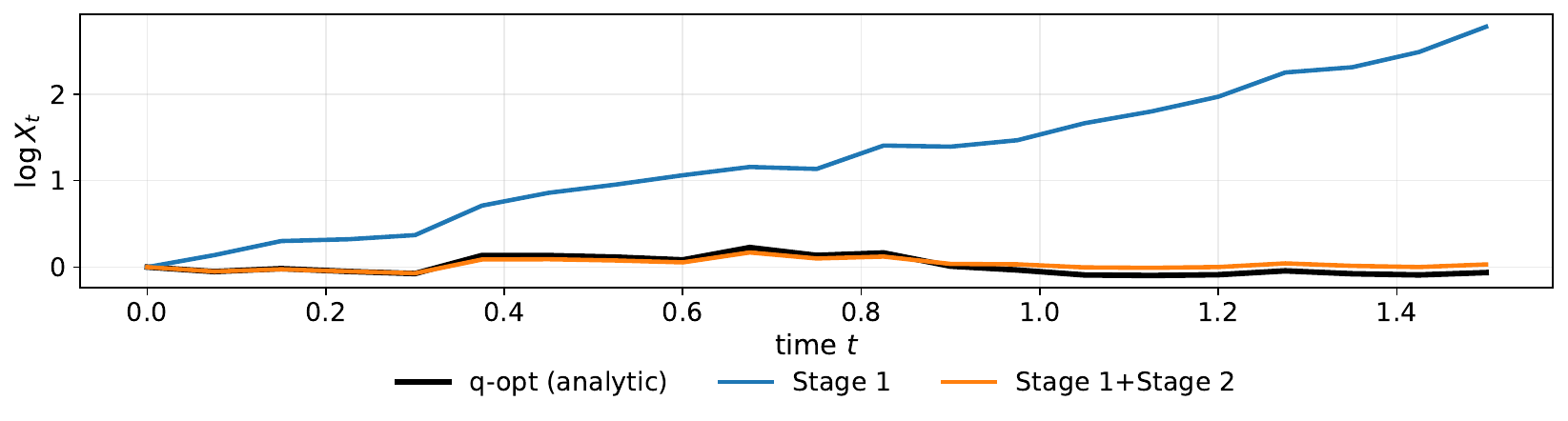}
  \caption{Pathwise sanity check at $d=100$ under common random numbers. Black: analytic Gaussian constant-fraction benchmark evaluated with the remaining horizon
  $\tau=T-t$ (fixed $q$). Blue: Stage~1 (DPO) warm-up policy. Orange: on-the-fly Stage~2 teacher obtained by Pontryagin projection at visited states (PG--DPO).
  Top/bottom: aligned/misaligned.}
  \label{fig:crra_wealth_paths_d100}
\end{figure*}

\medskip
\noindent\textbf{Stage~2 projection versus amortization: scaling with dimension.}
For the \textbf{aligned geometry}, small and moderate uncertainty ($s=10^{-3},10^{-2}$) yield a sharp Stage~2 reduction in decision-time error across all tested dimensions,
bringing RMSE down to the $10^{-5}$--$10^{-4}$ range, while Stage~1 (DPO) policies remain around $10^{-3}$.
Interactive distillation consistently improves the deployable policy (distilled Stage~1 below basic Stage~1) while leaving the Stage~2 projection essentially unchanged,
confirming the intended division of labor: Stage~2 supplies a structured stationarity-correction signal, and distillation injects that signal into $\pi_\varphi$ to
reduce the remaining approximation/optimization gap.

In contrast, under the \textbf{misaligned geometry} the picture becomes more heterogeneous.
For small to moderate uncertainty ($s=10^{-3},10^{-2}$), Stage~2 still improves decision-time RMSE at small $d$,
but its advantage shrinks with dimension and can approach the $10^{-3}$ level by $d=100$.
For the largest uncertainty scale ($s=10^{-1}$), Stage~1 (DPO) becomes markedly less reliable, whereas Stage~2 remains substantially better, indicating that
projection can act as a stabilizing correction even when end-to-end learning is stressed.
Across settings, the base and high MC budgets tend to yield similar tail-median RMSE curves, suggesting that linear-in-$d$ scaling of simulation budgets
is sufficient for stable comparisons in this benchmark.

\medskip
\noindent\textbf{Mechanism: why misalignment can reduce projection gains.}
To explain when and why the projection gains shrink, we analyze Stage~2 diagnostic statistics reported in Appendix~\ref{app:stage2-diagnostics}; see
Figures~\ref{fig:app_stage2_residual_q50}--\ref{fig:app_stage2_bad_sign_frac}.
The diagnostics indicate that the degradation under misalignment is driven primarily by increased stationarity residuals and curvature mismatch, rather
than by catastrophic denominator sign failures:
(i) the Stage~2 residual norm grows with dimension and becomes especially large in the hardest misaligned regime,
(ii) the projection denominator magnitude stays away from zero at typical quantiles, and
(iii) the bad-sign fraction remains negligible, while
(iv) the effective curvature statistic $\kappa_{\mathrm{curv}}$ stays near the nominal $1/\gamma$ reference in easy regimes but can deviate substantially in the hardest
misaligned/high-uncertainty setting.
These patterns are consistent with a geometric explanation: when $P$ and $\Sigma$ do not commute, the inverse operations implicit in the projection mix directions
of uncertainty that are misaligned with risk directions, thereby amplifying estimation noise in the projected correction as $d$ grows.

\medskip
\noindent\textbf{Pathwise sanity check.}
Figure~\ref{fig:crra_wealth_paths_d100} complements the decision-time RMSE with a trajectory-level view under common random numbers.
In the aligned case, the on-the-fly Stage~2 teacher tracks the analytic $q$-reference closely along a realized path and reduces the deviation
$\Delta\log X_t$ relative to the warm Stage~1 (DPO) policy.
In the misaligned case, the teacher can deviate more noticeably under the same common-noise protocol, mirroring the reduced projection advantage in the
hardest regimes of Figure~\ref{fig:crra_scaling_rmse} and motivating amortization/reliability mechanisms in interactive distillation.

\begin{table*}[p]
\centering
\begingroup
\sisetup{detect-weight=true,detect-family=true}
\providecommand{\best}[1]{{\bfseries\num{#1}}}

\small
\setlength{\tabcolsep}{6pt}
\renewcommand{\arraystretch}{1.10}

\textbf{Aligned}\par\vspace{2pt}
\begin{adjustbox}{max width=\textwidth}
\begin{tabular}{cl|cccc}
\toprule
$s$ & Method & $d=5$ & $10$ & $50$ & $100$ \\
\midrule
$10^{-3}$ & DPO (basic)
& \num{8.76e-03} & \num{3.74e-03} & \num{1.81e-03} & \num{2.58e-03} \\
$10^{-3}$ & PG--DPO (basic)
& \best{4.95e-05} & \best{1.10e-05} & \num{8.50e-06} & \best{5.00e-06} \\
$10^{-3}$ & DPO (distill.)
& \num{4.10e-03} & \num{2.48e-03} & \num{1.31e-03} & \num{1.70e-03} \\
$10^{-3}$ & PG--DPO (distill.)
& \num{5.55e-05} & \best{1.10e-05} & \best{8.00e-06} & \num{5.50e-06} \\
$10^{-3}$ & PPO (baseline)
& \num{2.76e-01} & \num{1.14e-01} & \num{1.39e-01} & \num{1.67e-01} \\
\midrule
$10^{-2}$ & DPO (basic)
& \num{9.19e-03} & \num{3.91e-03} & \num{1.79e-03} & \num{2.57e-03} \\
$10^{-2}$ & PG--DPO (basic)
& \best{5.57e-04} & \best{1.09e-04} & \num{9.20e-05} & \best{6.00e-05} \\
$10^{-2}$ & DPO (distill.)
& \num{4.64e-03} & \num{2.55e-03} & \num{1.31e-03} & \num{1.80e-03} \\
$10^{-2}$ & PG--DPO (distill.)
& \num{5.60e-04} & \best{1.09e-04} & \best{9.15e-05} & \best{6.00e-05} \\
$10^{-2}$ & PPO (baseline)
& \num{3.00e-01} & \num{7.88e-02} & \num{1.50e-01} & \num{1.58e-01} \\
\midrule
$10^{-1}$ & DPO (basic)
& \num{1.04e-02} & \num{4.21e-03} & \num{1.81e-03} & \num{2.50e-03} \\
$10^{-1}$ & PG--DPO (basic)
& \best{5.29e-03} & \best{1.03e-03} & \best{8.68e-04} & \best{5.76e-04} \\
$10^{-1}$ & DPO (distill.)
& \num{6.23e-03} & \num{2.76e-03} & \num{1.33e-03} & \num{1.53e-03} \\
$10^{-1}$ & PG--DPO (distill.)
& \num{5.33e-03} & \best{1.03e-03} & \num{8.73e-04} & \best{5.76e-04} \\
$10^{-1}$ & PPO (baseline)
& \num{2.70e-01} & \num{9.65e-02} & \num{1.55e-01} & \num{1.78e-01} \\
\bottomrule
\end{tabular}
\end{adjustbox}

\vspace{10pt}

\textbf{Misaligned}\par\vspace{2pt}
\begin{adjustbox}{max width=\textwidth}
\begin{tabular}{cl|cccc}
\toprule
$s$ & Method & $d=5$ & $10$ & $50$ & $100$ \\
\midrule
$10^{-3}$ & DPO (basic)
& \num{7.49e-03} & \num{3.65e-03} & \num{1.85e-03} & \num{2.56e-03} \\
$10^{-3}$ & PG--DPO (basic)
& \best{1.94e-04} & \best{1.55e-04} & \best{4.30e-04} & \best{8.89e-04} \\
$10^{-3}$ & DPO (distill.)
& \num{4.49e-03} & \num{2.70e-03} & \num{1.38e-03} & \num{1.02e-03} \\
$10^{-3}$ & PG--DPO (distill.)
& \num{1.95e-04} & \best{1.55e-04} & \best{4.30e-04} & \best{8.89e-04} \\
$10^{-3}$ & PPO (baseline)
& \num{2.87e-01} & \num{8.25e-02} & \num{1.51e-01} & \num{1.69e-01} \\
\midrule
$10^{-2}$ & DPO (basic)
& \num{7.87e-03} & \num{3.50e-03} & \num{2.11e-03} & \num{2.68e-03} \\
$10^{-2}$ & PG--DPO (basic)
& \best{1.85e-03} & \best{1.19e-03} & \num{2.07e-03} & \num{2.62e-03} \\
$10^{-2}$ & DPO (distill.)
& \num{4.86e-03} & \num{2.69e-03} & \best{2.04e-03} & \best{1.46e-03} \\
$10^{-2}$ & PG--DPO (distill.)
& \best{1.85e-03} & \best{1.19e-03} & \num{2.07e-03} & \num{2.62e-03} \\
$10^{-2}$ & PPO (baseline)
& \num{2.58e-01} & \num{8.99e-02} & \num{1.59e-01} & \num{1.38e-01} \\
\midrule
$10^{-1}$ & DPO (basic)
& \num{1.44e-01} & \num{4.81e-01} & \num{5.94e-01} & \num{3.74e-01} \\
$10^{-1}$ & PG--DPO (basic)
& \num{2.40e-02} & \num{1.33e-02} & \num{2.29e-02} & \num{9.31e-03} \\
$10^{-1}$ & DPO (distill.)
& \best{9.67e-03} & \num{2.26e-01} & \num{2.34e-01} & \num{1.24e-01} \\
$10^{-1}$ & PG--DPO (distill.)
& \num{1.23e-02} & \best{6.63e-03} & \best{1.34e-02} & \best{6.93e-03} \\
$10^{-1}$ & PPO (baseline)
& \num{2.78e-01} & \num{8.84e-02} & \num{1.47e-01} & \num{1.77e-01} \\
\bottomrule
\end{tabular}
\end{adjustbox}

\caption{Decision-time Euclidean RMSE at $t=0$ (tail median over the late-training window; last six evaluation snapshots),
computed using the common Euclidean-norm RMSE definition \eqref{eq:exp_rmse}.
DPO rows report the deployable policy output (Stage~1).
PG--DPO rows report the post-hoc Pontryagin projection output (Stage~2; residual form).
``Distill.'' rows correspond to the amortized deployable policy trained via interactive distillation.
PPO is a model-free baseline trained under the same benchmark setting (same simulator, horizon $T=1.5$, and $\theta$-blind restriction).
\textbf{Bold} entries indicate the best (lowest) RMSE among the five methods for each $(s,d)$ and geometry.}
\label{tab:crra_stage12_distill_ppo_landscape}

\endgroup
\end{table*}

\subsection{A strong RL baseline: PPO, and why it falls short in our benchmark}
\label{subsec:ppo_baseline}

\noindent\textbf{Why include PPO, and how we match the setting.}
Proximal Policy Optimization (PPO) is a widely used and robust model-free policy-gradient baseline for continuous control \citep{schulman2017proximal}.
We include PPO to answer a concrete question: can a generic, well-tuned model-free RL method recover the decision-time $q$-optimal $\theta$-blind allocation
in our high-dimensional drift-uncertainty benchmark under comparable simulation budgets?
This comparison is especially informative in our static Gaussian benchmark because the target decision-time rule is structurally simple (constant and available
in closed form), so performance gaps primarily reflect optimization difficulty and credit assignment rather than policy-class expressiveness.

Since classical HJB solvers and value-function-based deep PDE surrogates are not practical baselines in the high-dimensional uncertain regime targeted here
(Section~\ref{app:limits-dp}), PPO serves as a strong \emph{simulation-only} comparator that operates on the same sampled trajectories without exploiting
value-function PDE structure.
For a fair comparison, PPO is trained on the same Euler simulator and time discretization as our DPO/PG--DPO setup, under the same deployability restriction
(the policy never observes the latent $\theta$), and under the same terminal-utility objective.
We match simulation budgets at the level of total environment interactions (simulated trajectories $\times$ time steps); implementation and PPO tuning details
are deferred to the appendix and code release.

\medskip
\noindent\textbf{Empirical outcome.}
Table~\ref{tab:crra_stage12_distill_ppo_landscape} shows that PPO remains far from the analytic decision-time $q$-reference across essentially all
conditions, with RMSE typically on the order of $10^{-1}$.
In contrast, the Pontryagin-based pipeline attains substantially smaller errors:
in aligned regimes the Stage~2 projection (PG--DPO) reaches the $10^{-5}$--$10^{-4}$ range for small and moderate uncertainty, while in misaligned regimes the
projection advantage narrows but remains systematic.
Distillation improves the \emph{deployable} Stage~1 policy relative to basic Stage~1 (DPO), but does not eliminate the remaining gap to the post-hoc projection,
consistent with the amortization interpretation in Section~\ref{subsec:crra_high_dim_results}.

\medskip
\noindent\textbf{Why PPO underperforms in this benchmark.}
The gap is not evidence that PPO is intrinsically weak; rather, it reflects that our benchmark stresses regimes where a generic likelihood-ratio policy gradient is
statistically disadvantaged compared to pathwise/adjoint-based updates.

One contributor is \emph{terminal-reward credit assignment}: with terminal utility as the only reward signal, most intermediate transitions carry no direct learning signal,
so advantage estimates become noisy and the gradient signal weakens as the horizon and dimension grow.
A second contributor is \emph{high-dimensional action noise and variance scaling}: in continuous control, exploration is typically implemented through stochastic policies
(e.g., diagonal Gaussians), and as $d$ increases the effective exploration noise and likelihood-ratio gradient variance can inflate rapidly, making it difficult to
consistently recover the precise constant-fraction structure of the $q$-optimal $\theta$-blind rule under fixed simulation budgets.

In contrast, Stage~1 (DPO) exploits backpropagation through the differentiable simulator (pathwise gradients), and Stage~2 leverages the affine-in-control Pontryagin
structure through a $q$-aggregated projection, replacing a noisy high-dimensional policy-gradient update by a structured stationarity correction tailored to the
$\theta$-blind ex--ante objective.


\section{Recovering Intertemporal Hedging Demand in Factor-Driven Markets}
\label{sec:hedging-recovery}

Section~\ref{sec:hd-geometry} stressed \emph{scaling} under static drift uncertainty, where the target
$q$-reference is time-homogeneous and largely myopic. Here we shift the focus to an \emph{economic}
target: recovering the \emph{intertemporal hedging demand} induced by factor-driven investment
opportunities when return shocks are correlated with factor shocks \citep{campbell2002strategic,xia2001learning}.

We adopt the mean-reverting Gaussian premium benchmark of
Section~\ref{subsubsec:gaussian-drift-ou-belief}. At decision time ($t=0$), the investor is assumed to
have access only to an interval/posterior uncertainty description $q_0(dy)$ for the initial premium
state $Y_0$ (e.g., $Y_0\sim\mathcal{N}(m_0,P_0)$), and we restrict attention to \emph{deployable $Y$-blind}
policies that depend only on observable wealth and time-to-go.
When return and factor shocks are correlated ($\rho\neq 0$), the analytic OU decision-time reference
\eqref{eq:pi-q-const-ou} contains the cross term $M_{\mathrm{cross}}(\tau)$ (with $\tau:=T-t$) that
encodes intertemporal hedging demand.

We evaluate: (i) Stage~1 \textbf{DPO} (a $Y$-blind warm-up trained by pathwise/BPTT gradients),
(ii) Stage~2 \textbf{Pontryagin projection} completing \textbf{PG--DPO} (a post-hoc $q_0$-aggregated stationarity correction),
and (iii) interactive distillation (amortization), against a model-free PPO baseline---all under the same $Y$-blind deployability restriction.
Performance is measured by decision-time RMSE at $t=0$ relative to the analytic constant-portfolio OU
reference \eqref{eq:pi-q-const-ou}.

\subsection{Experimental setting}
\label{subsec:hedging_setting}

\medskip
\noindent\textbf{Decision-time interval/posterior estimate for $Y_0$ and deployability.}
Throughout this section, all reported policies are \emph{deployable and $Y$-blind}: the control is a
function of observable wealth and time-to-go only, and never takes as input (i) a point estimate of
$Y_0$, (ii) the realized initial premium $Y_0$, or (iii) the factor path $(Y_t)$ (including the PPO
baseline). At deployment, the investor is assumed to have access only to an uncertainty description
$q_0(dy)$ for $Y_0$ (interval/posterior output of an external estimation pipeline), and commits to a
single policy optimized \emph{ex ante} under $Y_0\sim q_0$ (without filtering/updating over $[0,T]$).

The latent premium factor is sampled and propagated \emph{only inside the simulator} to generate
trajectories and to form Monte Carlo averages used by the Stage~2 projection (and by the teacher in
distillation). Any $Y$-indexed quantities are used only for offline evaluation and diagnostics.

\medskip
\noindent\textbf{OU premium market with a hedging channel.}
We adopt the OU premium benchmark of Section~\ref{subsubsec:gaussian-drift-ou-belief}. To avoid a
notational clash with other uses of $\kappa$ elsewhere in the paper, we denote the OU mean-reversion
rate by $\kappa_Y$ in this section. Let $Y_t\in\mathbb{R}^{m}$ be a mean-reverting premium factor and
$R_t\in\mathbb{R}^{d}$ the risky excess returns:
\begin{align*}
  dY_t &= \kappa_Y(\bar y-Y_t)\,dt + \Xi\,dW_t^Y,
  \qquad Y_0 \sim \mathcal{N}(m_0,P_0),\\
  dR_t &:= \frac{dS_t}{S_t}-r\mathbf{1}\,dt
  = B Y_t\,dt + \Sigma^{1/2}\,dW_t,\\
  d\langle W, W^Y\rangle_t &= \rho\,dt.
\end{align*}
Here $\rho\in\mathbb{R}^{d\times m}$ denotes the instantaneous return--factor shock correlation matrix,
satisfying the feasibility condition $\rho^\top\rho \preceq I_m$. A nonzero $\rho$ induces
intertemporal hedging demand and enters the CRRA decision-time reference through the cross-covariance
term $M_{\mathrm{cross}}(\tau)$ in \eqref{eq:pi-q-const-ou}, where $\tau:=T-t$ is time-to-go (so at
decision time $t=0$, $\tau=T$). When $\rho=0$ (independent return and factor shocks), the hedging
channel vanishes ($M_{\mathrm{cross}}(\tau)=0$) and the reference reduces to the independence-case
benchmark.

\medskip
\noindent\textbf{Decision-time uncertainty geometry for $Y_0\sim \mathcal{N}(m_0,P_0)$.}
We control the magnitude of decision-time estimation uncertainty by a scalar $s_0>0$ and construct
$P_0$ from an identification-motivated baseline
\[
  \widetilde P_0 := (B^\top \Sigma^{-1} B)^{-1}\in\mathbb{R}^{m\times m}.
\]
We consider two geometries. In the \emph{aligned} case, we keep the principal directions of
$\widetilde P_0$ and rescale it so that the average marginal variance equals $s_0$:
\begin{equation}
  P_0^{\mathrm{aligned}}(s_0)
  := \frac{s_0\,m}{\mathrm{tr}(\widetilde P_0)}\,\widetilde P_0,
  \label{eq:P0_aligned}
\end{equation}
so that $\mathrm{tr}(P_0^{\mathrm{aligned}})/m=s_0$. In the \emph{misaligned} case, we preserve the
eigenvalue spectrum of $\widetilde P_0$ but randomize its eigenvectors via an orthogonal rotation:
letting $\widetilde P_0 = U\mathrm{diag}(\lambda)U^\top$ be an eigen-decomposition and drawing an
orthogonal matrix $R$ (e.g., Haar), we define
\begin{equation}
  P_0^{\mathrm{misaligned}}(s_0)
  := \frac{s_0\,m}{\mathrm{tr}(\widetilde P_0)}\,
     U R\,\mathrm{diag}(\lambda)\,R^\top U^\top,
  \label{eq:P0_misaligned}
\end{equation}
which matches the same trace normalization while rotating the uncertainty directions away from those
of $\widetilde P_0$. We sweep $s_0\in\{10^{-3},10^{-2},10^{-1}\}$ under both aligned and misaligned $P_0$.

\medskip
\noindent\textbf{Two-stage solver, amortization, and evaluation protocol.}
We use the two-stage pipeline of Section~\ref{sec:pgdpo-uncertainty}. Stage~1 (\textbf{DPO}) trains a deployable
$Y$-blind policy by stochastic gradient ascent using pathwise/BPTT gradients
(Section~\ref{subsec:pgdpo-bptt-pmp}). Stage~2 (\textbf{Pontryagin projection}) applies the $q_0$-aggregated stationarity
correction computed under a warm-up policy (Section~\ref{subsec:ppgdpo-uncertainty}), implemented in the residual/control-variate
form (Section~\ref{subsubsec:cv-projection}); the resulting projected rule is the output of \textbf{PG--DPO}.
Interactive distillation amortizes the projected teacher into a fast deployable policy network
(Section~\ref{subsubsec:interactive-distillation}). As a model-free baseline, we also train a PPO policy under the same
$Y$-blind observation restriction and report its decision-time full RMSE in Table~\ref{tab:ou_rmse_s0_sweep_landscape}.

We sweep $d\in\{5,10,50,100\}$ (one fixed market instance per $d$, following the same market-generation
protocol used in Section~\ref{sec:hd-geometry}), train for $5000$ epochs, and evaluate every $100$ epochs.
We evaluate the decision-time action at $t=0$ (i.e., $\tau=T$) and report RMSE to the analytic constant-portfolio
OU reference \eqref{eq:pi-q-const-ou}.

Unless stated otherwise we set $\gamma=2$, $r=0.03$, $\kappa_Y=1.0$, $\xi_{\mathrm{scale}}=0.25$, and a
correlation \emph{magnitude} $\rho_0=0.5$. We instantiate the model parameters as follows:
(i) we take $\Xi=\xi_{\mathrm{scale}} I_m$,
(ii) we implement the return--factor correlation by choosing a matrix $\rho=\rho_0 Q$ with
$Q\in\mathbb{R}^{d\times m}$ having orthonormal columns ($Q^\top Q=I_m$), ensuring $\rho^\top\rho=\rho_0^2 I_m\preceq I_m$.
(A simple choice is $Q=[I_m;0]\in\mathbb{R}^{d\times m}$ when $m\le d$; any fixed orthonormal-column $Q$ yields an equivalent correlation magnitude.)
We fix the horizon to $T=1.5$ in this section.

In addition to the full allocation error, we use the myopic+hedging decomposition induced by the OU
structure, and report component-wise diagnostics for the projected (Stage~2 / PG--DPO) rules.
To reduce noise from stochastic optimization, for each condition we summarize each metric by a \emph{tail median} over
the last six evaluation checkpoints.

\subsection{Results: hedging-demand recovery, amortization, and robustness to decision-time uncertainty}
\label{subsec:hedging_results}

\begin{table*}[p]
\centering
\begingroup
\sisetup{print-zero-exponent=true,detect-weight=true,detect-family=true}
\providecommand{\best}[1]{{\bfseries\num{#1}}}

\small
\setlength{\tabcolsep}{5pt}
\renewcommand{\arraystretch}{1.08}

\textbf{Aligned $P_0$}\par\vspace{2pt}
\begin{adjustbox}{max width=\textwidth}
\begin{tabular}{cl|cccc}
\toprule
$s_0$ & Method & $d=5$ & $10$ & $50$ & $100$ \\
\midrule
$10^{-3}$ & DPO (basic)
& \num{6.31e-03} & \num{5.19e-03} & \num{4.01e-03} & \num{3.54e-03} \\
$10^{-3}$ & PG--DPO (basic)
& \num{4.71e-05} & \best{5.10e-05} & \num{1.39e-04} & \num{1.56e-04} \\
$10^{-3}$ & DPO (distill.)
& \num{2.46e-03} & \num{3.57e-03} & \num{3.58e-03} & \num{3.22e-03} \\
$10^{-3}$ & PG--DPO (distill.)
& \best{4.43e-05} & \num{5.39e-05} & \best{1.37e-04} & \best{1.42e-04} \\
$10^{-3}$ & PPO (baseline)
& \num{7.78e-02} & \num{1.03e-01} & \num{4.20e+00} & \num{2.41e+00} \\
\midrule
$10^{-2}$ & DPO (basic)
& \num{5.61e-03} & \num{4.83e-03} & \num{3.83e-03} & \num{3.50e-03} \\
$10^{-2}$ & PG--DPO (basic)
& \num{5.03e-05} & \best{4.54e-05} & \best{1.45e-04} & \num{1.75e-04} \\
$10^{-2}$ & DPO (distill.)
& \num{3.08e-03} & \num{3.16e-03} & \num{3.53e-03} & \num{3.21e-03} \\
$10^{-2}$ & PG--DPO (distill.)
& \best{4.48e-05} & \num{5.11e-05} & \num{1.47e-04} & \best{1.56e-04} \\
$10^{-2}$ & PPO (baseline)
& \num{7.30e-02} & \num{9.15e+00} & \num{4.27e+00} & \num{2.43e+00} \\
\midrule
$10^{-1}$ & DPO (basic)
& \num{6.93e-03} & \num{4.53e-03} & \num{3.96e-03} & \num{3.47e-03} \\
$10^{-1}$ & PG--DPO (basic)
& \num{5.50e-05} & \best{4.16e-05} & \num{2.63e-04} & \num{2.97e-04} \\
$10^{-1}$ & DPO (distill.)
& \num{2.85e-03} & \num{3.70e-03} & \num{3.65e-03} & \num{3.28e-03} \\
$10^{-1}$ & PG--DPO (distill.)
& \best{4.77e-05} & \num{5.50e-05} & \best{2.60e-04} & \best{2.88e-04} \\
$10^{-1}$ & PPO (baseline)
& \num{7.34e-02} & \num{1.00e-01} & \num{4.22e+00} & \num{2.64e+00} \\
\bottomrule
\end{tabular}
\end{adjustbox}

\vspace{10pt}

\textbf{Misaligned $P_0$}\par\vspace{2pt}
\begin{adjustbox}{max width=\textwidth}
\begin{tabular}{cl|cccc}
\toprule
$s_0$ & Method & $d=5$ & $10$ & $50$ & $100$ \\
\midrule
$10^{-3}$ & DPO (basic)
& \num{6.11e-03} & \num{5.40e-03} & \num{3.87e-03} & \num{3.62e-03} \\
$10^{-3}$ & PG--DPO (basic)
& \num{5.12e-05} & \best{5.11e-05} & \best{1.35e-04} & \num{1.57e-04} \\
$10^{-3}$ & DPO (distill.)
& \num{3.51e-03} & \num{2.93e-03} & \num{3.56e-03} & \num{3.21e-03} \\
$10^{-3}$ & PG--DPO (distill.)
& \best{4.36e-05} & \num{5.48e-05} & \num{1.41e-04} & \best{1.44e-04} \\
$10^{-3}$ & PPO (baseline)
& \num{8.96e-02} & \num{9.28e-02} & \num{4.37e+00} & \num{2.46e+00} \\
\midrule
$10^{-2}$ & DPO (basic)
& \num{6.17e-03} & \num{4.89e-03} & \num{3.89e-03} & \num{3.57e-03} \\
$10^{-2}$ & PG--DPO (basic)
& \num{4.99e-05} & \best{5.96e-05} & \best{1.54e-04} & \num{1.75e-04} \\
$10^{-2}$ & DPO (distill.)
& \num{2.84e-03} & \num{3.97e-03} & \num{3.68e-03} & \num{3.21e-03} \\
$10^{-2}$ & PG--DPO (distill.)
& \best{4.63e-05} & \num{6.31e-05} & \best{1.54e-04} & \best{1.57e-04} \\
$10^{-2}$ & PPO (baseline)
& \num{8.02e-02} & \num{7.75e+00} & \num{4.47e+00} & \num{2.53e+00} \\
\midrule
$10^{-1}$ & DPO (basic)
& \num{6.09e-03} & \num{4.17e-03} & \num{3.97e-03} & \num{3.56e-03} \\
$10^{-1}$ & PG--DPO (basic)
& \num{5.63e-05} & \best{2.44e-04} & \num{3.24e-04} & \num{3.29e-04} \\
$10^{-1}$ & DPO (distill.)
& \num{3.22e-03} & \num{3.08e-03} & \num{3.65e-03} & \num{3.18e-03} \\
$10^{-1}$ & PG--DPO (distill.)
& \best{5.31e-05} & \num{2.46e-04} & \best{3.22e-04} & \best{3.21e-04} \\
$10^{-1}$ & PPO (baseline)
& \num{6.30e-02} & \num{1.08e-01} & \num{4.26e+00} & \num{2.46e+00} \\
\bottomrule
\end{tabular}
\end{adjustbox}

\caption{Decision-time RMSE at $t=0$ in the OU premium benchmark (tail median over the last six evaluation checkpoints), sweeping the decision-time uncertainty scale $s_0$ in $Y_0\sim\mathcal{N}(m_0,P_0)$ under both aligned \eqref{eq:P0_aligned} and misaligned \eqref{eq:P0_misaligned} geometries.
DPO rows report the Stage~1 warm-up (deployable) policy output.
PG--DPO rows report the Stage~2 post-hoc $q_0$-aggregated Pontryagin projection output (residual/control-variate implementation; Section~\ref{subsubsec:cv-projection}) computed under the corresponding warm policy.
``Distill.'' rows use interactive distillation (Section~\ref{subsubsec:interactive-distillation}) to amortize the projected teacher into a deployable network.
RMSE is computed identically for all methods (no method-specific or post-hoc scale correction).
The PPO baseline is frequently unstable in this terminal-only, $Y$-blind setting; very large entries (e.g., $s_0=10^{-2},\,d=10$) reflect genuine training divergence rather than typographical error.}
\label{tab:ou_rmse_s0_sweep_landscape}

\endgroup
\end{table*}

\medskip
\noindent\textbf{Metrics and diagnostics.}
Table~\ref{tab:ou_rmse_s0_sweep_landscape} reports the full decision-time RMSE at $t=0$ for all
deployable policies, including the PPO baseline. To isolate the economic hedging channel, we use the
OU-induced decomposition of the analytic reference
\[
  \pi^{\mathrm{ref}}(\tau)=\pi^{\mathrm{myo}}(\tau)+\pi^{\mathrm{hedge}}(\tau),
\]
where $\pi^{\mathrm{myo}}$ is the independence-case reference obtained by setting $\rho=0$
(equivalently, dropping $M_{\mathrm{cross}}(\tau)$ in \eqref{eq:pi-q-const-ou}), and
$\pi^{\mathrm{hedge}}:=\pi^{\mathrm{ref}}-\pi^{\mathrm{myo}}$ is the residual $\rho$-dependent term.
We report the hedging-component RMSE for the projected (Stage~2 / PG--DPO) rules in
Table~\ref{tab:ou_hedge_rmse_s0_sweep_landscape}; the myopic-component RMSE and hedging-direction cosine
similarity are deferred to Appendix~\ref{app:hedging_decomp}. Since Stage~1 (DPO) and PPO do not expose a
compatible myopic/hedging split under our diagnostic protocol, component-wise diagnostics are reported
only for the projected rules.

\medskip
\noindent\textbf{Projection and economic hedging-demand recovery.}
Across all $d$ and $s_0$, the post-hoc Pontryagin projection (PG--DPO) substantially reduces
decision-time RMSE relative to the deployable Stage~1 warm-up (DPO) policy
(Table~\ref{tab:ou_rmse_s0_sweep_landscape}).
For example, under aligned $P_0$ with $s_0=10^{-3}$ and $d=100$, DPO attains $\num{3.54e-03}$ whereas
PG--DPO achieves $\num{1.56e-04}$.
The component-wise diagnostics indicate that the remaining discrepancy is largely driven by the hedging channel:
in the same setting, the hedging RMSE is $\num{1.55e-04}$ (basic) and $\num{1.42e-04}$ (distill.)
(Table~\ref{tab:ou_hedge_rmse_s0_sweep_landscape}),
while the myopic RMSE is an order of magnitude smaller (Appendix Table~\ref{tab:ou_myo_rmse_s0_sweep_landscape}).
This pattern is consistent with the economic mechanism in this benchmark: once the (mostly) myopic component is captured,
the dominant remaining challenge is to recover the intertemporal hedge induced by correlated return--factor shocks.

\medskip
\noindent\textbf{Amortization, robustness, and the PPO baseline.}
Interactive distillation improves the \emph{deployable} Stage~1 policy relative to basic Stage~1 (DPO),
while the most accurate object remains the post-hoc projected policy (PG--DPO)
(Table~\ref{tab:ou_rmse_s0_sweep_landscape}).
This matches the intended division of labor in Section~\ref{subsec:stage1-stage2-coupling}: Stage~2 provides
a structured stationarity-correction signal through the aggregated Pontryagin projection, and distillation
amortizes that correction into a single forward pass, up to policy-class approximation limits.

As the decision-time uncertainty scale $s_0$ increases, both the full RMSE and the hedging-component RMSE
increase, with the most visible degradation at $s_0=10^{-1}$, especially at larger dimensions
(Tables~\ref{tab:ou_rmse_s0_sweep_landscape}--\ref{tab:ou_hedge_rmse_s0_sweep_landscape}).
Misalignment has a limited effect for small and moderate uncertainty scales, but can induce noticeable deterioration in the hardest settings,
where the direction-of-hedge diagnostic can also weaken (Appendix Table~\ref{tab:ou_hedge_cos_s0_sweep_landscape}).

\begin{table}[t!]
\centering
\begingroup
\sisetup{detect-weight=true,detect-family=true}
\providecommand{\best}[1]{{\bfseries\num{#1}}}

\small
\setlength{\tabcolsep}{6pt}
\renewcommand{\arraystretch}{1.10}

\begin{adjustbox}{max width=\columnwidth}
\begin{tabular}{c l cccc}
\toprule
$s_0$ & Method & $d=5$ & $10$ & $50$ & $100$ \\
\midrule
\multicolumn{6}{l}{\textbf{Aligned $P_0$}}\\
\midrule
$10^{-3}$ & PG--DPO (basic)    & \num{4.59e-05} & \best{4.87e-05} & \num{1.37e-04} & \num{1.55e-04} \\
$10^{-3}$ & PG--DPO (distill.) & \best{4.39e-05} & \num{5.25e-05} & \best{1.37e-04} & \best{1.42e-04} \\
\midrule
$10^{-2}$ & PG--DPO (basic)    & \num{4.98e-05} & \best{4.47e-05} & \best{1.44e-04} & \num{1.74e-04} \\
$10^{-2}$ & PG--DPO (distill.) & \best{4.43e-05} & \num{4.95e-05} & \num{1.47e-04} & \best{1.55e-04} \\
\midrule
$10^{-1}$ & PG--DPO (basic)    & \num{5.27e-05} & \best{3.99e-05} & \num{2.60e-04} & \num{2.95e-04} \\
$10^{-1}$ & PG--DPO (distill.) & \best{4.72e-05} & \num{5.36e-05} & \best{2.58e-04} & \best{2.86e-04} \\
\midrule
\multicolumn{6}{l}{\textbf{Misaligned $P_0$}}\\
\midrule
$10^{-3}$ & PG--DPO (basic)    & \num{5.00e-05} & \best{4.87e-05} & \best{1.34e-04} & \num{1.57e-04} \\
$10^{-3}$ & PG--DPO (distill.) & \best{4.30e-05} & \num{5.33e-05} & \num{1.40e-04} & \best{1.43e-04} \\
\midrule
$10^{-2}$ & PG--DPO (basic)    & \num{4.93e-05} & \best{5.65e-05} & \best{1.53e-04} & \num{1.75e-04} \\
$10^{-2}$ & PG--DPO (distill.) & \best{4.56e-05} & \num{6.09e-05} & \best{1.53e-04} & \best{1.56e-04} \\
\midrule
$10^{-1}$ & PG--DPO (basic)    & \num{5.45e-05} & \best{1.55e-04} & \num{3.19e-04} & \num{3.28e-04} \\
$10^{-1}$ & PG--DPO (distill.) & \best{5.20e-05} & \num{1.57e-04} & \best{3.13e-04} & \best{3.20e-04} \\
\bottomrule
\end{tabular}
\end{adjustbox}

\caption{Decision-time RMSE at $t=0$ for the \emph{hedging component} of the OU decision-time reference,
evaluated on the projected (Stage~2 / PG--DPO) policies (tail median over the last six evaluation checkpoints).
We use the decomposition $\pi^{\mathrm{ref}}(\tau)=\pi^{\mathrm{myo}}(\tau)+\pi^{\mathrm{hedge}}(\tau)$ where
$\pi^{\mathrm{myo}}$ is the $\rho=0$ reference and $\pi^{\mathrm{hedge}}$ is the residual $\rho$-dependent term
(cf.\ \eqref{eq:pi-q-const-ou}). Component-wise diagnostics are reported for projected rules since Stage~1 (DPO) and PPO do
not explicitly output a compatible myopic/hedging split under our protocol.}
\label{tab:ou_hedge_rmse_s0_sweep_landscape}

\endgroup
\end{table}

Finally, the PPO baseline remains far from the analytic OU reference under the same $Y$-blind deployability restriction,
and can be unstable in this terminal-only setting (Table~\ref{tab:ou_rmse_s0_sweep_landscape}).
This is consistent with PPO facing a difficult credit-assignment problem under latent-factor heterogeneity,
in contrast to the pathwise-sensitivity and affine-in-control correction exploited by our two-stage pipeline.
Since PPO does not provide a compatible myopic/hedging decomposition under our evaluation protocol, we include it only in the full-RMSE table.


\section{Conclusion}

We studied continuous-time CRRA portfolio choice in diffusion markets whose coefficients are estimated and therefore statistically uncertain
(Section~\ref{subsec:model-objective}).
Our modeling choice is to treat estimation risk as an exogenous \emph{decision-time} input, represented by a law $q(d\theta)$ over a latent parameter $\theta$:
Nature draws $\theta$ once at time $0$ and keeps it fixed over the investment horizon, while the investor does not observe $\theta$ and must deploy a single
$\theta$-blind Markov feedback policy evaluated under an ex--ante objective (Remark~\ref{rem:latent-theta}).
Under this information structure, the appropriate optimality notion is not the infeasible $\theta$-conditional (full-information) criticality, but rather a
\emph{$q$-aggregated} Pontryagin first-order condition that is enforceable within the deployable $\theta$-blind policy class
(Section~\ref{subsec:pmp-latent}, Theorem~\ref{thm:q-agg-foc-theta-blind}).

Methodologically, we proposed a simulation-only two-stage solver, PG--DPO, that matches this deployability constraint.
In Stage~1, DPO optimizes the ex--ante objective by sampling $\theta$ only inside the simulator and performing stochastic gradient ascent via exact discrete-time
pathwise gradients computed by BPTT through an Euler discretization (Section~\ref{subsec:pgdpo-bptt-pmp}).
We then leverage a conditional BPTT--PMP correspondence to extract the costate objects needed for structured control updates, including the second-adjoint blocks
entering the portfolio Hamiltonian gradient (Theorem~\ref{thm:bptt-pmp-uncertainty}).
In Stage~2, we apply a $q$-aggregated Pontryagin projection: we aggregate Monte Carlo Pontryagin quantities across $\theta\sim q$ and project onto the deployable
$q$-aggregated stationarity condition to obtain a single $\theta$-blind rule (Section~\ref{subsec:ppgdpo-uncertainty}).
On the theory side, we established a residual-based ex--ante policy-gap bound under local stability of the aggregated projection map, with discretization and Monte
Carlo errors made explicit (Theorem~\ref{thm:policy-gap-residual}).

Empirically, across benchmarks the projection step provides the dominant gains, improving stability and decision-time accuracy in high-dimensional drift-uncertainty
scaling tests and recovering intertemporal hedging demand in factor-driven markets, while interactive distillation amortizes the projected rule into a fast
deployable network (Sections~\ref{sec:hd-geometry} and~\ref{sec:hedging-recovery}; Section~\ref{subsec:stage1-stage2-coupling}).

Several extensions are natural.
A first direction is to allow time-varying uncertainty descriptions $q_t$ (e.g., produced by an external filter) and connect the present fixed-$q$ projection to
decision-time plug-in replanning and belief-aware decision rules (Remark~\ref{rem:belief-state-vs-agg}, Appendix~\ref{app:kalman-bucy}).
A second direction is to incorporate realistic frictions and constraints---such as transaction costs, leverage, and short-sale limits---and develop certified or
regularized projection steps for regimes where mixed-moment estimation becomes fragile (Section~\ref{app:limits-dp};
Section~\ref{subsec:crra_high_dim_results}; Appendix~\ref{app:impl-stabilizers}).
Finally, applying the framework to large cross-sectional datasets with modern estimation pipelines would further clarify the practical benefits of inference-agnostic,
simulation-only optimization under parameter uncertainty (Section~\ref{sec:intro}).

\section*{Acknowledgments}

This work was supported by the National Research Foundation of Korea (NRF) grant
funded by the Korea government (MSIT) (RS-2025-00562904).

\bibliographystyle{apalike}
\bibliography{gf_bib}


\appendix

\section{Derivations for Gaussian decision-time references}
\label{app:gaussian-references-derivations}

This appendix derives the closed-form decision-time constant-portfolio references stated in Section~\ref{subsec:gaussian-drift}.
Throughout, we fix a decision time $t$ and hold the input law $q_t$ fixed over the remaining horizon $[t,T]$ (plug-in replanning).

\subsection{Static latent drift: tilted FOC and Gaussian shrinkage}
\label{app:gaussian-static-derivation}

Consider the static latent-drift model \eqref{eq:gaussian-model}--\eqref{eq:wealth-gaussian} and restrict to constant portfolio fractions
$\pi_s\equiv\pi$ on $[t,T]$.
Let $\tau:=T-t$ and CRRA utility $U(x)=x^{1-\gamma}/(1-\gamma)$.

\paragraph{Step 1: conditional evaluation under fixed $\theta$.}
For constant $\pi$, conditional on $\theta$ the wealth satisfies
\[
  \log\frac{X_T^\pi}{x}
  =
  \Big(r + \pi^\top \theta - \tfrac12 \pi^\top\Sigma\pi\Big)\tau
  + \pi^\top \Sigma^{1/2}(W_T-W_t),
\]
hence $X_T^\pi$ is lognormal and
\begin{equation}
  \mathbb{E}\big[U(X_T^\pi)\mid \theta,\mathcal{F}_t\big]
  =
  \frac{x^{1-\gamma}}{1-\gamma}\,
  \exp\Big\{ (1-\gamma)r\tau + (1-\gamma)\tau\,\pi^\top\theta
             - \tfrac12 \gamma(1-\gamma)\tau\,\pi^\top\Sigma\pi \Big\}.
  \label{eq:app-crra-cond-constantpi}
\end{equation}

\paragraph{Step 2: ex-ante mixing over $q_t$.}
Mixing over $\theta\sim q_t$ yields
\begin{equation}
  J_t(\pi)
  =
  \frac{x^{1-\gamma}}{1-\gamma}\,
  \exp\Big\{ (1-\gamma)r\tau - \tfrac12 \gamma(1-\gamma)\tau\,\pi^\top\Sigma\pi \Big\}\,
  M_{q_t}\big((1-\gamma)\tau\pi\big),
  \label{eq:crra-exante-constantpi}
\end{equation}
where $M_{q_t}(u):=\mathbb{E}_{\theta\sim q_t}[\exp(u^\top\theta)]$ is the mgf of $q_t$.

\paragraph{Step 3: tilted first-order condition.}
Maximizing $J_t(\pi)$ is equivalent to maximizing $\log J_t(\pi)$.
Differentiating \eqref{eq:crra-exante-constantpi} and using interior optimality gives the tilted FOC
\begin{equation}
  \gamma\,\Sigma\,\pi
  =
  \nabla_u \log M_{q_t}(u)\Big|_{u=(1-\gamma)\tau\,\pi}.
  \label{eq:tilted-FOC-general-q}
\end{equation}

\paragraph{Step 4: Gaussian input law.}
If $q_t=\mathcal{N}(m_t,P_t)$, then $M_{q_t}(u)=\exp\{u^\top m_t+\tfrac12 u^\top P_t u\}$ and
$\nabla_u\log M_{q_t}(u)=m_t+P_tu$.
Substituting into \eqref{eq:tilted-FOC-general-q} yields the linear system
\begin{equation}
  \big(\gamma\Sigma - (1-\gamma)\tau\,P_t\big)\,\pi
  = m_t,
  \label{eq:linear-system-crra-gaussian}
\end{equation}
and hence
\begin{equation}
  \pi
  =
  \big(\gamma\Sigma - (1-\gamma)\tau\,P_t\big)^{-1} m_t
  =
  \big(\gamma\Sigma + (\gamma-1)\tau\,P_t\big)^{-1} m_t,
  \qquad (\gamma>1),
  \label{eq:pi_q_const_gaussian_ref_app}
\end{equation}
which matches \eqref{eq:pi_q_const_gaussian_ref} in the main text.

\subsection{OU premium factor: induced effective law and constant-portfolio reference}
\label{app:gaussian-ou-derivation}

We derive the OU-based reference stated in \eqref{eq:pi-q-const-ou}.
Fix $t$ and let $\tau:=T-t$.
Under \eqref{eq:theta-ou-belief}--\eqref{eq:ou-corr}, we take as decision-time input $Y_t\sim\mathcal{N}(m_t,P_t)$ and hold this input fixed over $[t,T]$.

\subsubsection{Integrated factor: mean and covariance}
\label{app:ou-integrated-moments}

Define $I_{t,T}:=\int_t^T Y_s\,ds$.
Assume $K$ is positive stable (equivalently, $-K$ is Hurwitz), so that $e^{-Ku}$ decays as $u\to\infty$.
For $u\ge 0$ define
\[
  \mathcal{A}(u):=K^{-1}\big(I_m-e^{-Ku}\big),
\]
where $I_m$ is the $m\times m$ identity matrix.
The OU solution is
\[
  Y_{t+u}
  =
  \bar y + e^{-Ku}(Y_t-\bar y) + \int_0^u e^{-K(u-s)}\Xi\,dW_{t+s}^Y.
\]
Integrating over $u\in[0,\tau]$ gives
\[
  I_{t,T}
  =
  \tau\bar y + \mathcal{A}(\tau)(Y_t-\bar y) + \int_0^\tau \mathcal{A}(u)\,\Xi\,dW_{t+u}^Y.
\]
Taking expectation under the input law yields
\begin{equation}
  m_I(t,\tau)
  :=
  \mathbb{E}[I_{t,T}]
  =
  \tau\bar y + \mathcal{A}(\tau)\,(m_t-\bar y),
  \label{eq:mean-integrated-premium-tau}
\end{equation}
and the covariance decomposition gives
\begin{equation}
  C_I(t,\tau)
  :=
  \mathrm{Cov}(I_{t,T})
  =
  \mathcal{A}(\tau)\,P_t\,\mathcal{A}(\tau)^\top
  + \int_0^\tau \mathcal{A}(u)\,\Xi\Xi^\top\,\mathcal{A}(u)^\top\,du.
  \label{eq:cov-integrated-premium-tau}
\end{equation}

\subsubsection{Effective premium law}
\label{app:effective-premium-law}

Define the horizon-averaged effective premium
\[
  \bar\theta_{t,\tau} := \frac{1}{\tau}B\,I_{t,T}.
\]
Then $\bar\theta_{t,\tau}$ is Gaussian with
\begin{equation}
  m_{\bar\theta}(t,\tau)=\frac{1}{\tau}B\,m_I(t,\tau),
  \qquad
  P_{\bar\theta}(t,\tau)=\frac{1}{\tau^2}B\,C_I(t,\tau)\,B^\top,
  \label{eq:theta-bar-law-tau-app}
\end{equation}
matching \eqref{eq:theta-bar-law-tau}.

\subsubsection{Cross term induced by return--factor correlation}
\label{app:cross-term}

Let $\Delta W:=W_T-W_t$.
Using $d\langle W, W^Y\rangle_s=\rho\,ds$, we obtain
\begin{equation}
  C_{IW}(\tau)
  :=
  \mathrm{Cov}\!\big(I_{t,T},\,\Delta W\big)
  =
  \int_0^\tau \mathcal{A}(u)\,\Xi\,\rho^\top\,du
  \;\in\;\mathbb{R}^{m\times d},
  \label{eq:cov_IW_tau}
\end{equation}
and the induced symmetric cross term
\begin{equation}
  M_{\mathrm{cross}}(\tau)
  :=
  B\,C_{IW}(\tau)\,\big(\Sigma^{1/2}\big)^\top
  +
  \Sigma^{1/2}\,C_{IW}(\tau)^\top\,B^\top
  \;\in\;\mathbb{R}^{d\times d}.
  \label{eq:M_cross_tau}
\end{equation}

\subsubsection{Constant-portfolio CRRA objective and FOC}
\label{app:ou-constant-portfolio-foc}

Restrict to constant $\pi$ over $[t,T]$.
Write the excess-return integral as
\[
  \int_t^T \pi^\top dR_s
  =
  \pi^\top B I_{t,T} + \pi^\top \Sigma^{1/2}\Delta W.
\]
As in the static case, CRRA evaluation reduces to the log-mgf of a jointly Gaussian vector.
Collecting the terms involving $\pi$ and applying the interior first-order condition yields the linear system
\begin{equation}
  \Big(\gamma \tau\Sigma + (\gamma-1)\big(B\,C_I(t,\tau)\,B^\top + M_{\mathrm{cross}}(\tau)\big)\Big)\,
  \pi
  = B\,m_I(t,\tau),
  \qquad (\gamma>1),
  \label{eq:pi-q-const-ou-app}
\end{equation}
which matches \eqref{eq:pi-q-const-ou} in the main text.

\paragraph{Independence case.}
If $\rho=0$, then $C_{IW}(\tau)=0$ and $M_{\mathrm{cross}}(\tau)=0$.
In this case, \eqref{eq:pi-q-const-ou-app} reduces to the static Gaussian shrinkage formula applied to the effective law
\eqref{eq:theta-bar-law-tau-app}.

\section{Online uncertainty updates: Kalman--Bucy filtering and a plug-in decision-time benchmark}
\label{app:kalman-bucy}

\noindent\textbf{Purpose and scope.}
Sections~\ref{subsubsec:gaussian-drift-static} and \ref{subsubsec:gaussian-drift-ou-belief} focus on \emph{decision-time} benchmarks in which an
uncertainty description $q$ is treated as given and the investor optimizes under the corresponding $\theta$-blind deployability constraint.
In practice, however, new data arrive and the uncertainty description is updated over time by an external estimation/filtering engine, a viewpoint
that aligns with learning/estimation-risk portfolio choice and Bayesian decision-time formulations \citep{barberis2000investor,pastor2000portfolio,xia2001learning}.
This appendix records a simple linear--Gaussian example in which such an updated description $q_t$ arises endogenously via a Kalman--Bucy filter,
and then formalizes a \emph{plug-in} workflow: at each decision time, treat the current uncertainty description $q_t$ as given and compute a decision-time
optimal control under that $q_t$.

We emphasize that solving the fully optimal partial-observation (belief-state) control problem is \emph{not} the goal of this paper; rather, we view the
resulting $q_t$ as an external input to decision-time optimization. In particular, the simulation-based Pontryagin-guided solvers developed in
Section~\ref{sec:pgdpo-uncertainty} can be used as inner-loop engines that are refreshed whenever a new uncertainty description $q_t$ becomes available.

\medskip
\noindent\textbf{A linear--Gaussian hidden-premium model (OU state, observed returns).}
Let $Y_t\in\mathbb{R}^m$ be a latent premium factor following an OU dynamics
\begin{equation}
  dY_t = K(\bar y - Y_t)\,dt + \Xi\,dW_t^{Y},
  \qquad
  Y_0\sim \mathcal{N}(m_0,P_0),
  \label{eq:online_kb_ou_state_full}
\end{equation}
where $K$ is positive stable (equivalently $-K$ is Hurwitz), $\bar y\in\mathbb{R}^m$, and $\Xi\in\mathbb{R}^{m\times m}$.
Risky assets satisfy
\begin{equation}
  \frac{dS_t}{S_t}
  = r\mathbf{1}\,dt + B Y_t\,dt + \Sigma^{1/2}\,dW_t,
  \qquad \Sigma\in\mathbb{R}^{d\times d}\ \text{s.p.d.},
  \label{eq:online_kb_return_full}
\end{equation}
with $B\in\mathbb{R}^{d\times m}$.
Equivalently, the investor observes the excess-return signal
\begin{equation}
  dZ_t := \frac{dS_t}{S_t} - r\mathbf{1}\,dt
  = B Y_t\,dt + \Sigma^{1/2}\,dW_t.
  \label{eq:online_kb_obs_full}
\end{equation}
We write $\mathbb{F}^Z=(\mathcal{F}_t^Z)_{t\in[0,T]}$ for the filtration generated by $(Z_s)_{s\le t}$.

For clarity, we present the independent-noise case $W\perp W^Y$; the correlated-noise extension remains linear--Gaussian but leads to more cumbersome gain formulas.

\medskip
\noindent\textbf{Kalman--Bucy posterior $q_t=\mathcal{L}(Y_t\mid \mathcal{F}_t^Z)$.}
Under \eqref{eq:online_kb_ou_state_full}--\eqref{eq:online_kb_obs_full}, the conditional law of the latent factor remains Gaussian:
\begin{equation}
  q_t(dy) := \mathcal{L}(Y_t\mid \mathcal{F}_t^Z) = \mathcal{N}(\hat Y_t, P_t),
  \label{eq:online_kb_posterior_full}
\end{equation}
where $(\hat Y_t,P_t)$ satisfy the Kalman--Bucy equations
\begin{align}
  d\hat Y_t
  &= K(\bar y-\hat Y_t)\,dt
   + P_t B^\top \Sigma^{-1}\Big(dZ_t - B\hat Y_t\,dt\Big),
  \label{eq:online_kb_filter_mean_full}\\
  \dot P_t
  &= -K P_t - P_t K^\top + \Xi\Xi^\top - P_t B^\top \Sigma^{-1} B P_t,
  \qquad P_0\ \text{given}.
  \label{eq:online_kb_riccati_full}
\end{align}
Thus, even though $q_t$ is distribution-valued, in this affine/Gaussian regime it is fully characterized by the finite-dimensional sufficient statistics
$(\hat Y_t,P_t)$, with $P_t$ evolving deterministically via \eqref{eq:online_kb_riccati_full}.

\medskip
\noindent\textbf{From a posterior on $Y_t$ to a Gaussian uncertainty description for decision-time optimization.}
Define the remaining-horizon time-averaged premium
\begin{equation}
  \bar \theta_{t,\tau}
  := \frac{1}{\tau}\int_t^T B Y_s\,ds
  \in\mathbb{R}^d,
  \qquad \tau:=T-t.
  \label{eq:online_kb_theta_bar_def_full}
\end{equation}
Let $\mathcal{A}(u):=K^{-1}(I_m-e^{-Ku})$, where $I_m$ is the $m\times m$ identity matrix.
The integrated factor admits the representation
\begin{equation}
  \int_t^T Y_s\,ds
  =
  \tau \bar y
  + \mathcal{A}(\tau)\,(Y_t-\bar y)
  + \int_0^\tau \mathcal{A}(u)\,\Xi\,dW_{t+u}^Y.
  \label{eq:online_kb_integrated_ou_full}
\end{equation}
Conditioning on $\mathcal{F}_t^Z$, we have $Y_t\mid\mathcal{F}_t^Z\sim\mathcal{N}(\hat Y_t,P_t)$ by \eqref{eq:online_kb_posterior_full}, and the future increments
$(W_{t+u}^Y-W_t^Y)_{u\ge 0}$ are independent of $\mathcal{F}_t^Z$ in the independent-noise case. Hence $\bar\theta_{t,\tau}\mid \mathcal{F}_t^Z$
is Gaussian:
\begin{equation}
  \bar\theta_{t,\tau}\mid \mathcal{F}_t^Z
  \sim
  \mathcal{N}\!\big(m_{t,\tau},\,P_{t,\tau}\big),
  \qquad
  m_{t,\tau}:=\frac{B\,m_I(t,\tau)}{\tau},
  \qquad
  P_{t,\tau}:=\frac{1}{\tau^2}\,B\,C_I(t,\tau)\,B^\top,
  \label{eq:online_kb_theta_bar_conditional_gaussian_full}
\end{equation}
where
\begin{align}
  m_I(t,\tau)
  &:= \tau \bar y + \mathcal{A}(\tau)\,(\hat Y_t-\bar y),
  \label{eq:online_kb_mI_full}\\
  C_I(t,\tau)
  &:= \mathcal{A}(\tau)\,P_t\,\mathcal{A}(\tau)^\top
   + \int_0^{\tau} \mathcal{A}(u)\,\Xi\Xi^\top\,\mathcal{A}(u)^\top\,du.
  \label{eq:online_kb_CI_full}
\end{align}
Equation \eqref{eq:online_kb_theta_bar_conditional_gaussian_full} thus provides an explicit example of an \emph{online-updated} Gaussian uncertainty description
\(
  q_{t,\tau} := \mathcal{L}(\bar\theta_{t,\tau}\mid\mathcal{F}_t^Z)=\mathcal{N}(m_{t,\tau},P_{t,\tau}).
\)

\medskip
\noindent\textbf{A plug-in decision-time benchmark (receding-horizon fixed-$q_{t,\tau}$).}
Given $(m_{t,\tau},P_{t,\tau})$ from \eqref{eq:online_kb_theta_bar_conditional_gaussian_full}, a simple decision-time rule is obtained by treating $q_{t,\tau}$ as
fixed over the remaining horizon and applying the Gaussian constant-allocation benchmark of Section~\ref{subsubsec:gaussian-drift-static} with horizon $\tau$:
\begin{equation}
  \pi_{t}^{\mathrm{plug}}
  :=
  \Big(\gamma\,\Sigma + (\gamma-1)\tau\,P_{t,\tau}\Big)^{-1} m_{t,\tau}.
  \label{eq:online_kb_plugin_const_rule_full}
\end{equation}
One may interpret \eqref{eq:online_kb_plugin_const_rule_full} as a \emph{receding-horizon} decision-time policy driven by an externally updated uncertainty
description, consistent with the ``update beliefs, then optimize'' workflow used in Bayesian/learning-based portfolio choice
\citep{barberis2000investor,pastor2000portfolio,xia2001learning}.

\medskip
\noindent\textbf{Remarks (relation to belief-aware control).}
The plug-in rule \eqref{eq:online_kb_plugin_const_rule_full} is intentionally decision-time: it conditions on the current uncertainty description and does
not optimize over how the posterior will evolve. Even in linear--Gaussian regimes where the belief state is finite-dimensional, the \emph{fully optimal}
partial-observation portfolio problem would treat the belief state (here, $(\hat Y_t,P_t)$) as part of the controlled state and optimize the policy in that
enlarged state space \citep{bensoussan1985optimal,pham2017dynamic}.
Related necessary conditions under partial information can also be expressed via partial-observation maximum principles
\citep{haussmann1987maximum,li1995general,baghery2007maximum}.
Developing a belief-aware Pontryagin-guided policy optimizer that operates directly in belief-state space is an important direction that we defer to future work.


\section{Why dynamic programming and deep PDE surrogates are not primary baselines under the $\theta$-blind fixed-$q$ interface}
\label{app:limits-dp}

This appendix clarifies why we do \emph{not} treat classical dynamic programming (DP/HJB) or value-function-based deep PDE surrogates
(PINNs / deep BSDE methods) as primary baselines for the high-dimensional uncertain markets targeted in this paper.
The main obstruction is \emph{not} that PDE solvers are ``invalid'';
rather, under our \emph{$\theta$-blind fixed-$q$ deployability} interface, the \emph{natural DP/PDE/BSDE target is typically not a low-dimensional Markov value function}.
Derivative instability and high-dimensional conditioning issues are secondary to this target mismatch.

\subsection{The value-function target for latent $\theta$ is a belief-state functional}
\label{app:limits-dp-belief}

\paragraph{Fixed-$q$ ex--ante objective and $\theta$-blind deployability.}
Our decision-time interface is
\begin{equation}
  \sup_{\pi\in\mathcal{A}^{\mathrm{obs}}}
  J(\pi)
  \;:=\;
  \sup_{\pi\in\mathcal{A}^{\mathrm{obs}}}
  \mathbb{E}_{\theta\sim q}\!\left[
    \mathbb{E}\!\left[U(X_T^\pi)\mid \theta\right]
  \right],
  \label{eq:app-exante-objective}
\end{equation}
where $\theta\sim q$ is latent, fixed along each episode, and $\mathcal{A}^{\mathrm{obs}}$ denotes the class of \emph{deployable} controls,
i.e.\ controls measurable w.r.t.\ the \emph{observable} filtration (so deployed policies cannot condition on $\theta$).
(One may further restrict to the feedback subclass $\mathcal{A}^{\mathrm{fb}}$ as defined in Section~\ref{subsec:model-objective}.)

\paragraph{Why $(t,X_t,Y_t)$ is not a closed DP state under latent $\theta$.}
Even if the \emph{prior} $q$ is fixed at decision time (and even if the controller is \emph{constrained} to be $\theta$-blind),
the \emph{conditional law} of $\theta$ given observations typically evolves as data arrive.
Formally, the continuation value at time $t$ depends on the conditional law
\begin{equation}
  q_t(\cdot) := \mathcal{L}\!\big(\theta \mid \mathcal{F}_t^{\mathrm{obs}}\big),
  \label{eq:app-belief-def}
\end{equation}
not only on $(t,X_t,Y_t)$.
This is the standard partially observed control phenomenon: the \emph{unconditional} (mixed-over-$\theta$) law of the observed state
is generally \emph{not Markov} in $(X_t,Y_t)$, and DP becomes Markov only after augmenting the state by a belief
(or a finite-dimensional sufficient statistic when available) \citep{bensoussan1985optimal,pham2017dynamic}.

\paragraph{Belief-state DP is infinite-dimensional under our interface.}
The principled value object is therefore a functional of a probability measure:
\begin{equation}
  V(t,x,y,\eta)
  :=
  \sup_{\pi\in\mathcal{A}^{\mathrm{obs}}}
  \mathbb{E}^{t,x,y,\eta}\!\left[U(X_T^\pi)\right],
  \qquad \eta\in\mathcal{P}(\Theta),
  \label{eq:app-belief-value}
\end{equation}
where $\mathbb{E}^{t,x,y,\eta}$ denotes the expectation for a system started from $(t,x,y)$ with latent parameter $\theta\sim\eta$.
Hence the associated DP/HJB is posed on a (typically infinite-dimensional) space of measures in general \citep{bensoussan1985optimal,pham2017dynamic}.

Under our inference-agnostic interface, $q$ need not be conjugate and we do not specify a filtering model that would keep $q_t$ in a tractable
finite-dimensional family. Consequently, any value-function baseline that assumes a low-dimensional Markov target is, by construction,
solving a \emph{different} object from the $\theta$-blind fixed-$q$ problem studied in this paper.

\subsection{Even with deterministic parameters, value-function baselines are indirect under our interface}
\label{app:limits-dp-hjb-derivatives}

When parameters are deterministic, DP leads to an HJB for a Markov value function $V(t,x,y)$ \citep{fleming2006controlled,pham2009continuous}.
In portfolio problems, however, optimal feedback rules are typically recovered through nonlinear operations on \emph{derivatives} of $V$, e.g.
\begin{equation}
  \pi^\star(t,x,y)
  \;\propto\;
  -\frac{V_x(t,x,y)}{V_{xx}(t,x,y)} \times (\text{risk model})^{-1}\times(\text{drift model}),
  \label{eq:app-policy-from-derivatives}
\end{equation}
with intertemporal hedging driven by mixed sensitivities such as $V_{xy}$.
This ``derivative bottleneck'' is a practical issue for grid-based solvers \citep{bellman1961adaptive,kushner2001numerical} and also affects neural surrogates,
but it is \emph{not} the core reason we exclude DP/PDE baselines here. The core reason is the belief-state mismatch in
Section~\ref{app:limits-dp-belief}.

\subsection{Deep PDE surrogates (PINNs / deep BSDE) do not resolve the target mismatch}
\label{app:limits-dp-deep-pde}

PINNs and deep BSDE methods replace grids with neural approximators trained on sampled points/paths
\citep{raissi2019physics,sirignano2018dgm,han2018solving,beck2019machine,hure2020deep}.
However, both remain value-function-based PDE/BSDE solvers: they implicitly assume a (finite-dimensional) Markov target for the value/decoupling field.
Under latent $\theta$, a solver must therefore choose between:
\begin{itemize}
\item learning $\theta$-conditional objects (making the policy/value depend on $\theta$, hence \emph{not deployable}); or
\item enforcing a $\theta$-agnostic Markov representation (e.g.\ a single $V(t,x,y)$) by sampling $\theta$ only inside the fitting loss, which generally
does \emph{not} correspond to the $\theta$-blind ex--ante optimum \eqref{eq:app-exante-objective}.
\end{itemize}
The next two subsections make this mismatch explicit.

\subsection{Why $\theta$-conditional value functions do not solve the $\theta$-blind ex--ante problem}
\label{app:limits-dp-theta-conditional}

\paragraph{$\theta$-conditional HJB / BSDE.}
If $\theta$ were observed, the full-information value function $V^\theta$ solves
\begin{equation}
  0
  =
  \partial_t V^\theta(t,x,y)
  + \sup_{\pi\in\mathbb{R}^d}
  \Big\{
    \mathcal{L}^{\pi,\theta} V^\theta(t,x,y)
  \Big\},
  \qquad
  V^\theta(T,x,y)=U(x),
  \label{eq:app-hjb-theta}
\end{equation}
\citep{fleming2006controlled,pham2009continuous}, yielding a $\theta$-dependent optimizer $\pi^\star(t,x,y,\theta)$.
This is infeasible under our $\theta$-blind deployability restriction.

\paragraph{Supremum--expectation non-commutativity is the structural obstruction.}
Even perfect access to $\theta$-conditional optimizers does not solve the $\theta$-blind objective because
\begin{equation}
  \sup_{\pi\in\mathcal{A}^{\mathrm{obs}}}
  \mathbb{E}_{\theta\sim q}\!\left[J^\theta(\pi)\right]
  \;\le\;
  \mathbb{E}_{\theta\sim q}\!\left[\sup_{\pi} J^\theta(\pi)\right],
  \qquad
  J^\theta(\pi):=\mathbb{E}\!\left[U(X_T^\pi)\mid\theta\right],
  \label{eq:app-sup-expectation-gap}
\end{equation}
and the inequality is typically strict. The right-hand side corresponds to an (infeasible) $\theta$-adaptive controller.
Equivalently, our Pontryagin stationarity is $q$-aggregated over \emph{$\theta$-blind variations}, not enforced parameter-by-parameter.

\subsection{Why ``sampling $\theta$ only in the PDE/BSDE loss'' is mismatched to the ex--ante objective}
\label{app:limits-dp-collocation}

A common workaround is to learn a single value/decoupling network that does \emph{not} take $\theta$ as input,
and sample $\theta\sim q$ only inside a training loss.
For PINNs this appears as an averaged collocation residual; for deep BSDE methods it appears as an averaged pathwise BSDE/HJB fitting objective.
Schematically,
\begin{equation}
  \min_\varphi\;
  \mathbb{E}_{\theta\sim q}\!\Big[
    \big\| \mathrm{HJB}_\theta\!\big(V_\varphi\big)(t,x,y)\big\|^2
  \Big],
  \qquad \text{(schematic)}
  \label{eq:app-avg-residual}
\end{equation}
where $\mathrm{HJB}_\theta(\cdot)$ denotes the full-information HJB operator for parameter $\theta$.

The issue is conceptual (target mismatch), not merely numerical:
\begin{itemize}
\item \textbf{Conflicting optimality conditions.} Different $\theta$ impose different local optimality conditions at the same $(t,x,y)$, so there is generally no
single low-dimensional Markov $V$ (or decoupling field) that simultaneously satisfies $\mathrm{HJB}_\theta(V)=0$ for many $\theta$.
\item \textbf{Loss mismatch.} Minimizing an averaged PDE/BSDE fitting loss such as \eqref{eq:app-avg-residual} is not equivalent to maximizing expected terminal utility
under $\theta$-blind controls in \eqref{eq:app-exante-objective}.
\item \textbf{Wrong stationarity notion.} Our deployable optimum is characterized by a \emph{$q$-aggregated} first-order condition with $\theta$-blind admissible variations;
averaging \emph{full-information} HJB/BSDE conditions across $\theta$ targets a different stationarity concept.
\end{itemize}

Therefore, while $\theta$-sampling can be a sensible \emph{approximation heuristic}, it does not produce a principled DP/PDE/BSDE baseline aligned with the
fixed-$q$, $\theta$-blind optimality notion studied in this paper.

\subsection{A toy calculation: $\mathbb{E}[\pi^\star(\theta)]\neq \pi_q^\star$ under CRRA}
\label{app:limits-dp-toy}

The aggregation issue is visible even in a one-asset Merton model with static latent drift:
\[
  \frac{dS_t}{S_t}= r\,dt + \theta\,dt + \sigma\,dW_t,
  \qquad
  \theta\sim\mathcal{N}(m,P),
\]
and constant fractions $\pi$ over remaining horizon $\tau$.
For CRRA utility $U(x)=x^{1-\gamma}/(1-\gamma)$, $\gamma>1$, the $\theta$-conditional full-information rule is
\begin{equation}
  \pi^\star(\theta)=\frac{\theta}{\gamma\sigma^2}.
  \label{eq:app-merton-theta-conditional}
\end{equation}
Averaging gives $\mathbb{E}[\pi^\star(\theta)]=m/(\gamma\sigma^2)$.
By contrast, the deployable $\theta$-blind ex--ante optimizer within the constant-fraction class is
\begin{equation}
  \pi_{q}^{\mathrm{const}}(\tau)
  =
  \frac{m}{\gamma\sigma^2 + (\gamma-1)\tau\,P},
  \qquad (\gamma>1),
  \label{eq:app-merton-q-opt}
\end{equation}
which exhibits shrinkage and horizon dependence. Thus $\mathbb{E}[\pi^\star(\theta)]\neq \pi_q^{\mathrm{const}}(\tau)$ whenever $P>0$.

\subsection{Implication for baselines in this paper}
\label{app:limits-dp-implication}

The discussion above motivates our baseline choices:
\begin{itemize}
\item We validate on controlled decision-time Gaussian benchmarks that match the fixed-$q$ deployability restriction.
\item For learning-based comparisons, we emphasize simulation-only direct policy optimization methods that operate under the same observation restriction
(no access to $\theta$; $\theta$ sampled only inside the simulator), rather than value-function PDE/BSDE solvers whose natural DP target is belief-augmented.
\end{itemize}


\section{Proof of Theorem~\ref{thm:bptt-pmp-uncertainty}}
\label{app:bptt-pmp-proof}

\begin{proof}[Proof of Theorem~\ref{thm:bptt-pmp-uncertainty}]
Theorem~\ref{thm:bptt-pmp-uncertainty} extends the BPTT--PMP (equivalently, BPTT--BSDE) correspondence established
for deterministic-parameter models in our prior work on PG--DPO (see~\citet{huh2025breaking}).
Here the only substantive change is that the market coefficients are indexed by a random but
\emph{frozen} parameter \(\theta\sim q\), and we need convergence statements that hold
\emph{conditionally on \(\theta\)} and \emph{uniformly over \(\theta\)} in compact subsets of \(\Theta\).

\medskip

\noindent\textbf{Important remark (what this proof does \emph{not} use).}
This proof concerns the \(\theta\)-conditional Pontryagin adjoint/costate for the fixed-\(\theta\)
control problem induced by \eqref{eq:wealth-theta-section2}--\eqref{eq:factor-theta-section2}.
It does \emph{not} use the $q$-aggregated $\theta$-blind stationarity condition
(e.g.\ Theorem~\ref{thm:q-agg-foc-theta-blind}), which enters only the stage~2 projection target.

\medskip

\noindent\textbf{Notation, filtration, and the pathwise-vs-adapted issue.}
Fix \(\theta\in\Theta\). We work conditionally on this \(\theta\) and consider the augmented
(simulator) filtration
\[
  \mathbb{G}^\theta := (\mathcal{G}_t^\theta)_{t\in[0,T]},
  \qquad
  \mathcal{G}_t^\theta := \sigma\!\big(\theta,\{W_s,W_s^Y:0\le s\le t\}\big)
  \ \text{(with the usual augmentation)}.
\]
All conditional expectations and conditional \(L^2\) projections below are taken w.r.t.\ \(\mathcal{G}_{t_k}^\theta\).

\emph{Key point.}
The raw BPTT ``adjoints'' are \emph{pathwise derivatives} of the terminal utility along the full
discrete computation graph. Hence, for \(k<N\), they are typically \(\mathcal{G}_T^\theta\)-measurable and
need \emph{not} be \(\mathcal{G}_{t_k}^\theta\)-measurable (they can depend on future increments).
In contrast, the stochastic maximum principle (SMP/PMP) costate is \(\mathbb{G}^\theta\)-adapted.
Therefore, the correspondence is formulated \emph{after} a standard one-step conditional \(L^2\)
projection (equivalently, a discrete-time BSDE representation), which turns pathwise objects
into adapted discrete adjoints.

\medskip

Let \(W\) be \(d_W\)-dimensional and \(W^Y\) be \(d_Y\)-dimensional (allowing instantaneous correlation).
Fix \(\Delta t>0\), \(t_k:=k\Delta t\), \(k=0,\dots,N\), with \(N\Delta t=T\).
For readability we suppress the policy parameters \(\varphi\) and write
\(\pi_k := \pi_\varphi(t_k,X_k^{\Delta t,\theta},Y_k^{\Delta t,\theta})\), where \(\pi_\varphi\) is \(\theta\)-blind.

\medskip

\noindent\textbf{Standing assumptions (smooth Markov regime; uniform on compacts).}
Fix a compact set \(\Theta_0\subset\Theta\). Assume the following hold \emph{uniformly for \(\theta\in \Theta_0\)}:
\begin{itemize}
  \item \textbf{Uniform well-posedness and moments.}
  The coefficients in \eqref{eq:wealth-theta-section2}--\eqref{eq:factor-theta-section2} satisfy the usual
  Lipschitz and linear-growth conditions (in the state variables), yielding unique strong solutions and uniform \(L^2\) moment bounds.
  \item \textbf{Uniform smoothness.}
  The coefficients are sufficiently smooth in state variables so that the \(\theta\)-conditional adjoint BSDE is well posed and the
  \emph{state-derivative blocks used in the portfolio Hamiltonian gradient} exist and are square-integrable.
  (It is sufficient to assume \(a,b,\beta,\sigma\) are \(C^2\) in state variables with derivatives obeying uniform-in-\(\theta\) growth/Lipschitz bounds on \(\Theta_0\).)
  \item \textbf{Policy regularity.}
  The deployed feedback \((t,x,y)\mapsto \pi_\varphi(t,x,y)\) is \(C^2\) in \((x,y)\) on the working domain,
  with the required derivatives uniformly bounded there.
  \item \textbf{Smooth decoupling field (identifying the blocks used here).}
  In the smooth Markov regime, the \(\theta\)-conditional adjoint admits a decoupling field
  \(p_t^\theta = u^\theta(t,X_t^{\pi,\theta},Y_t^\theta)\), and we may identify
  \(p_{x,t}^\theta:=\partial_x u^\theta(t,X_t^{\pi,\theta},Y_t^\theta)\),
  \(p_{y,t}^\theta:=\partial_y u^\theta(t,X_t^{\pi,\theta},Y_t^\theta)\).
  These are precisely the sensitivity blocks needed for the portfolio Hamiltonian gradient.
  \item \textbf{One-step projection nondegeneracy.}
  The conditional covariance matrix of \((\Delta W_k,\Delta W_k^Y)\) given \(\mathcal{G}_{t_k}^\theta\) is uniformly nondegenerate on \(\Theta_0\),
  so that the one-step conditional \(L^2\)-projection onto \(\mathrm{span}\{1,\Delta W_k,\Delta W_k^Y\}\) is well defined with unique coefficients.
\end{itemize}
These are the deterministic assumptions used in~\citet{huh2025breaking}, now imposed \emph{conditionally on \(\theta\)}
and \emph{uniformly} over \(\theta\in \Theta_0\).

\medskip

\noindent\textbf{Step 1: Conditioning on \(\theta\) and uniformity of forward Euler error.}
Under the standing assumptions, for each fixed \(\theta\in \Theta_0\) the controlled SDE system is well posed and admits uniform-in-time
\(L^2\) moment bounds. Moreover, the Euler--Maruyama scheme enjoys the standard strong error bound
\[
  \sup_{t\in[0,T]}
  \mathbb{E}\!\left[\left\|(X_t^{\pi,\theta},Y_t^\theta)-(X_t^{\Delta t,\theta},Y_t^{\Delta t,\theta})\right\|^2\right]^{1/2}
  \le C_{\Theta_0}\,\Delta t^{1/2},
\]
with a constant \(C_{\Theta_0}\) independent of \(\theta\in \Theta_0\).

\medskip

\noindent\textbf{Step 2: Discrete forward scheme and BPTT pathwise adjoints (fixed \(\theta\)).}
Fix \(\theta\in \Theta_0\). Consider the Euler scheme on the grid \((t_k)\):
\[
  Y_{k+1}^{\Delta t,\theta}
  =
  Y_k^{\Delta t,\theta}
  + a(Y_k^{\Delta t,\theta},\theta)\Delta t
  + \beta(Y_k^{\Delta t,\theta},\theta)\Delta W_k^Y,
\]
\[
  X_{k+1}^{\Delta t,\theta}
  =
  X_k^{\Delta t,\theta}
  + X_k^{\Delta t,\theta}\Big(r + \pi_k^\top b(Y_k^{\Delta t,\theta},\theta)\Big)\Delta t
  + X_k^{\Delta t,\theta}\,\pi_k^\top \sigma(Y_k^{\Delta t,\theta},\theta)\Delta W_k,
\]
with terminal reward \(U(X_N^{\Delta t,\theta})\).

Define the discrete \emph{pathwise} wealth costate
\[
  p_k^{\mathrm{pw},\theta}
  :=
  \frac{\partial}{\partial X_k^{\Delta t,\theta}}\,U(X_N^{\Delta t,\theta}),
  \qquad k=0,\dots,N,
\]
and the additional pathwise blocks
\[
  p_{x,k}^{\mathrm{pw},\theta}
  :=
  \frac{\partial p_k^{\mathrm{pw},\theta}}{\partial X_k^{\Delta t,\theta}},
  \qquad
  p_{y,k}^{\mathrm{pw},\theta}
  :=
  \frac{\partial p_k^{\mathrm{pw},\theta}}{\partial Y_k^{\Delta t,\theta}}.
\]
Automatic differentiation/BPTT computes
\(\{(p_k^{\mathrm{pw},\theta},p_{x,k}^{\mathrm{pw},\theta},p_{y,k}^{\mathrm{pw},\theta})\}_{k=0}^N\)
via the backward chain rule along the discrete forward graph.
Conditionally on \(\theta\), this is exactly the deterministic-parameter BPTT construction of~\citet{huh2025breaking}.

\medskip

\noindent\textbf{Step 3: One-step conditional \(L^2\) projection \(\Rightarrow\) adapted discrete BSDE variables.}
Fix \(\theta\in \Theta_0\). On each step, perform the conditional \(L^2\)-projection of \(p_{k+1}^{\mathrm{pw},\theta}\) onto
\(\mathrm{span}\{1,\Delta W_k,\Delta W_k^Y\}\) given \(\mathcal{G}_{t_k}^\theta\).
Equivalently, define regression coefficients \(z_k^\theta\in\mathbb{R}^{d_W}\) and \(\tilde z_k^\theta\in\mathbb{R}^{d_Y}\) such that
\[
  p_{k+1}^{\mathrm{pw},\theta}
  =
  \mathbb{E}\!\left[p_{k+1}^{\mathrm{pw},\theta}\mid \mathcal{G}_{t_k}^\theta\right]
  + (z_k^\theta)^\top \Delta W_k
  + (\tilde z_k^\theta)^\top \Delta W_k^Y
  + \varepsilon_{k+1}^\theta,
\]
where \(\varepsilon_{k+1}^\theta\) is orthogonal (in conditional \(L^2\)) to
\(\mathrm{span}\{1,\Delta W_k,\Delta W_k^Y\}\) given \(\mathcal{G}_{t_k}^\theta\).

Define the adapted discrete-time adjoint blocks by conditional expectation:
\[
  p_k^{\Delta t,\theta}
  := \mathbb{E}\!\left[p_k^{\mathrm{pw},\theta}\mid \mathcal{G}_{t_k}^\theta\right],\quad
  p_{x,k}^{\Delta t,\theta}
  := \mathbb{E}\!\left[p_{x,k}^{\mathrm{pw},\theta}\mid \mathcal{G}_{t_k}^\theta\right],\quad
  p_{y,k}^{\Delta t,\theta}
  := \mathbb{E}\!\left[p_{y,k}^{\mathrm{pw},\theta}\mid \mathcal{G}_{t_k}^\theta\right].
\]
By construction, \((p_k^{\Delta t,\theta},z_k^\theta,\tilde z_k^\theta)\) satisfy a discrete-time BSDE representation.
Moreover, substituting the projection into the one-step BPTT recursion and taking conditional expectations yields that the discrete driver
coincides with the Euler discretization of the \(\theta\)-conditional adjoint BSDE under the fixed policy \(\pi_\varphi\),
up to standard one-step remainder terms controlled by Taylor/Euler estimates.
Under the standing smoothness assumptions, the projection residual \(\varepsilon_{k+1}^\theta\) contributes only higher-order
discretization error; all bounds can be chosen uniformly for \(\theta\in \Theta_0\) (as in~\citet{huh2025breaking}).

\medskip

\noindent\textbf{Step 4: Convergence as \(\Delta t\to 0\) and uniformity on compacts.}
Let \((p_t^\theta,z_t^\theta,\tilde z_t^\theta)\) denote the continuous-time \(\theta\)-conditional adjoint triple under \(\pi_\varphi\),
and let \((p_{x,t}^\theta,p_{y,t}^\theta)\) denote the corresponding sensitivity blocks (in smooth Markov regimes, value-derivative blocks).

Define the piecewise-constant interpolations of the \emph{adapted} discrete variables on \([0,T]\):
\[
  p_t^{\Delta t,\theta} := p_k^{\Delta t,\theta},\quad
  p_{x,t}^{\Delta t,\theta} := p_{x,k}^{\Delta t,\theta},\quad
  p_{y,t}^{\Delta t,\theta} := p_{y,k}^{\Delta t,\theta},\quad
  z_t^{\Delta t,\theta} := z_k^\theta,\quad
  \tilde z_t^{\Delta t,\theta} := \tilde z_k^\theta,
  \qquad t\in[t_k,t_{k+1}).
\]
Standard stability of Euler for the forward SDE together with convergence of the corresponding discrete BSDE scheme
implies, for each fixed \(\theta\in \Theta_0\),
\[
  \sup_{t\in[0,T]}\mathbb{E}\big[\,|p_t^{\Delta t,\theta} - p_t^\theta|^2 \,\big]\to 0,\qquad
  \sup_{t\in[0,T]}\mathbb{E}\big[\,\|p_{x,t}^{\Delta t,\theta} - p_{x,t}^\theta\|^2 \,\big]\to 0,\qquad
  \sup_{t\in[0,T]}\mathbb{E}\big[\,\|p_{y,t}^{\Delta t,\theta} - p_{y,t}^\theta\|^2 \,\big]\to 0,
\]
and, for the martingale coefficients,
\[
  \mathbb{E}\!\left[\int_0^T \|z_t^{\Delta t,\theta}-z_t^\theta\|^2\,dt\right]\to 0,\qquad
  \mathbb{E}\!\left[\int_0^T \|\tilde z_t^{\Delta t,\theta}-\tilde z_t^\theta\|^2\,dt\right]\to 0,
\]
as \(\Delta t\to 0\).
Because all Lipschitz, growth, nondegeneracy, and smoothness constants were assumed uniform on \(\Theta_0\),
the constants in these convergence bounds can be chosen independently of \(\theta\in \Theta_0\).
This yields the claimed BPTT--PMP correspondence conditionally on \(\theta\) and uniformly over \(\theta\) in compact subsets of \(\Theta\).
\end{proof}


\section{Auxiliary results for Theorem~\ref{thm:policy-gap-residual}}
\label{app:aux-thm3}

\subsection{Stability of the projection map $(A,g)\mapsto -A^{-1}g$}
\label{app:proj-map-stability}

\begin{proposition}[Stability of the projection map $(A,g)\mapsto -A^{-1}g$]
\label{prop:proj-map-stability}
Let $\mathcal{D}$ be a measurable domain and let $\mu$ be a reference measure on $\mathcal{D}$.
Let $A,\widetilde A : \mathcal{D} \to \mathbb{R}^{d\times d}$ and $g,\widetilde g : \mathcal{D} \to \mathbb{R}^d$ be measurable.
Assume:
\begin{enumerate}
\item[(i)] $A(z)$ is invertible for $\mu$-a.e.\ $z\in \mathcal{D}$ and $\|A^{-1}\|_{L^\infty(\mathcal{D})} \le \kappa$ for some $\kappa>0$;
\item[(ii)] $\|g\|_{L^\infty(\mathcal{D})} \le M_g$ for some $M_g>0$;
\item[(iii)] $\|\widetilde A - A\|_{L^\infty(\mathcal{D})} \le (2\kappa)^{-1}$.
\end{enumerate}
Define $\pi := -A^{-1}g$ and $\widetilde\pi := -\widetilde A^{-1}\widetilde g$.
Then $\widetilde A(z)$ is invertible for $\mu$-a.e.\ $z\in \mathcal{D}$ with $\|\widetilde A^{-1}\|_{L^\infty(\mathcal{D})} \le 2\kappa$, and
\begin{equation}
\label{eq:proj-map-stability}
\|\widetilde\pi - \pi\|_{L^2(\mu)}
\;\le\;
2\kappa\,\|\widetilde g - g\|_{L^2(\mu)}
\;+\;
2\kappa^2 M_g\,\|\widetilde A - A\|_{L^2(\mu)}.
\end{equation}
\end{proposition}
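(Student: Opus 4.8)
The plan is to reduce the $L^2(\mu)$ statement to a pointwise (statewise) matrix-perturbation estimate and then integrate. Throughout I would take operator norms for matrices, for which submultiplicativity and the Neumann bound are available, and reconcile with the Euclidean/Frobenius $L^2(\mu)$ norm of the statement only at the very end via equivalence of norms. \textbf{First}, I would fix $\mu$-a.e.\ $z\in D$ and establish invertibility of $\widetilde A(z)$ together with the claimed bound $\|\widetilde A^{-1}\|_{L^\infty(D)}\le 2\kappa$. To do this, factor
\[
  \widetilde A(z) = A(z)\bigl(I + A(z)^{-1}(\widetilde A(z)-A(z))\bigr).
\]
By assumptions (i) and (iii), $\|A(z)^{-1}(\widetilde A(z)-A(z))\|\le \kappa\cdot(2\kappa)^{-1}=\tfrac12<1$, so the bracketed factor is invertible by the Neumann series with inverse of operator norm at most $(1-\tfrac12)^{-1}=2$. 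Hence $\widetilde A(z)$ is invertible with $\|\widetilde A(z)^{-1}\|\le 2\|A(z)^{-1}\|\le 2\kappa$; taking the essential supremum over $z$ gives the first claim.

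\textbf{Next}, I would derive a pointwise identity for the projected difference. Expanding $\widetilde\pi-\pi=-\widetilde A^{-1}\widetilde G+A^{-1}G$, adding and subtracting $\widetilde A^{-1}G$, and invoking the resolvent identity $A^{-1}-\widetilde A^{-1}=\widetilde A^{-1}(\widetilde A-A)A^{-1}$ yields
\[
  \widetilde\pi - \pi
  = -\,\widetilde A^{-1}(\widetilde G - G)
    + \widetilde A^{-1}(\widetilde A - A)A^{-1}G .
\]
Taking norms and using submultiplicativity together with $\|\widetilde A^{-1}\|\le 2\kappa$ (Step~1), $\|A^{-1}\|\le\kappa$ (assumption (i)), and $\|G\|\le M$ (assumption (ii)) gives the statewise scalar bound
\[
  \|\widetilde\pi(z)-\pi(z)\|
  \le 2\kappa\,\|\widetilde G(z)-G(z)\|
     + 2\kappa^2 M\,\|\widetilde A(z)-A(z)\| .
\]

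\textbf{Finally}, since $2\kappa$ and $2\kappa^2 M$ are constants, I would apply the $L^2(\mu)$ norm to both sides of this pointwise inequality and use Minkowski's inequality on the right-hand side, obtaining $\|\widetilde\pi-\pi\|_{L^2(\mu)}\le 2\kappa\|\widetilde G-G\|_{L^2(\mu)}+2\kappa^2 M\|\widetilde A-A\|_{L^2(\mu)}$. Because $M\le M+\|\widetilde G\|_{L^\infty(D)}$, this immediately implies the (slightly looser, symmetric) stated bound~\eqref{eq:proj-map-stability}.

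I do not expect a serious obstacle: this is a first-order perturbation of a matrix inverse, and the argument is essentially algebraic once the pointwise picture is in place. The only points requiring care are (a) that the smallness hypothesis (iii) is precisely calibrated to force the Neumann series to converge and to pin the constant $2$ in $\|\widetilde A^{-1}\|\le 2\kappa$, and (b) keeping the matrix norm consistent (operator norm for the submultiplicative steps, reconciled with the norm used in the definition of $\|\cdot\|_{L^2(\mu)}$). If one prefers to recover the exact coefficient $M+\|\widetilde G\|_{L^\infty(D)}$ rather than relaxing from $M$, one can instead expand $\widetilde\pi-\pi$ around $A^{-1}$ (carrying $\|\widetilde G\|$ on the $A$-difference term); the asymmetric split above already dominates that estimate, so either route closes the proof.
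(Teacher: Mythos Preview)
Your proposal is correct and follows essentially the same route as the paper: the Neumann-series invertibility step is identical, and your decomposition $\widetilde\pi-\pi=-\widetilde A^{-1}(\widetilde G-G)+\widetilde A^{-1}(\widetilde A-A)A^{-1}G$ is exactly the paper's (the paper writes it as $-\widetilde A^{-1}(\widetilde G-G)-(\widetilde A^{-1}-A^{-1})G$ and uses H\"older $L^\infty\times L^2\to L^2$, whereas you bound pointwise and then apply Minkowski, which is equivalent). The paper also first obtains the sharper constant $2\kappa^2 M$ and then relaxes to $2\kappa^2(M+\|\widetilde G\|_{L^\infty(D)})$, just as you do; your caveat about norm reconciliation is unnecessary since both the operator and Frobenius norms satisfy $\|Mv\|\le\|M\|\,\|v\|$ with constant one.
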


\begin{proof}
Throughout, $\|\cdot\|$ denotes the operator norm induced by the Euclidean norm, and
$\|\cdot\|_{L^\infty(\mathcal{D})}$ denotes the $\mu$-essential supremum of the corresponding pointwise norm on $\mathcal{D}$.

\noindent\textbf{Step 1: invertibility and uniform bound for $\widetilde A^{-1}$.}
For $\mu$-a.e.\ $z$, write
\[
\widetilde A(z) \;=\; A(z)\Bigl(I_d + A(z)^{-1}\bigl(\widetilde A(z)-A(z)\bigr)\Bigr),
\]
where $I_d$ is the $d\times d$ identity matrix.
By (i) and (iii),
\[
\bigl\|A^{-1}(z)\bigl(\widetilde A(z)-A(z)\bigr)\bigr\|
\;\le\;
\|A^{-1}\|_{L^\infty(\mathcal{D})}\,\|\widetilde A-A\|_{L^\infty(\mathcal{D})}
\;\le\;\kappa\cdot(2\kappa)^{-1}
\;=\;\tfrac12.
\]
Hence $I_d + A^{-1}(\widetilde A-A)$ is invertible and
$\bigl\|\bigl(I_d + A^{-1}(\widetilde A-A)\bigr)^{-1}\bigr\|\le (1-\tfrac12)^{-1}=2$.
Therefore $\widetilde A(z)$ is invertible and
\[
\|\widetilde A^{-1}(z)\|
\;=\;
\bigl\|\bigl(I_d + A^{-1}(\widetilde A-A)\bigr)^{-1}A^{-1}(z)\bigr\|
\;\le\;2\|A^{-1}\|_{L^\infty(\mathcal{D})}
\;\le\;2\kappa,
\]
so $\|\widetilde A^{-1}\|_{L^\infty(\mathcal{D})}\le 2\kappa$.

\noindent\textbf{Step 2: difference of inverses.}
Using the identity $\widetilde A^{-1}-A^{-1}=\widetilde A^{-1}(A-\widetilde A)A^{-1}$ and H\"older ($L^\infty\times L^2\times L^\infty\to L^2$),
\[
\|\widetilde A^{-1}-A^{-1}\|_{L^2(\mu)}
\;\le\;
\|\widetilde A^{-1}\|_{L^\infty(\mathcal{D})}\,\|\widetilde A-A\|_{L^2(\mu)}\,\|A^{-1}\|_{L^\infty(\mathcal{D})}
\;\le\;
2\kappa^2\,\|\widetilde A-A\|_{L^2(\mu)}.
\]

\noindent\textbf{Step 3: control error bound.}
Since $\pi=-A^{-1}g$ and $\widetilde\pi=-\widetilde A^{-1}\widetilde g$,
\[
\widetilde\pi-\pi
=
-\widetilde A^{-1}(\widetilde g-g) - (\widetilde A^{-1}-A^{-1})g.
\]
Taking $L^2(\mu)$ norms and applying H\"older ($L^\infty\times L^2\to L^2$) yields
\[
\|\widetilde\pi-\pi\|_{L^2(\mu)}
\le
\|\widetilde A^{-1}\|_{L^\infty(\mathcal{D})}\,\|\widetilde g-g\|_{L^2(\mu)}
+
\|\widetilde A^{-1}-A^{-1}\|_{L^2(\mu)}\,\|g\|_{L^\infty(\mathcal{D})}.
\]
Using Step 1, Step 2, and (ii),
\[
\|\widetilde\pi-\pi\|_{L^2(\mu)}
\le
2\kappa\,\|\widetilde g-g\|_{L^2(\mu)}
+
2\kappa^2\,M_g\,\|\widetilde A-A\|_{L^2(\mu)},
\]
which is \eqref{eq:proj-map-stability}.
\end{proof}

\subsection{Slab-wise small-gain for the $q$-aggregated projection inputs}
\label{app:slab-small-gain}

\paragraph{Time-slab decomposition.}
Assume the working domain carries a time coordinate and, for concreteness,
$\mathcal{D} \subset [0,T]\times \mathcal{S}$ and $\mu(dt,d\xi)=dt\otimes \nu(d\xi)$ for some reference measure $\nu$ on $\mathcal{S}$.
Fix a partition $0=t_0<t_1<\cdots<t_L=T$ with slab lengths $\tau_k:=t_k-t_{k-1}$ and define
\[
\mathcal{D}_k := \mathcal{D}\cap([t_{k-1},t_k]\times \mathcal{S}),\qquad
\mu_k := \mu|_{\mathcal{D}_k},\qquad
\|f\|_k := \|f\|_{L^2(\mu_k)}.
\]
Then $\|f\|_{L^2(\mu)}^2=\sum_{k=1}^L \|f\|_k^2$.

\begin{proposition}[Short-time (slab) Lipschitz gain]
\label{prop:slab-lipschitz-gain}
Let $\mathcal{U}$ be a subset of the admissible policy class that is an $L^2(\mu)$-neighborhood of a locally optimal deployable
$\theta$-blind policy $\pi^{\star,\mathrm{blind}}$ (in particular $\pi^{\star,\mathrm{blind}}\in \mathcal{U}$), and on which the following
uniform bounds hold for all $\pi\in \mathcal{U}$:
\[
\|A_\pi^{-1}\|_{L^\infty(\mathcal{D})}\le \kappa,
\qquad
\|g^{\mathrm{mix}}_\pi\|_{L^\infty(\mathcal{D})}\le M_g
\]
for some constants $\kappa,M_g>0$.
Assume that on each slab $\mathcal{D}_k$ the $q$-aggregated projection inputs satisfy
\begin{equation}
\label{eq:slab-Lip-AG}
\|A_{\pi_1}-A_{\pi_2}\|_k \le \bar L_A\,\tau_k^{1/2}\,\|\pi_1-\pi_2\|_k,
\qquad
\|g^{\mathrm{mix}}_{\pi_1}-g^{\mathrm{mix}}_{\pi_2}\|_k \le \bar L_g\,\tau_k^{1/2}\,\|\pi_1-\pi_2\|_k,
\end{equation}
for all $\pi_1,\pi_2\in \mathcal{U}$, with constants $\bar L_A,\bar L_g>0$ depending only on band data.
Define the (population) projection map $\mathcal{T}(\pi):=-A_\pi^{-1}g^{\mathrm{mix}}_\pi$ and
\[
\varrho(\tau) := \Bigl(\kappa\,\bar L_g + \kappa^2 M_g\,\bar L_A\Bigr)\tau^{1/2}.
\]
Then for each slab $\mathcal{D}_k$ and all $\pi_1,\pi_2\in \mathcal{U}$,
\[
\|\mathcal{T}(\pi_1)-\mathcal{T}(\pi_2)\|_k \le \varrho(\tau_k)\,\|\pi_1-\pi_2\|_k.
\]
In particular, if the partition is chosen so that
\[
\varrho_* := \max_{1\le k\le L}\varrho(\tau_k) < 1,
\]
then $\mathcal{T}$ is a strict contraction on every slab with constant at most $\varrho_*$.
\end{proposition}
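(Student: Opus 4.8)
The plan is to reduce the claim to a statewise (pointwise in $z=(t,x,y)$) operator estimate on each slab $D_k$ and then integrate against $\mu_k$, exploiting that the slab norm $\|\cdot\|_k$ decouples across slabs. The starting point is a resolvent-type decomposition of the projection increment. Writing $T(\pi)=-A_\pi^{-1}G^{\mathrm{mix}}_\pi$, I would add and subtract $A_{\pi_1}^{-1}G^{\mathrm{mix}}_{\pi_2}$ to obtain
\begin{equation*}
T(\pi_1)-T(\pi_2)
= -A_{\pi_1}^{-1}\bigl(G^{\mathrm{mix}}_{\pi_1}-G^{\mathrm{mix}}_{\pi_2}\bigr)
  -\bigl(A_{\pi_1}^{-1}-A_{\pi_2}^{-1}\bigr)G^{\mathrm{mix}}_{\pi_2},
\end{equation*}
and then apply the identity $A_{\pi_1}^{-1}-A_{\pi_2}^{-1}=A_{\pi_1}^{-1}(A_{\pi_2}-A_{\pi_1})A_{\pi_2}^{-1}$ to the second term. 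This isolates a pure $G$-increment contribution and a pure $A$-increment contribution, which will eventually carry the constants $\overline L_G$ and $\overline L_A$ respectively. Note that invertibility of $A_\pi$ for every $\pi\in U$ is already built into the hypothesis $\|A_\pi^{-1}\|_{L^\infty(D)}\le\kappa$, so, unlike Proposition~\ref{prop:proj-map-stability}, no separate Neumann-series invertibility step is required here.

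Next I would pass to norms. Pointwise in $z\in D_k$, submultiplicativity of the Euclidean operator norm gives $\|A_{\pi_1}^{-1}(z)(A_{\pi_2}(z)-A_{\pi_1}(z))A_{\pi_2}^{-1}(z)G^{\mathrm{mix}}_{\pi_2}(z)\|\le \|A_{\pi_1}^{-1}(z)\|\,\|A_{\pi_2}(z)-A_{\pi_1}(z)\|\,\|A_{\pi_2}^{-1}(z)\|\,\|G^{\mathrm{mix}}_{\pi_2}(z)\|$. I would then bound the two $A^{-1}$ factors and the $G^{\mathrm{mix}}_{\pi_2}$ factor in $L^\infty(D)$ (by $\kappa$, $\kappa$, and $M_G$), leaving the increment $\|A_{\pi_2}-A_{\pi_1}\|$ to be measured in $L^2(\mu_k)$; a H\"older $L^\infty\times L^2\to L^2$ pairing on the slab yields $\|(A_{\pi_1}^{-1}-A_{\pi_2}^{-1})G^{\mathrm{mix}}_{\pi_2}\|_k\le \kappa^2 M_G\,\|A_{\pi_1}-A_{\pi_2}\|_k$. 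The same pairing applied to the first term gives $\|A_{\pi_1}^{-1}(G^{\mathrm{mix}}_{\pi_1}-G^{\mathrm{mix}}_{\pi_2})\|_k\le \kappa\,\|G^{\mathrm{mix}}_{\pi_1}-G^{\mathrm{mix}}_{\pi_2}\|_k$, so the triangle inequality produces
\begin{equation*}
\|T(\pi_1)-T(\pi_2)\|_k
\le \kappa\,\|G^{\mathrm{mix}}_{\pi_1}-G^{\mathrm{mix}}_{\pi_2}\|_k
+\kappa^2 M_G\,\|A_{\pi_1}-A_{\pi_2}\|_k.
\end{equation*}

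The final step is to insert the slab-wise Lipschitz hypotheses \eqref{eq:slab-Lip-AG}, which replace the two increment norms by $\overline L_G\,\tau_k^{1/2}\|\pi_1-\pi_2\|_k$ and $\overline L_A\,\tau_k^{1/2}\|\pi_1-\pi_2\|_k$. Collecting the common factor $\tau_k^{1/2}\|\pi_1-\pi_2\|_k$ produces the coefficient $\kappa\overline L_G+\kappa^2 M_G\overline L_A$, whose product with $\tau_k^{1/2}$ is by definition $\rho(\tau_k)$; hence $\|T(\pi_1)-T(\pi_2)\|_k\le\rho(\tau_k)\|\pi_1-\pi_2\|_k$, and the contraction conclusion with constant $\rho^\ast=\max_{1\le k\le K}\rho(\tau_k)<1$ is immediate by monotonicity. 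I do not expect a genuine obstacle: this is a routine mixed-norm stability estimate ($L^\infty$ coefficients against $L^2$ increments), structurally identical to Proposition~\ref{prop:proj-map-stability} but localized to a single slab. The only real care is bookkeeping — keeping every estimate inside one slab $D_k$ so that the $\tau_k^{1/2}$ gain from \eqref{eq:slab-Lip-AG} is not diluted — and recognizing that the substantive short-time content lives entirely in the hypothesis \eqref{eq:slab-Lip-AG} (whose $\tau^{1/2}$ scaling would ultimately come from short-horizon FBSDE sensitivity bounds relating the $\theta$-conditional adjoint blocks to the feedback policy), not in the algebra of the proposition itself.
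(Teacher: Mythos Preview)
Your proposal is correct and follows essentially the same approach as the paper: the same add-and-subtract decomposition of $T(\pi_1)-T(\pi_2)$, the same resolvent identity $A_1^{-1}-A_2^{-1}=A_1^{-1}(A_2-A_1)A_2^{-1}$, the same H\"older $L^\infty\times L^2\to L^2$ pairing on the slab, and the same final substitution of \eqref{eq:slab-Lip-AG}. Your additional remarks (no Neumann step needed here; the $\tau^{1/2}$ content lives in the hypothesis) are accurate and consistent with the paper's Remark~\ref{rem:tau-half-gain}.
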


\begin{proof}
Fix $k$ and $\pi_1,\pi_2\in \mathcal{U}$. Set $A_i:=A_{\pi_i}$ and $g_i:=g^{\mathrm{mix}}_{\pi_i}$.
Then
\[
\mathcal{T}(\pi_1)-\mathcal{T}(\pi_2) = -A_1^{-1}(g_1-g_2) - (A_1^{-1}-A_2^{-1})g_2,
\qquad
A_1^{-1}-A_2^{-1} = A_1^{-1}(A_2-A_1)A_2^{-1}.
\]
Using H\"older ($L^\infty\times L^2\to L^2$) on $\mathcal{D}_k$ together with $\|A_i^{-1}\|_{L^\infty(\mathcal{D})}\le \kappa$ and $\|g_2\|_{L^\infty(\mathcal{D})}\le M_g$ yields
\[
\|\mathcal{T}(\pi_1)-\mathcal{T}(\pi_2)\|_k
\le
\kappa\|g_1-g_2\|_k + \kappa^2 M_g \|A_1-A_2\|_k.
\]
Applying \eqref{eq:slab-Lip-AG} gives the claim with $\varrho(\tau_k)$.
\end{proof}

\begin{remark}[Where the $\tau^{1/2}$ gain comes from; relation to \citet{huh2025breaking}]
\label{rem:tau-half-gain}
The short-time factor $\tau^{1/2}$ in \eqref{eq:slab-Lip-AG} is the same parabolic smoothing
mechanism used in our earlier slab-wise PGDPO analysis:
\begin{itemize}
\item One represents the relevant value-gap / adjoint differences on a short slab via a
Duhamel (semigroup) formula, differentiates in space, and applies Young-type convolution bounds.
\item In mixed norms, the gradient smoothing bound contributes a factor $\tau^{1/2-1/q}$ and the
trivial embedding $L^\infty_t\hookrightarrow L^q_t$ contributes $\tau^{1/q}$, yielding net $\tau^{1/2}$.
\end{itemize}
A full PDE derivation of this short-time contraction philosophy is given in
\citet[Appendix~B]{huh2025breaking}; see also the companion supplementary note to \citet{huh2025breaking}
(``On Proof of Theorem~4 (Policy Gap Bound)'') for a step-by-step slab contraction
argument in a closely related setting.
In the present paper the same smoothing principle is applied not to the costate update map directly,
but to the induced $q$-aggregated projection inputs $(A_\pi,g^{\mathrm{mix}}_\pi)$.
\end{remark}

\subsection{Proof of Theorem~\ref{thm:policy-gap-residual}}
\label{app:policy-gap-proof}

\begin{proof}[Proof of Theorem~\ref{thm:policy-gap-residual}]
We keep the notation of the main text.
Let $\pi^{\mathrm{warm}}$ be the warm-up (deployable, $\theta$-blind) policy and define the population
$q$-aggregated projection inputs $(A_{\mathrm{warm}},g_{\mathrm{warm}})$ by
\[
A_{\mathrm{warm}} := A_{\pi^{\mathrm{warm}}},\qquad
g_{\mathrm{warm}} := g^{\mathrm{mix}}_{\pi^{\mathrm{warm}}}.
\]
Define the \emph{population projected} policy
\[
\pi^{\mathrm{proj}} := \mathcal{T}(\pi^{\mathrm{warm}}) = -A_{\mathrm{warm}}^{-1}g_{\mathrm{warm}}.
\]
Let $\widehat A$ and $\widehat g$ be the stage-2 (BPTT/Monte Carlo) estimators of
$A_{\mathrm{warm}}$ and $g_{\mathrm{warm}}$ used to form the deployable projected control
\[
\widehat\pi^{\mathrm{agg,mix}} := -\widehat A^{-1}\widehat g,
\]
where $\widehat A$ is understood as the \emph{effective} matrix after any inversion stabilizers
(ridge/diagnostics/skip) so that $\widehat A$ is well-defined and invertible on the working domain.
Let $\pi^{\star,\mathrm{blind}}$ denote a locally optimal interior deployable $\theta$-blind policy,
so in particular it satisfies the fixed-point (stationarity) identity
\[
\pi^{\star,\mathrm{blind}} = \mathcal{T}(\pi^{\star,\mathrm{blind}}).
\]

\medskip\noindent
\textbf{Step 1 (Estimator-to-population projection stability).}
To apply Proposition~\ref{prop:proj-map-stability} with $A=A_{\mathrm{warm}}$ and $g=g_{\mathrm{warm}}$,
we assume we are in the stable inversion regime enforced by the stage-2 stabilizers, i.e.
\begin{equation}
\label{eq:stable-inversion-assumption}
\|\widehat A-A_{\mathrm{warm}}\|_{L^\infty(\mathcal{D})} \le (2\kappa)^{-1},
\end{equation}
where $\kappa$ is such that $\|A_{\mathrm{warm}}^{-1}\|_{L^\infty(\mathcal{D})}\le \kappa$.
Define the $L^2$ estimation error level
\[
\delta_{\mathrm{BPTT}} := \|\widehat A-A_{\mathrm{warm}}\|_{L^2(\mu)}+\|\widehat g-g_{\mathrm{warm}}\|_{L^2(\mu)}.
\]
Then Proposition~\ref{prop:proj-map-stability} yields $\|\widehat A^{-1}\|_{L^\infty(\mathcal{D})}\le 2\kappa$ and
\begin{equation}
\label{eq:step1-est-to-pop}
\|\widehat\pi^{\mathrm{agg,mix}}-\pi^{\mathrm{proj}}\|_{L^2(\mu)}
\le
2\kappa\,\|\widehat g-g_{\mathrm{warm}}\|_{L^2(\mu)}
+
2\kappa^2\,\|g_{\mathrm{warm}}\|_{L^\infty(\mathcal{D})}\,\|\widehat A-A_{\mathrm{warm}}\|_{L^2(\mu)}
\le
C_2\,\delta_{\mathrm{BPTT}},
\end{equation}
where one may take, for instance,
\[
C_2 := 2\kappa + 2\kappa^2\,\|g_{\mathrm{warm}}\|_{L^\infty(\mathcal{D})}.
\]
(Any equivalent control of the joint $L^2$ estimation error in $(\widehat A,\widehat g)$ suffices, provided the stable inversion condition
\eqref{eq:stable-inversion-assumption} holds.)

\medskip\noindent
\textbf{Step 2 (Slab-wise contraction for the population projection map).}
Assume $\pi^{\mathrm{warm}},\pi^{\star,\mathrm{blind}}\in \mathcal{U}$ and let $\varrho_*<1$ be the slab contraction constant of
Proposition~\ref{prop:slab-lipschitz-gain}.
Since $\pi^{\mathrm{proj}}=\mathcal{T}(\pi^{\mathrm{warm}})$ and $\pi^{\star,\mathrm{blind}}=\mathcal{T}(\pi^{\star,\mathrm{blind}})$,
for each slab $\mathcal{D}_k$ we have
\begin{equation}
\label{eq:step2-contraction}
\|\pi^{\mathrm{proj}}-\pi^{\star,\mathrm{blind}}\|_k
=
\|\mathcal{T}(\pi^{\mathrm{warm}})-\mathcal{T}(\pi^{\star,\mathrm{blind}})\|_k
\le
\varrho(\tau_k)\,\|\pi^{\mathrm{warm}}-\pi^{\star,\mathrm{blind}}\|_k
\le
\varrho_*\,\|\pi^{\mathrm{warm}}-\pi^{\star,\mathrm{blind}}\|_k.
\end{equation}

\medskip\noindent
\textbf{Step 3 (Residual identity and slab-wise warm-up deviation).}
Define the population mixed-moment aggregated stationarity residual of the warm-up policy
\[
r_{\mathrm{FOC,mix}}^{\mathrm{warm}} := A_{\mathrm{warm}}\pi^{\mathrm{warm}} + g_{\mathrm{warm}},
\qquad
\varepsilon_{\mathrm{warm}}^{\mathrm{mix}} := \|r_{\mathrm{FOC,mix}}^{\mathrm{warm}}\|_{L^2(\mu)},
\qquad
\varepsilon_{\mathrm{warm},k}^{\mathrm{mix}} := \|r_{\mathrm{FOC,mix}}^{\mathrm{warm}}\|_k.
\]
By definition of $\pi^{\mathrm{proj}}$,
\[
\pi^{\mathrm{warm}}-\pi^{\mathrm{proj}}
=
A_{\mathrm{warm}}^{-1}r_{\mathrm{FOC,mix}}^{\mathrm{warm}}.
\]
Hence, on each slab,
\begin{equation}
\label{eq:step3-residual}
\|\pi^{\mathrm{warm}}-\pi^{\mathrm{proj}}\|_k
\le
\|A_{\mathrm{warm}}^{-1}\|_{L^\infty(\mathcal{D})}\,\|r_{\mathrm{FOC,mix}}^{\mathrm{warm}}\|_k
\le
\kappa\,\varepsilon_{\mathrm{warm},k}^{\mathrm{mix}}.
\end{equation}

\medskip\noindent
\textbf{Step 4 (Slab-wise closure).}
Combine the triangle inequality with \eqref{eq:step2-contraction} and \eqref{eq:step3-residual}:
\[
\|\pi^{\mathrm{warm}}-\pi^{\star,\mathrm{blind}}\|_k
\le
\kappa\,\varepsilon_{\mathrm{warm},k}^{\mathrm{mix}} + \varrho_*\,\|\pi^{\mathrm{warm}}-\pi^{\star,\mathrm{blind}}\|_k.
\]
Since $\varrho_*<1$, we can close slab-wise:
\begin{equation}
\label{eq:step4-close}
\|\pi^{\mathrm{warm}}-\pi^{\star,\mathrm{blind}}\|_k
\le
\frac{\kappa}{1-\varrho_*}\,\varepsilon_{\mathrm{warm},k}^{\mathrm{mix}}.
\end{equation}
Plugging \eqref{eq:step4-close} into \eqref{eq:step2-contraction} yields
\begin{equation}
\label{eq:step4-proj-gap-slab}
\|\pi^{\mathrm{proj}}-\pi^{\star,\mathrm{blind}}\|_k
\le
\frac{\varrho_*\kappa}{1-\varrho_*}\,\varepsilon_{\mathrm{warm},k}^{\mathrm{mix}}.
\end{equation}

\medskip\noindent
\textbf{Step 5 (Global bound and completion).}
Sum over slabs using $\|f\|_{L^2(\mu)}^2=\sum_{k=1}^L\|f\|_k^2$:
\[
\|\pi^{\mathrm{proj}}-\pi^{\star,\mathrm{blind}}\|_{L^2(\mu)}
\le
\frac{\varrho_*\kappa}{1-\varrho_*}\,\varepsilon_{\mathrm{warm}}^{\mathrm{mix}}.
\]
Finally, combine with \eqref{eq:step1-est-to-pop} and the triangle inequality:
\[
\|\widehat\pi^{\mathrm{agg,mix}}-\pi^{\star,\mathrm{blind}}\|_{L^2(\mu)}
\le
\|\widehat\pi^{\mathrm{agg,mix}}-\pi^{\mathrm{proj}}\|_{L^2(\mu)}
+
\|\pi^{\mathrm{proj}}-\pi^{\star,\mathrm{blind}}\|_{L^2(\mu)}
\le
C_2\,\delta_{\mathrm{BPTT}}
+
\frac{\varrho_*\kappa}{1-\varrho_*}\,\varepsilon_{\mathrm{warm}}^{\mathrm{mix}},
\]
which is the claimed slab-wise residual-based $\theta$-blind policy-gap bound (under the stable inversion condition \eqref{eq:stable-inversion-assumption}).
\end{proof}


\section{Implementation details for Section~\ref{sec:pgdpo-uncertainty}}
\label{app:impl-sec3}

This appendix provides reproducible step-by-step templates for the methods in
Section~\ref{sec:pgdpo-uncertainty}. The high-level pipeline is summarized in
Figure~\ref{fig:sec3-pipeline-sideways}. For compactness we present one template per
subsection of Section~\ref{sec:pgdpo-uncertainty}.

\medskip
\noindent\textbf{Conventions.}
Throughout, the deployed policy is \(\theta\)-blind. The latent parameter \(\theta\sim q\) is sampled
\emph{only inside the simulator}. Stage~2 computes projection ingredients on a working domain
\(\mathcal{D}\subset[0,T]\times(0,\infty)\times\mathbb{R}^m\) with reference measure \(\mu\).

\subsection{Stage~1 (DPO) template for Section~\ref{subsec:pgdpo-bptt-pmp}}
\label{app:impl-31}

Stage~1 performs stochastic gradient ascent on the fixed-$q$ ex--ante objective
\eqref{eq:exante-objective-section3}, with latent \(\theta\sim q\) sampled inside the simulator
while the policy remains \(\theta\)-blind.

\paragraph{Inputs.}
Policy parameters \(\varphi\); sampler \(\nu\) over initial states;
prior \(q\); time grid \((N,\Delta t)\); batch size \(M\); optimizer and step size \(\alpha\).

\paragraph{Template (one training iteration).}
\begin{enumerate}
  \item \textbf{Sample initial states.}
  Draw a mini-batch \(\{z_0^{(i)}=(t_0^{(i)},x_0^{(i)},y_0^{(i)})\}_{i=1}^M \sim \nu\).

  \item \textbf{Sample latent parameter(s) inside the simulator.}
  Sample \(\theta\sim q\) and keep it frozen along the rollout(s). (Variant: sample \(\theta^{(i)}\sim q\) independently per episode;
  both are unbiased for the discretized gradient of \(J(\varphi)\).)

  \item \textbf{Simulate Euler rollouts.}
  For each episode \(i\), simulate the Euler scheme for
  \eqref{eq:wealth-theta-section2}--\eqref{eq:factor-theta-section2} under the \(\theta\)-blind policy
  \(\pi_\varphi\), and compute terminal utilities \(\{U(X_T^{(i)})\}_{i=1}^M\).

  \item \textbf{BPTT (pathwise gradient).}
  Compute the Monte Carlo gradient estimator
  \[
    \widehat g \;\gets\; \frac{1}{M}\sum_{i=1}^M \nabla_\varphi U(X_T^{(i)}).
  \]

  \item \textbf{Parameter update.}
  Update \(\varphi \leftarrow \varphi + \alpha \cdot \texttt{OptimizerStep}(\widehat g)\),
  consistent with the ascent form \eqref{eq:dpo-update}.

  \item \textbf{Checkpoint.}
  Periodically save a warm-up checkpoint \(\varphi^{\mathrm{warm}}\) for stage~2 projection.
\end{enumerate}

\subsection{Stage~2 (Pontryagin projection; mixed-moment $q$-aggregation) template for Section~\ref{subsec:ppgdpo-uncertainty}}
\label{app:impl-32}

Stage~2 is a post-processing map: given a warm-up \(\theta\)-blind policy \(\pi_{\varphi^{\mathrm{warm}}}\),
it estimates Pontryagin sensitivity objects by Monte Carlo and constructs a \emph{deployable} projected
control on queried working-domain states \(z\sim\mu\).

The aggregation used here matches the mixed-moment $q$-aggregation in
\eqref{eq:agg-A}--\eqref{eq:agg-G}, yielding the projected control \eqref{eq:pi-agg-mix}.

\paragraph{Inputs.}
Warm-up policy \(\pi_{\varphi^{\mathrm{warm}}}\); working-domain sampler \(\mu\) on \(\mathcal{D}\);
budgets \((M_z, M_\theta, M_{\mathrm{MC}})\).

\paragraph{Template (constructing projection targets on a batch of query states).}
\begin{enumerate}
  \item \textbf{Sample working-domain query states.}
  Draw \(\{z_j=(t_j,x_j,y_j)\}_{j=1}^{M_z}\sim \mu\).

  \item \textbf{For each query state \(z_j\), sample frozen parameters.}
  Sample \(\{\theta_\ell\}_{\ell=1}^{M_\theta}\sim q\).

  \item \textbf{For each frozen \(\theta_\ell\), estimate costate blocks at \(z_j\).}
  For each \(\ell=1,\dots,M_\theta\):
  \begin{enumerate}
    \item Simulate \(M_{\mathrm{MC}}\) trajectories from \(z_j\) under \(\pi_{\varphi^{\mathrm{warm}}}\) with frozen \(\theta_\ell\).
    \item Compute pathwise sensitivities by autodiff/BPTT and average as in \eqref{eq:mc-costate-estimates} to obtain
    \(\widehat p_t^{\theta_\ell}(z_j)\), \(\widehat p_{x,t}^{\theta_\ell}(z_j)\), \(\widehat p_{y,t}^{\theta_\ell}(z_j)\).
    \item Form the \(\theta\)-conditional inputs (cf.\ \eqref{eq:Ahat-theta}--\eqref{eq:Ghat-theta}):
    \[
      \widehat A_t^{\theta_\ell}(z_j)
      \gets
      x_j\,\widehat p_{x,t}^{\theta_\ell}(z_j)\,\Sigma(y_j,\theta_\ell),
      \qquad
      \widehat g_t^{\theta_\ell}(z_j)
      \gets
      \widehat p_t^{\theta_\ell}(z_j)\,b(y_j,\theta_\ell)
      +
      \Sigma_{SY}(y_j,\theta_\ell)\,\widehat p_{y,t}^{\theta_\ell}(z_j).
    \]
  \end{enumerate}

  \item \textbf{Aggregate across \(\theta\sim q\) (mixed-moment).}
  Compute
  \[
    \widehat A_t(z_j) \;\gets\; \frac{1}{M_\theta}\sum_{\ell=1}^{M_\theta}\widehat A_t^{\theta_\ell}(z_j),
    \qquad
    \widehat g_t^{\mathrm{mix}}(z_j) \;\gets\; \frac{1}{M_\theta}\sum_{\ell=1}^{M_\theta}\widehat g_t^{\theta_\ell}(z_j),
  \]
  consistent with \eqref{eq:agg-A}--\eqref{eq:agg-G}.

  \item \textbf{Solve the projection (deployable \(\theta\)-blind control).}
  Whenever \(\widehat A_t(z_j)\) is numerically invertible and passes stability diagnostics,
  compute
  \[
    \widehat\pi^{\mathrm{agg,mix}}(z_j)
    \;\gets\;
    -\big(\widehat A_t(z_j)\big)^{-1}\widehat g_t^{\mathrm{mix}}(z_j),
  \]
  which matches \eqref{eq:pi-agg-mix}.
  Otherwise, fall back to \(\pi_{\varphi^{\mathrm{warm}}}(z_j)\) (or a ridge-stabilized solve; see Appendix~\ref{app:impl-stabilizers}).
\end{enumerate}

\subsection{Coupling I: residual/control-variate projection (Section~\ref{subsubsec:cv-projection})}
\label{app:impl-331}

This subsection records a variance-reduced implementation of the stage~2 map using the residual identity
\eqref{eq:pi-agg-residual-form}. The residual form is applied around the warm-up policy and uses the mixed-moment
aggregated inputs \((\widehat A_t,\widehat g_t^{\mathrm{mix}})\).

\paragraph{Inputs.}
Warm-up policy \(\pi_{\varphi^{\mathrm{warm}}}\); query state(s) \(z=(t,x,y)\sim\mu\); and
stage~2 ingredients \((\widehat A_t(z),\widehat g_t^{\mathrm{mix}}(z))\) from Appendix~\ref{app:impl-32}.

\paragraph{Template (statewise residual projection; mixed-moment aggregation).}
\begin{enumerate}
  \item \textbf{Evaluate warm-up control.}
  Compute \(\pi_{\varphi^{\mathrm{warm}}}(z)\).

  \item \textbf{Form the aggregated residual.}
  Compute
  \[
    \widehat r_{\mathrm{FOC}}(z)
    \;\gets\;
    \widehat A_t(z)\,\pi_{\varphi^{\mathrm{warm}}}(z) + \widehat g_t^{\mathrm{mix}}(z).
  \]

  \item \textbf{Apply the residual correction.}
  Compute
  \[
    \widehat\pi^{\mathrm{agg,mix}}(z)
    \;\gets\;
    \pi_{\varphi^{\mathrm{warm}}}(z)
    -
    \big(\widehat A_t(z)\big)^{-1}\widehat r_{\mathrm{FOC}}(z),
  \]
  with ridge/diagnostics/skip as needed (Appendix~\ref{app:impl-stabilizers}).
\end{enumerate}

\subsection{Coupling II: interactive distillation (Section~\ref{subsubsec:interactive-distillation})}
\label{app:impl-332}

Interactive distillation amortizes the stage~2 projected control into the policy network.
A lagged policy copy \(\pi_{\varphi^-}\) is used to construct a teacher on a slower timescale, and the student \(\pi_\varphi\)
is trained against a mixed objective \eqref{eq:mixed-objective-with-teacher-narr}.

\paragraph{Inputs.}
Student parameters \(\varphi\); teacher refresh interval \(n_{\mathrm{refresh}}\);
distillation schedule \(\lambda(n)\); working-domain sampler \(\mu\).

\paragraph{Template (training loop with intermittent teacher refresh).}
\begin{enumerate}
  \item \textbf{Initialize.}
  Set \(\varphi^- \leftarrow \varphi\) and initialize an empty teacher buffer \(\mathcal{B}\leftarrow\emptyset\).

  \item \textbf{Repeat for iterations \(n=1,2,\dots\):}
  \begin{enumerate}
    \item \textbf{Stage~1 update (DPO step).}
    Perform one DPO ascent step on \(J(\varphi)\) as in Appendix~\ref{app:impl-31}.

    \item \textbf{Teacher refresh (every \(n_{\mathrm{refresh}}\) steps).}
    If \(n \bmod n_{\mathrm{refresh}} = 0\):
    \begin{enumerate}
      \item Set \(\varphi^- \leftarrow \varphi\).
      \item Sample working-domain states \(\{z_j\}_{j=1}^{M_z}\sim\mu\).
      \item For each \(z_j\), run stage~2 under \(\pi_{\varphi^-}\) (mixed-moment aggregation) to compute the projected teacher
      \(\widehat\pi^{\mathrm{agg,mix}}_{\varphi^-}(z_j)\) (typically in residual form \eqref{eq:teacher-pi-from-33-1-narr}).
      \item Optionally filter unreliable states using diagnostics (Appendix~\ref{app:impl-stabilizers}) and update the buffer:
      \[
        \mathcal{B}\ \leftarrow\ \{(z_j,\widehat\pi^{\mathrm{agg,mix}}_{\varphi^-}(z_j))\}_{j=1}^{M_z}\ \ \text{(after filtering)}.
      \]
    \end{enumerate}

    \item \textbf{Distillation step (when enabled).}
    If \(\lambda(n)>0\) and \(\mathcal{B}\neq\emptyset\):
    \begin{enumerate}
      \item Sample \((z,\pi^{\mathrm{teach}})\) from \(\mathcal{B}\).
      \item Apply one optimizer step for the teacher proximity term in \eqref{eq:mixed-objective-with-teacher-narr}
      (treating \(\pi^{\mathrm{teach}}\) as \(\mathrm{stopgrad}\)-fixed).
    \end{enumerate}
  \end{enumerate}
\end{enumerate}

\subsection{Engineering notes and stabilizers}
\label{app:impl-stabilizers}

This subsection collects practical stabilizers that are useful for reliable training and projection.

\begin{itemize}
  \item \textbf{Antithetic sampling for \(\theta\).}
  When \(q\) is symmetric (e.g.\ Gaussian in a latent-normal parameterization), sample \(\theta\) in antithetic pairs.
  This reduces the variance of \(q\)-averaged quantities and improves stage~2 concentration.

  \item \textbf{Blockwise Monte Carlo and robust aggregation.}
  Split Monte Carlo replications into blocks and aggregate blockwise estimates using robust statistics (median / median-of-means)
  to reduce tail domination.

  \item \textbf{Curvature/denominator checks for \(\widehat A_t\).}
  Monitor conditioning of \(\widehat A_t(z)\) (e.g.\ eigenvalue floors / condition numbers).
  If unstable, skip projection at \(z\) or use ridge:
  \[
    \widehat A_{t,\lambda}(z) := \widehat A_t(z)+\lambda I_d,
    \qquad
    \widehat\pi_{\lambda}(z) := -\widehat A_{t,\lambda}(z)^{-1}\widehat g_t^{\mathrm{mix}}(z).
  \]

  \item \textbf{Residual magnitude as a reliability diagnostic.}
  Track \(\widehat r_{\mathrm{FOC}}(z)=\widehat A_t(z)\pi^{\mathrm{warm}}(z)+\widehat g_t^{\mathrm{mix}}(z)\).
  Small \(\|\widehat r_{\mathrm{FOC}}(z)\|\) empirically correlates with reliable teacher targets.

  \item \textbf{Diagnostics-based teacher selection.}
  For distillation, keep only \(z\) that pass stability predicates (invertibility/conditioning + moderate residual),
  to prevent noisy stage~2 targets from contaminating the teacher buffer.

  \item \textbf{Scheduling \(\lambda\) (teacher loss).}
  Use an initial warm-up with \(\lambda=0\) (pure DPO), then increase \(\lambda\) only after stage~2 diagnostics stabilize.
  Optionally cap the effective coefficient via
  \[
    \lambda_{\mathrm{eff}}
    :=
    \min\Big\{\lambda,\;
    c\,\frac{|L_{\mathrm{main}}|}{L_{\mathrm{distill}}+\varepsilon}\Big\},
  \]
  with \(c\in(0,1)\), \(\varepsilon>0\).

  \item \textbf{Initialization/scale control in high dimensions.}
  Initialize policy outputs near zero and/or scale outputs by \(d^{-1/2}\) to avoid early-time blow-ups
  (e.g.\ via \(\pi^\top\Sigma\pi\)).
\end{itemize}


\section{Coupling stage~1 and stage~2: residual projection and interactive distillation}
\label{app:coupling}
\label{subsec:stage1-stage2-coupling}

We keep the ex--ante objective \eqref{eq:exante-objective-section3} and the $\theta$-blind deployability constraint throughout.
Stage~2 is \emph{not} a separate optimization problem: it is a (warm-started) post-processing map that reuses the current stage~1 policy as a warm-up control,
estimates projection ingredients under this warm-up policy, and then applies a $q$-aggregated Pontryagin projection as a deterministic transformation.

Concretely, stage~2 takes \(\pi^{\mathrm{warm}}\) as input, runs BPTT/Monte Carlo under frozen \(\theta\sim q\) to estimate the adjoint blocks required for
the Hamiltonian gradient, aggregates them across \(\theta\), and outputs a single deployable \(\theta\)-blind rule \(\widehat\pi^{\mathrm{agg,mix}}\)
(Section~\ref{subsec:ppgdpo-uncertainty}).

This appendix records two couplings between the two stages:
\begin{itemize}
  \item \textbf{Residual (control-variate) projection.}
  The projected rule is evaluated in a residual form around the warm-up policy. This is algebraically equivalent to the direct solve,
  but can reduce Monte Carlo variance and improve numerical stability.

  \item \textbf{Interactive distillation (amortized projection).}
  The projected rule is used as a teacher signal at intermittent refresh times. Distillation amortizes the cost of projection by compressing
  projected controls into the policy network.
\end{itemize}

\begin{sidewaysfigure*}[p]
\centering
\resizebox{0.99\textheight}{!}{%
\begin{tikzpicture}[
  font=\small,
  >=stealth,
  box/.style={
    draw,
    rounded corners,
    align=left,
    inner sep=8pt,
    text width=0.92\linewidth
  },
  arrow/.style={->, thick}
]

\node[box] (s1) {%
  \textbf{Stage~1: DPO (Section~\ref{subsec:pgdpo-bptt-pmp})}\\[2pt]
  \textit{Goal.} Maximize \(J(\varphi)=\mathbb{E}[U(X_T^{\pi_\varphi,\theta})]\) with a deployable \(\theta\)-blind policy.\\
  \textit{Loop.} Sample \((t_0,x_0,y_0)\sim\nu\) and latent \(\theta\sim q\) inside the simulator; simulate Euler rollouts; compute \(U(X_T)\);
  BPTT for \(\nabla_\varphi U(X_T)\); update \(\varphi\) (e.g.\ Adam).\\
  \textit{Output.} Warm-up checkpoint \(\varphi^{\mathrm{warm}}\).
};

\node[box, below=24pt of s1] (s2) {%
  \textbf{Stage~2: $q$-aggregated Pontryagin projection under latent \(\theta\) (Section~\ref{subsec:ppgdpo-uncertainty})}\\[2pt]
  \textit{Input.} \(\pi_{\varphi^{\mathrm{warm}}}\), working-domain sampler \(\mu\) on \(\mathcal{D}\), budgets \((M_{\mathrm{MC}},M_\theta)\).\\
  \textit{Adjoint blocks.} Simulate under \(\pi_{\varphi^{\mathrm{warm}}}\) and compute \(\widehat p_t^\theta(z),\widehat p_{x,t}^\theta(z),\widehat p_{y,t}^\theta(z)\) by autodiff/BPTT.\\
  \textit{Projection.} Aggregate \(\widehat A_t\) and \(\widehat g_t^{\mathrm{mix}}\) via \eqref{eq:agg-A}--\eqref{eq:agg-G}; output
  \(\widehat\pi^{\mathrm{agg,mix}}(z)=-(\widehat A_t(z))^{-1}\widehat g_t^{\mathrm{mix}}(z)\) \eqref{eq:pi-agg-mix}.\\
  \textit{Stabilizers.} Use ridge/diagnostics/residual form as needed (Appendix~\ref{app:impl-stabilizers}).\\
  \textit{Output.} Accurate but Monte-Carlo intensive projected rule.
};

\node[box, below=24pt of s2] (s3) {%
  \textbf{Coupling (Appendix~\ref{app:coupling})}\\[2pt]
  \textbf{(\ref{subsubsec:cv-projection}) Residual projection.}\\
  Compute \(\widehat r_{\mathrm{FOC}}(z)=\widehat A_t(z)\pi_{\varphi^{\mathrm{warm}}}(z)+\widehat g_t^{\mathrm{mix}}(z)\) and evaluate
  \(\widehat\pi^{\mathrm{agg,mix}}(z)=\pi_{\varphi^{\mathrm{warm}}}(z)-\widehat A_t(z)^{-1}\widehat r_{\mathrm{FOC}}(z)\) \eqref{eq:pi-agg-residual-form}.\\[4pt]
  \textbf{(\ref{subsubsec:interactive-distillation}) Interactive distillation (amortized projection).}\\
  Freeze a lagged copy \(\varphi^-\), compute a teacher via stage~2, and train the student with the mixed objective \eqref{eq:mixed-objective-with-teacher-narr}.\\
  Refresh teacher on a slower timescale and optionally gate teacher loss by projection diagnostics.
};

\draw[arrow] (s1.south) -- (s2.north) node[midway, right, align=left] {warm-up\\\(\varphi^{\mathrm{warm}}\)};
\draw[arrow] (s2.south) -- (s3.north) node[midway, right, align=left] {projection map\\and teacher};

\draw[arrow] ([xshift=-0.2cm]s3.west) .. controls +(-1.0,0.0) and +(-1.0,0.0) .. ([xshift=-0.2cm]s1.west)
node[midway, left, align=left] {hybrid training\\(refresh teacher)};

\end{tikzpicture}%
}
\caption{Pipeline of Section~\ref{sec:pgdpo-uncertainty}: stage~1 learning, stage~2 $q$-aggregated projection, and the coupling mechanisms in Appendix~\ref{app:coupling}. Distillation stabilizes training and amortizes projection by compressing projected controls into the policy network.}
\label{fig:sec3-pipeline-sideways}
\end{sidewaysfigure*}

\subsection{Control-variate (residual) form of the projected rule}
\label{subsubsec:cv-projection}

Recall the mixed-moment projected rule \eqref{eq:pi-agg-mix}.
In high dimensions, Monte Carlo noise in the projection inputs can be non-negligible, and solving a linear system with \(\widehat A_t\) can amplify this noise.
A convenient stabilization is to compute the \emph{same} projected rule in a residual (control-variate) form around the warm-up policy \(\pi_{\varphi^{\mathrm{warm}}}\).

Define the \(\theta\)-conditional residual (under frozen-\(\theta\) simulations)
\begin{equation}
  \widehat r_{\mathrm{FOC}}^\theta(t,x,y)
  :=
  \widehat A_t^\theta(t,x,y)\,\pi_{\varphi^{\mathrm{warm}}}(t,x,y)
  + \widehat g_t^\theta(t,x,y),
  \label{eq:r-foc-theta-recall-33}
\end{equation}
and the aggregated residual (the quantity we actually solve against)
\begin{equation}
  \widehat r_{\mathrm{FOC}}(t,x,y)
  :=
  \widehat A_t(t,x,y)\,\pi_{\varphi^{\mathrm{warm}}}(t,x,y) + \widehat g_t^{\mathrm{mix}}(t,x,y).
  \label{eq:r-foc-agg-33}
\end{equation}
Whenever \(\widehat A_t(t,x,y)\) is (numerically) invertible, the projected rule admits the identity
\begin{equation}
  \widehat\pi^{\mathrm{agg,mix}}(t,x,y)
  =
  \pi_{\varphi^{\mathrm{warm}}}(t,x,y)
  - \widehat A_t(t,x,y)^{-1}\,\widehat r_{\mathrm{FOC}}(t,x,y),
  \label{eq:pi-agg-residual-form}
\end{equation}
which is algebraically equivalent to \eqref{eq:pi-agg-mix} (hence it does not change the target).

\medskip
\noindent\textbf{Practical value (variance reduction and stability).}
When the warm-up policy is already close to a projected fixed point on the working domain, the residual \(\widehat r_{\mathrm{FOC}}\) tends to be small.
Moreover, the ingredients entering \(\widehat A_t\pi_{\varphi^{\mathrm{warm}}}\) and \(\widehat g_t^{\mathrm{mix}}\) are computed from the same Monte Carlo pool
and often partially cancel, improving concentration of \(\widehat r_{\mathrm{FOC}}\).
In implementation, we combine \eqref{eq:pi-agg-residual-form} with the inversion stabilizers of Appendix~\ref{app:impl-stabilizers}
(e.g.\ ridge regularization and diagnostic-based fallback to \(\pi_{\varphi^{\mathrm{warm}}}\) on unreliable states).

\subsection{Interactive distillation: projection-guided training and amortized deployment}
\label{subsubsec:interactive-distillation}

Let \(\pi_\varphi\) be the trainable stage~1 policy network.
At intermittent refresh times, freeze a lagged copy \(\pi_{\varphi^-}\) and run stage~2 under \(\pi_{\varphi^-}\) to construct a \(q\)-aggregated projected teacher.
This coupling serves two purposes:
\begin{itemize}
  \item \textbf{Training aid.} Projection-guided targets can stabilize and accelerate stage~1 optimization once the warm-up policy enters a locally stable regime.
  \item \textbf{Amortization.} Stage~2 projection is accurate but Monte-Carlo intensive; distillation compresses it into a single forward pass for deployment.
\end{itemize}

\paragraph{Teacher definition (computed by stage~2; treated as fixed).}
In residual form \eqref{eq:pi-agg-residual-form}, define the teacher as the \(\theta\)-blind map
\begin{equation}
  \widehat\pi^{\mathrm{agg,mix}}_{\varphi^-}(t,x,y)
  :=
  \pi_{\varphi^-}(t,x,y)
  -
  \big(\widehat A_t^{\varphi^-}(t,x,y)\big)^{-1}\,
  \widehat r_{\mathrm{FOC}}^{\varphi^-}(t,x,y),
  \label{eq:teacher-pi-from-33-1-narr}
\end{equation}
where \(\widehat A_t^{\varphi^-}\) and \(\widehat g_t^{\mathrm{mix},\varphi^-}\) denote the stage~2 estimators computed under the lagged policy \(\pi_{\varphi^-}\),
and \(\widehat r_{\mathrm{FOC}}^{\varphi^-}:=\widehat A_t^{\varphi^-}\pi_{\varphi^-}+\widehat g_t^{\mathrm{mix},\varphi^-}\).
(We omit stabilizer notation; in code, \((\widehat A_t^{\varphi^-})^{-1}\) is implemented with ridge/diagnostics/skip as needed.)

\paragraph{Student objective (hybrid DPO + distillation penalty).}
Train \(\pi_\varphi\) by combining the ex--ante objective with a proximity penalty to the teacher on the working domain:
\begin{equation}
  \max_{\varphi}\;
  J(\varphi)
  \;-\;
  \lambda\,
  \mathbb{E}_{z=(t,x,y)\sim\mu}\Big[
    \big\|\pi_\varphi(z) - \mathrm{stopgrad}\big(\widehat\pi^{\mathrm{agg,mix}}_{\varphi^-}(z)\big)\big\|^2
  \Big],
  \label{eq:mixed-objective-with-teacher-narr}
\end{equation}
where \(\lambda\ge 0\) controls guidance strength and \(\mathrm{stopgrad}(\cdot)\) indicates that gradients are not propagated through stage~2.

\paragraph{Refresh schedule, annealing, and diagnostic gating.}
In practice, \(\varphi^-\) and the teacher are refreshed on a slower timescale than stage~1 updates:
\begin{itemize}
  \item \textbf{Two-time-scale refresh.}
  Hold \(\varphi^-\) fixed for \(n_{\mathrm{refresh}}\) stage~1 steps, then set \(\varphi^-\leftarrow\varphi\) and recompute the teacher.

  \item \textbf{Anneal \(\lambda\).}
  Start with \(\lambda=0\) (pure DPO) and increase \(\lambda\) only after projection diagnostics on the working domain stabilize.

  \item \textbf{Adaptive teacher selection (gating).}
  Apply teacher loss only on states where projection diagnostics certify reliability (stable \(\widehat A_t\) solve and moderate residual),
  falling back to pure DPO elsewhere.
\end{itemize}

\medskip
\noindent\textbf{Deployment.}
After training, the deployed controller is the student network \(\pi_\varphi\), which approximates the stage~2 projected rule with a single forward pass.

\clearpage


\clearpage
\section{Stage~2 projection diagnostics}
\label{app:stage2-diagnostics}

We report Stage~2 diagnostic statistics as a visual supplement to
Section~\ref{subsec:crra_high_dim_results}. Each figure uses the same
tail-median protocol and layout; see captions for definitions.

\providecommand{\DiagCaptionCommon}{%
All panels report tail medians over epochs 9500--10000 (final six evaluation snapshots).
Layout matches Figure~\ref{fig:crra_scaling_rmse}: rows correspond to $s\in\{10^{-3},10^{-2},10^{-1}\}$ and
columns correspond to aligned vs.\ misaligned uncertainty. Solid vs.\ dashed lines are MC base ($100\cdot d$)
vs.\ high ($400\cdot d$).}

\begin{center}
  \includegraphics[width=\textwidth,height=0.80\textheight,keepaspectratio]{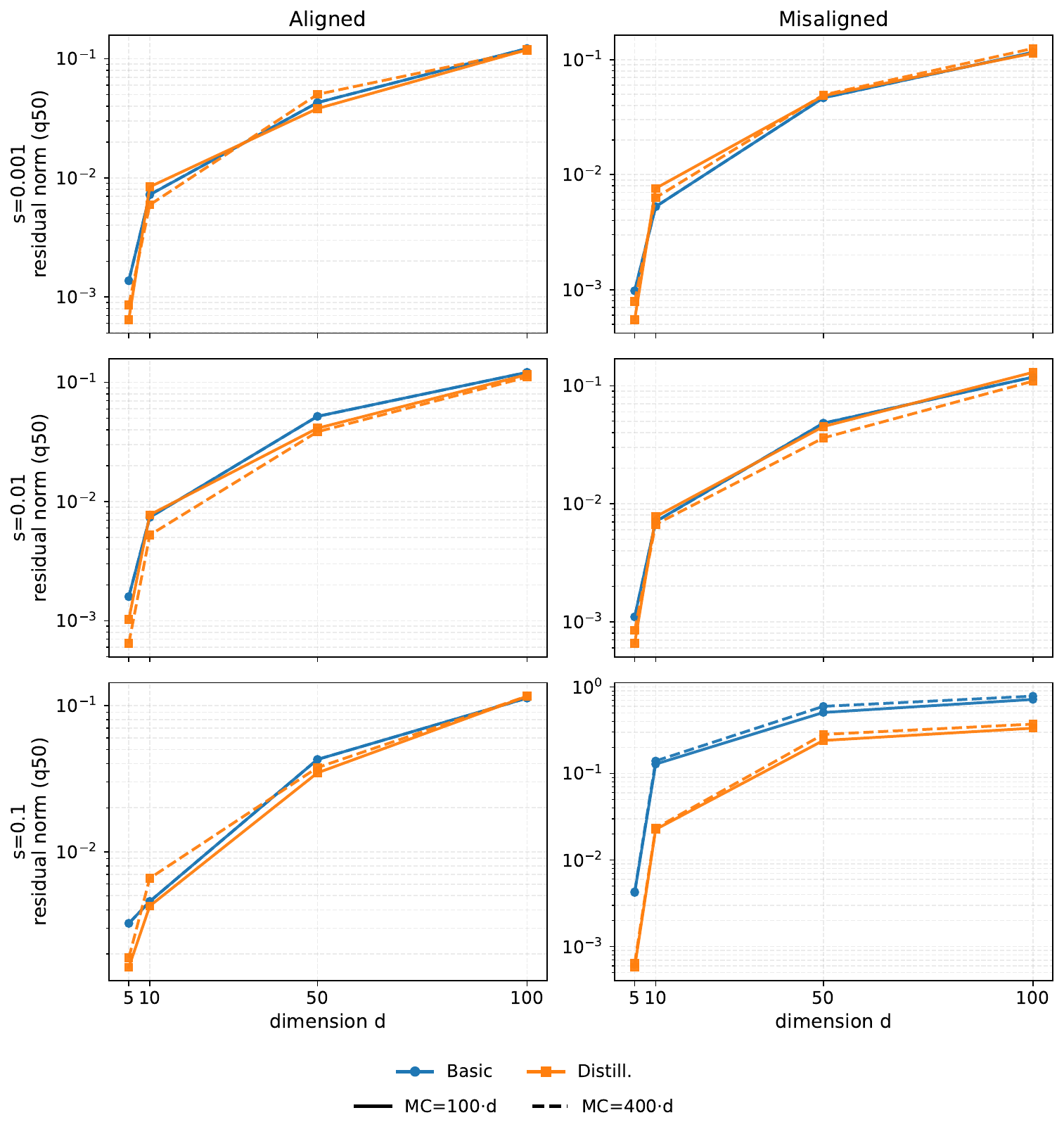}
  \captionof{figure}{\textbf{Stage~2 stationarity residual (q50).} \DiagCaptionCommon\ %
  We plot the median (q50) of the estimated Hamiltonian first-order condition residual norm at the query states.
  Larger residual indicates the warm policy is farther from stationarity, implying a larger correction is required in
  the residual-form projection. Growth of this residual with $d$ (especially under misalignment) supports the mechanism
  that projection becomes more sensitive in high dimension due to larger correction magnitudes and amplified mixed-moment noise.}
  \label{fig:app_stage2_residual_q50}
\end{center}

\clearpage
\begin{center}
  \includegraphics[width=\textwidth,height=0.80\textheight,keepaspectratio]{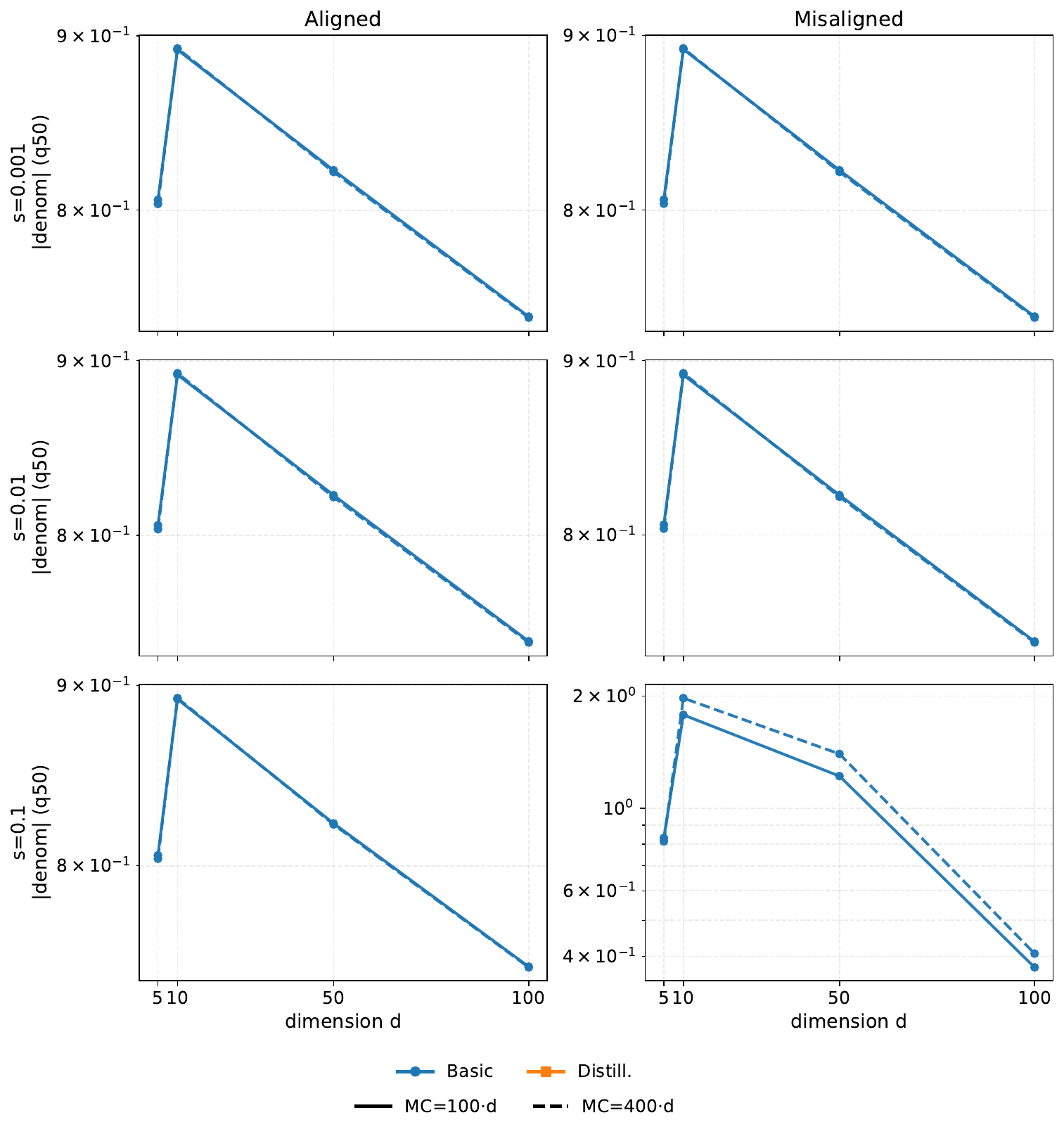}
  \captionof{figure}{\textbf{Stage~2 denominator magnitude (q50).} \DiagCaptionCommon\ %
  We plot a typical (q50) magnitude of the projection denominator/curvature term used in the residual-form update.
  Values bounded away from zero indicate that projection is not operating in a near-singular regime at typical quantiles.
  This helps rule out ``catastrophic inversion'' as the primary driver of degradation; instead, residual growth and
  curvature mismatch (Fig.~\ref{fig:app_stage2_kappa_med}) provide a more consistent explanation in misaligned/high-$d$ regimes.}
  \label{fig:app_stage2_denom_q50}
\end{center}

\clearpage
\begin{center}
  \includegraphics[width=\textwidth,height=0.80\textheight,keepaspectratio]{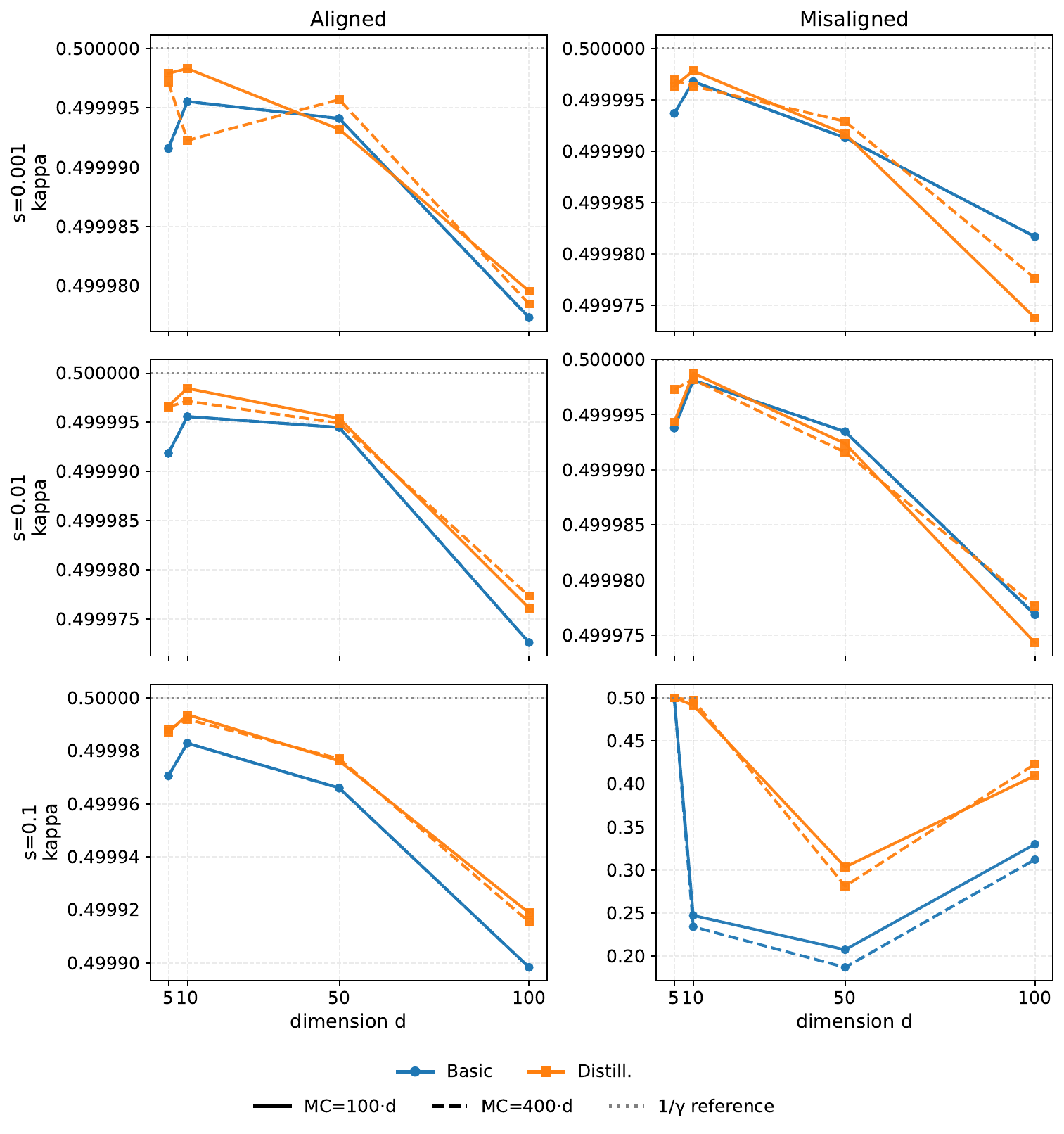}
  \captionof{figure}{\textbf{Stage~2 curvature-consistency statistic $\kappa$.} \DiagCaptionCommon\ %
  We report the stabilized median-after-floor statistic $\kappa$ and compare it to the nominal reference $1/\gamma$
  (horizontal dotted line). For CRRA, costate ratios imply a characteristic curvature scale; sustained deviations of
  $\kappa$ from $1/\gamma$ indicate costate inconsistency and/or bias in mixed-moment estimation, and are most visible
  in the hardest misaligned/high-uncertainty regime.}
  \label{fig:app_stage2_kappa_med}
\end{center}

\clearpage
\begin{center}
  \includegraphics[width=\textwidth,height=0.80\textheight,keepaspectratio]{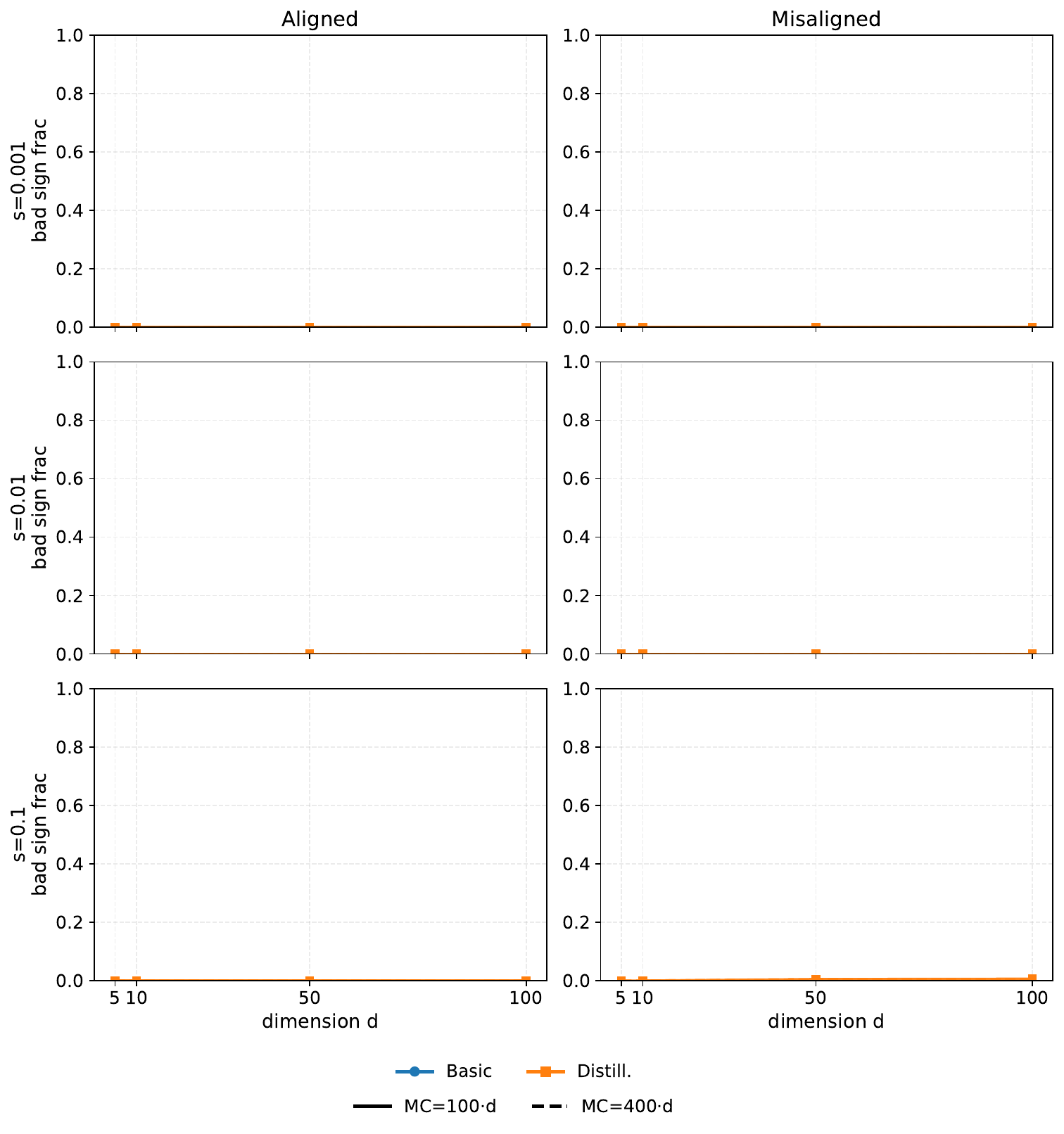}
  \captionof{figure}{\textbf{Stage~2 bad-sign fraction.} \DiagCaptionCommon\ %
  We plot the fraction of samples in which the estimated curvature/denominator violates the expected sign condition
  (loss of local concavity on the sampled batch). Near-zero bad-sign fractions across most regimes suggest that the
  projection typically operates in a locally well-behaved region and that failures are not dominated by sign flips,
  supporting the main-text conclusion that misalignment primarily increases residual/costate mismatch rather than inducing
  widespread concavity violations.}
  \label{fig:app_stage2_bad_sign_frac}
\end{center}

\clearpage


\clearpage
\section{Supplementary decomposition diagnostics for Section~\ref{sec:hedging-recovery}}
\label{app:hedging_decomp}

\noindent Tables~\ref{tab:ou_myo_rmse_s0_sweep_landscape}--\ref{tab:ou_hedge_cos_s0_sweep_landscape}
report Stage~2 decomposition diagnostics at $t=0$ (tail medians over the last six evaluation checkpoints).
Diagnostics are reported for the projected (Stage~2 / PG--DPO) rules, since Stage~1 (DPO) and PPO do not expose a compatible myopic/hedging split under our protocol.

\vspace{6pt} 

\begingroup
\sisetup{detect-weight=true,detect-family=true}
\providecommand{\best}[1]{{\bfseries\num{#1}}}

\small
\setlength{\tabcolsep}{4pt}
\renewcommand{\arraystretch}{1.05}

\begin{center}
\begin{tabular}{c l cccc}
\toprule
$s_0$ & Method & $d=5$ & $10$ & $50$ & $100$ \\
\midrule
\multicolumn{6}{l}{\textbf{Aligned $P_0$}}\\
\midrule
$10^{-3}$ & PG--DPO (basic)    & \num{7.17e-06} & \num{1.51e-05} & \num{1.60e-05} & \num{1.23e-05} \\
$10^{-3}$ & PG--DPO (distill.) & \num{5.20e-06} & \num{9.81e-06} & \num{1.64e-05} & \num{1.62e-05} \\
\midrule
$10^{-2}$ & PG--DPO (basic)    & \num{6.21e-06} & \num{1.38e-05} & \num{1.63e-05} & \num{1.42e-05} \\
$10^{-2}$ & PG--DPO (distill.) & \num{7.13e-06} & \num{7.13e-06} & \num{1.62e-05} & \num{1.76e-05} \\
\midrule
$10^{-1}$ & PG--DPO (basic)    & \num{8.18e-06} & \num{1.41e-05} & \num{3.82e-05} & \num{2.42e-05} \\
$10^{-1}$ & PG--DPO (distill.) & \num{6.72e-06} & \num{9.21e-06} & \num{3.67e-05} & \num{3.16e-05} \\
\midrule
\multicolumn{6}{l}{\textbf{Misaligned $P_0$}}\\
\midrule
$10^{-3}$ & PG--DPO (basic)    & \num{1.10e-05} & \num{1.41e-05} & \num{1.70e-05} & \num{1.18e-05} \\
$10^{-3}$ & PG--DPO (distill.) & \num{7.71e-06} & \num{5.94e-06} & \num{1.84e-05} & \num{1.62e-05} \\
\midrule
$10^{-2}$ & PG--DPO (basic)    & \num{7.90e-06} & \num{2.02e-05} & \num{1.46e-05} & \num{1.20e-05} \\
$10^{-2}$ & PG--DPO (distill.) & \num{6.00e-06} & \num{1.71e-05} & \num{2.21e-05} & \num{1.43e-05} \\
\midrule
$10^{-1}$ & PG--DPO (basic)    & \num{1.24e-05} & \num{1.93e-04} & \num{5.77e-05} & \num{3.14e-05} \\
$10^{-1}$ & PG--DPO (distill.) & \num{1.20e-05} & \num{1.90e-04} & \num{7.40e-05} & \num{2.47e-05} \\
\bottomrule
\end{tabular}

\captionof{table}{Myopic-component RMSE at $t=0$ for the projected (Stage~2 / PG--DPO) rules (tail medians).}
\label{tab:ou_myo_rmse_s0_sweep_landscape}
\end{center}

\vspace{10pt} 

\sisetup{scientific-notation=false}

\begin{center}
\begin{tabular}{c l cccc}
\toprule
$s_0$ & Method & $d=5$ & $10$ & $50$ & $100$ \\
\midrule
\multicolumn{6}{l}{\textbf{Aligned $P_0$}}\\
\midrule
$10^{-3}$ & PG--DPO (basic)    & \num{0.994} & \num{0.988} & \num{0.991} & \num{0.990} \\
$10^{-3}$ & PG--DPO (distill.) & \num{0.995} & \num{0.986} & \num{0.990} & \num{0.987} \\
\midrule
$10^{-2}$ & PG--DPO (basic)    & \num{0.993} & \num{0.989} & \num{0.992} & \num{0.988} \\
$10^{-2}$ & PG--DPO (distill.) & \num{0.992} & \num{0.994} & \num{0.990} & \num{0.987} \\
\midrule
$10^{-1}$ & PG--DPO (basic)    & \num{0.988} & \num{0.990} & \num{0.936} & \num{0.932} \\
$10^{-1}$ & PG--DPO (distill.) & \num{0.996} & \num{0.990} & \num{0.949} & \num{0.922} \\
\midrule
\multicolumn{6}{l}{\textbf{Misaligned $P_0$}}\\
\midrule
$10^{-3}$ & PG--DPO (basic)    & \num{0.988} & \num{0.988} & \num{0.993} & \num{0.990} \\
$10^{-3}$ & PG--DPO (distill.) & \num{0.994} & \num{0.995} & \num{0.990} & \num{0.987} \\
\midrule
$10^{-2}$ & PG--DPO (basic)    & \num{0.994} & \num{0.976} & \num{0.992} & \num{0.988} \\
$10^{-2}$ & PG--DPO (distill.) & \num{0.994} & \num{0.980} & \num{0.988} & \num{0.989} \\
\midrule
$10^{-1}$ & PG--DPO (basic)    & \num{0.990} & \num{0.005} & \num{0.668} & \num{0.851} \\
$10^{-1}$ & PG--DPO (distill.) & \num{0.992} & \num{-0.009} & \num{0.642} & \num{0.871} \\
\bottomrule
\end{tabular}

\captionof{table}{Hedging-direction cosine similarity at $t=0$ for the projected (Stage~2 / PG--DPO) rules (tail medians). Higher is better; negative indicates direction reversal.}
\label{tab:ou_hedge_cos_s0_sweep_landscape}
\end{center}

\endgroup
\clearpage

\end{document}